\ifdefined\TITWRAPPER
\else
\documentclass[11pt,a4paper]{article}

\usepackage[T1]{fontenc}
\usepackage[utf8]{inputenc}
\usepackage{lmodern}
\input{glyphtounicode}
\usepackage{amsmath,amssymb,amsthm,mathtools}
\usepackage{booktabs,array,tabularx}
\usepackage{enumitem}
\usepackage[protrusion=false]{microtype}
\usepackage[a4paper,left=25.5mm,right=25.5mm,top=1.05in,bottom=1.05in,includefoot]{geometry}
\usepackage[hidelinks]{hyperref}

\allowdisplaybreaks
\hypersetup{
  pdftitle={Systematic Codes Correcting Two Deletions with Three Redundant Symbols over Large Alphabets},
  pdfauthor={Valery (Binyamin) Grishin; Aryeh Lev Zabokritskiy (Yohananov)},
  pdfsubject={Systematic deletion codes with deletion-prone trailers},
  pdflang={en-US},
  pdfkeywords={deletion codes, systematic codes, nonbinary codes,
               synchronization errors}
}

\newtheorem{theorem}{Theorem}[section]
\newtheorem{lemma}[theorem]{Lemma}
\newtheorem{proposition}[theorem]{Proposition}

\theoremstyle{definition}
\newtheorem{definition}[theorem]{Definition}

\theoremstyle{remark}
\newtheorem{remark}[theorem]{Remark}

\newcommand{\C}{\mathcal C}

\newcommand{\F}{\mathbb F}

\newcommand{\concat}{\mathbin\Vert}

\title{Systematic Codes Correcting Two Deletions\\
with Three Redundant Symbols over Large Alphabets}
\author{\begin{minipage}{\dimexpr\textwidth-2\tabcolsep\relax}\centering
\normalsize Valery (Binyamin) Grishin\\[0.4em]
\small Department of Computer Science, Tel-Hai University of Kiryat
Shmona and the Galilee, Kiryat Shmona, Israel\\
\texttt{valerig.tech@gmail.com}\\[1em]
\normalsize Aryeh Lev Zabokritskiy (Yohananov)\\[0.4em]
\small Department of Computer Science, MIGAL--Galilee Research Institute,
Kiryat Shmona, Israel; and Department of Computer Science, Tel-Hai
University of Kiryat Shmona and the Galilee, Kiryat Shmona, Israel\\
\texttt{yuhanalev@telhai.ac.il}\\[0.4em]
Corresponding author. ORCID: 0000-0003-3151-6192.
\end{minipage}}
\date{}
\begin{document}

\maketitle

\begin{abstract}
A literal-systematic \(q\)-ary code leaves its \(n\) information symbols
unchanged and appends deletion-prone checks.  For \(n\ge2\), a known lower
bound requires three such checks to correct two arbitrary deletions.  We give
a uniform construction meeting this minimum over a
power-of-two alphabet \(q<2^{32}n^{10}\), with polynomial-time encoding and
decoding.  It combines boundary authentication, symmetric field checks, and
a pointwise computable coloring of a sparse conflict graph.  For one adjacent
burst of length at most two, we give a closed-form construction with three
check symbols over
\(q_{\mathrm{adj}}<(n+2)^4\).  Alphabet size and
fixed-alphabet bit redundancy remain separate objectives.
\end{abstract}
\fi

\section{Introduction and main results}
\label{sec:introduction}

A deletion removes a symbol and closes the gap, so the decoder sees the
surviving subsequence but not the deleted position.  We study nonbinary words
and first measure redundancy in full alphabet symbols.  This is the natural
metric when the alphabet may grow with the message length, but it is distinct
from bit redundancy over a fixed alphabet: one \(q\)-ary symbol carries
\(\log_2 q\) bits.

Our codes are \emph{literal-systematic}.  The encoder transmits the
\(n\) information symbols unchanged as a prefix and appends \(r\) trailer
symbols; data and trailers pass through the same deletion channel.  We also
require a uniform family: fixed deterministic procedures construct the public
data for each \(n\) and evaluate an individual codeword without an advice table.
The central question is whether two arbitrary deletions can be corrected with
the minimum number of exposed trailers by such a pointwise encoder over an
alphabet of polynomial size.

Two benchmarks isolate the question.  For \(n\ge2\), the nonlinear Singleton
bound of Liu and Xing~\cite{LiuXing2023} implies that any code of cardinality
\(q^n\) correcting two deletions has length at least \(n+3\), regardless of
systematicity.  In the other direction, Haxell's independent-transversal
theorem~\cite{Haxell2001} gives a literal-systematic code with three trailers
over an alphabet of order \(O(n^4)\).  That existence argument selects one
suffix for each information word globally.  Our goal is a succinct pointwise
evaluator.

Our main result supplies such a map over a larger polynomial alphabet.

\begin{theorem}[Uniform three-trailer construction]
\label{thm:main}
For every \(n\ge2\), there is a power of two \(q<2^{32}n^{10}\) and a
uniform literal-systematic encoder
\[
  E_n:\F_q^n\longrightarrow\F_q^{n+3}
\]
whose image corrects up to two arbitrary deletions.  Encoding and decoding
take time polynomial in \(n\) and \(\log q\)
for this family.
\end{theorem}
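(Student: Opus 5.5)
Our plan is to build \(E_n(x)=x\concat(t_1,t_2,t_3)\), with each trailer symbol a function of \(x\), following the three ingredients named in the abstract. After two deletions the decoder receives a word \(y\) of length \(n+1\), and it does not know how the two deletions are divided between the data block and the trailer. We organize the argument by the number \(d\in\{0,1,2\}\) of deletions that hit the trailer. Each trailer symbol will carry a short authentication tag: a few bits of a hash of \(x\) and of the symbol's own index. This is the \emph{boundary authentication}. Its purpose is to let the decoder test the hypothesis ``the last \(3-d\) symbols of \(y\) are the surviving trailers'', for each \(d\), and reject every wrong hypothesis. It can do this because the tags are position-indexed and lie in a part of \(\F_q\) disjoint from the field checks. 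If \(d=2\), the data are received intact and decoding is immediate. The substantive cases are \(d=0\), where two deletions hit the data and all three checks survive, and \(d=1\), where one deletion hits the data and two checks survive.

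For the field part we take \(q=2^m\) with \(m\) roughly \(\log_2(2^{32}n^{10})\). We then split each trailer symbol into a field component in a subfield or subspace of size polynomial in \(n\) and a color component. The field components are \emph{symmetric field checks}, namely power sums such as \(\sum_i \alpha_i x_i\) and \(\sum_i \alpha_i^2 x_i\) for distinct evaluation points \(\alpha_i\) (a Reed--Solomon/VT-style syndrome). We want them to be symmetric across the three trailer slots, so that any two surviving checks determine the same information. The case \(d=1\) then reduces to one deletion in the data with two checks known. Here the standard weighted-sum argument locates the deletion up to a run, and the symbol value follows from a plain sum. In characteristic two the linear system has a nonvanishing Vandermonde-type determinant.

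The case \(d=0\), with all three checks present, is the crux. Two data deletions are not in general resolved by two or three field syndromes over a polynomial alphabet: residual ambiguity remains between distinct \(x,x'\) that share a length-\((n-2)\) subsequence and agree on all syndromes. We handle this with the color component. We define the conflict graph \(G\) on \(\F_q^n\), joining \(x\) and \(x'\) when their codewords have a common \((n+1)\)-subsequence under some split, after the authentication and syndrome filters. We will show that \(G\) is sparse: every vertex has degree at most \(\mathrm{poly}(n)\), say \(O(n^{4})\) or \(O(n^6)\). The argument is that a neighbor \(x'\) is determined by the two deletion positions in \(x\), the two insertion positions, and the inserted values, and the syndrome equations pin down the inserted values. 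The color component must be a proper coloring of \(G\), and it must be \emph{pointwise computable}, meaning computable from \(x\) alone without a global table.

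For the coloring we would first try an explicit distinguisher. The color is the tuple of evaluations \(h_j(x)=\sum_i \beta_j^{i} x_i\) (mod small structure) for a polynomial-size fixed family \(\{\beta_j\}\). We then prove that any two adjacent \(x,x'\) differ on at least one \(h_j\): \(x-x'\) has a specific sparse-and-shifted structure, and a nonzero low-degree polynomial cannot vanish at all the \(\beta_j\). The bit budget, consisting of authentication tags, the syndrome components, and \(\log\) of the number of colors, is what forces \(q<2^{32}n^{10}\), and we would tune the constants at the end. Decoding enumerates the \(\mathrm{poly}(n)\) split hypotheses and candidate reconstructions, then checks them against the tags, the syndromes, and the color. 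This gives polynomial time in \(n\) and \(\log q\). We expect the main obstacle to be proving that this pointwise coloring separates every conflicting pair. That requires a complete case analysis of how the deletion and insertion patterns interact with the symmetric checks, including the degenerate cases where deleted symbols sit inside runs or near the data/trailer boundary.
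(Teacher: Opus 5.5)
Your case split by the number of trailer deletions matches the paper, and so does your candidate-enumeration decoder. The mechanisms you place inside that outline, however, fail in two places.

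\textbf{Boundary authentication.} A hash-based tag cannot authenticate the data--trailer boundary with zero error. The code is literal-systematic over all of $\F_q^n$, so data symbols range over the whole alphabet, and no ``part of $\F_q$'' can be reserved for trailers. Compare two patterns:
\begin{itemize}
\item $x$ loses $x_d$ and one trailer, leaving trailers $t_i(x),t_k(x)$ with $i<k$;
\item the message $y=(x_1,\ldots,x_{d-1},x_{d+1},\ldots,x_n,t_i(x))$ loses two trailers.
\end{itemize}
The received words coincide as soon as $t_k(y)=t_k(x)$. Your index bits are consistent with this, and the data symbol $y_n=t_i(x)$ is unconstrained. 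Excluding the collision therefore rests on a few hash bits, which give no worst-case guarantee.

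The missing idea is a deterministic tag chosen \emph{absent} from the data. Project each symbol to a tag set of size larger than $n$. Let $\Gamma(x)$ be the least tag not occurring among the projections of $x_1,\ldots,x_n$, and write it into all three trailers. Any unequal split then gives two matches. A trailer--trailer match forces $\Gamma(x)=\Gamma(y)$. A match between trailer symbols of $x$ and data symbols of $y$ places $\Gamma(y)$ among the data tags of $y$. This contradiction holds for every pair, not merely with high probability (Theorem~\ref{thm:absent-label-boundary}).

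\textbf{Position separation.} Linear field syndromes cannot separate positions. Put $P_x(X)=\sum_i x_iX^i$. Inserting the same value $a$ into a child $w$ at slots $d<d'$ gives
$P_x-P_{x'}=(X+1)\sum_{j=d}^{d'-1}(w_j+a)X^j$, and the coefficients $w_j+a$ are arbitrary. This has three consequences.
\begin{itemize}
\item With $c$ weighted sums $\sum_i\gamma_ix_i$, a gap $d'-d=c+1$ admits a nonzero choice of the $w_j+a$ that kills all of them, with the plain sum automatically equal. This refutes your $d=1$ locator; the Vandermonde determinant only solves for values once the positions are known.
\item The same polynomial can vanish at any prescribed set of fewer than $d'-d$ points. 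A fixed family of evaluations $h_j$ would therefore need about $n$ field symbols of color.
\item More globally, apart from the hash bits, all your trailer content $L$ is $\F_2$-linear of rank at most $3\log_2q$. For the cyclic shift $\sigma$, the set $\{x:L(x)=L(\sigma x)\}$ has dimension at least $(n-3)\log_2q$. For $n\ge5$ it contains a non-constant $x$ that agrees on every linear component with its one-deletion-confusable shift, in the spirit of Proposition~\ref{prop:linear-obstruction}.
\end{itemize}

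The paper proceeds differently at each step:
\begin{itemize}
\item It recovers deleted \emph{values} with the symmetric functions $S$ and $V$, independently of positions.
\item It locates a single deletion with the order-based Tenengolts checksum (ascent word plus binary VT).
\item It separates the remaining position conflicts with a nonlinear, neighbor-dependent color. This is Linial/Erd\H{o}s--Frankl--F\"uredi reduction: each vertex takes the first field point where its name-polynomial differs from every neighbor's polynomial. The reduction is run twice, using the neighborhood oracles of Lemma~\ref{lem:three-residual-degrees}, and gives at most $64D^2+32Dn\log_2q$ colors.
\end{itemize}
That palette bound, together with the asymmetric packing $w_1=(S_{\mathrm{head}},V_{\mathrm{head}})$, $w_2=(S_{\mathrm{tail}},\ell_0)$, $w_3=(V_{\mathrm{tail}},\ell_1)$, is what closes the budget at $q=2^{10u+22}<2^{32}n^{10}$. ``Tuning constants'' cannot produce that bound without both ingredients.
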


The theorem meets the three-symbol lower bound and therefore determines the
optimal number of exposed full-alphabet trailers.  Its code is necessarily
nonlinear from \(n=3\) onward: the half-Singleton bound
\cite{BeelenEtAl2026} excludes an \([n+3,n]_q\) linear code, including a
Reed--Solomon code, with the required deletion distance.

The construction separates boundary, value, and position ambiguity.  A
projected tag absent from the data word authenticates the unknown
data--trailer boundary.  Symmetric field checks recover the deleted values, a nonbinary
Tenengolts checksum resolves the basic one-deletion location ambiguity, and a
proper coloring of a sparse residual conflict graph separates the positions
that remain.  The checks are distributed among the three exposed trailers
so that every deletion pattern leaves enough information for recovery.

Adjacent deletions form a structured counterpart to the arbitrary channel.
When the two missing symbols are consecutive, parity checks on the odd and
even coordinate classes, together with a restriction on the final label,
resolve the boundary ambiguity directly.  This yields a considerably
smaller alphabet and a closed formula.

\begin{theorem}[Closed-form adjacent-burst construction]
\label{thm:adjacent-closed-form}
For every \(n\ge1\), there is a power of two
\(q_{\mathrm{adj}}<(n+2)^4\) and a uniform
literal-systematic map
\[
  E_n^{\mathrm{adj}}:\F_{q_{\mathrm{adj}}}^n
  \longrightarrow\F_{q_{\mathrm{adj}}}^{n+3}
\]
that corrects one adjacent burst of length zero, one, or two.  All three
trailer symbols are deletion-prone, and encoding and decoding are polynomial
time.
\end{theorem}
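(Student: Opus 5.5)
We construct the map explicitly and then run a case analysis over where the burst can fall. Fix \(q_{\mathrm{adj}}=2^m\) as the least power of two that is at least \((n+2)^3\), or a similar cubic threshold; then \(q_{\mathrm{adj}}<2(n+2)^3\le(n+2)^4\). Embed the positions \(\{1,\dots,n+3\}\) injectively into \(\F_{q_{\mathrm{adj}}}\) as distinct labels \(\alpha_i\). For a data word \(x\in\F_q^n\), the encoder appends three trailer symbols \(t_1,t_2,t_3\), each given by a closed formula:
\begin{itemize}
\item \(t_1=\sum_{i\text{ odd}} x_i\), the odd-class parity;
\item \(t_2=\sum_{i\text{ even}} x_i\), the even-class parity;
\item \(t_3\), a weighted check such as \(\sum_i \alpha_i x_i\) or the nonbinary Tenengolts-type checksum, which locates a single deletion (or adjacent pair) once the deleted values are known. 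It is perturbed by a fixed rule so that the final symbol avoids a small forbidden set, namely the ``restriction on the final label'' mentioned in the introduction.
\end{itemize}
Encoding is clearly polynomial time.

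\emph{Step 1 (burst length from the received length).} The decoder sees a word of length \(n+3-\ell\), so \(\ell\in\{0,1,2\}\) is known. The case \(\ell=0\) is trivial. The main difficulty is that the decoder does not know whether the burst touched the trailer, so it does not know where the data ends.

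\emph{Step 2 (boundary ambiguity).} An adjacent burst of length two deletes exactly one odd-indexed and one even-indexed symbol, and a burst of length one deletes one symbol of one class. For each hypothesis on the boundary (burst entirely in the data, straddling the boundary, or inside the trailer), the decoder reparses the received word and recomputes the parities. In the all-data case the two parity defects give the deleted values \(x_{2j-1}\) and \(x_{2j}\) separately. The weighted check \(t_3\) then pins down the position up to runs of equal pairs, and all positions within such a run yield the same word.

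The main obstacle is excluding false hypotheses, i.e.\ showing that a burst hitting the trailer cannot be consistently reinterpreted as a data burst with a different output. This is handled by the final-label restriction: choosing \(t_3\) outside a forbidden set (for example, not equal to the value implied by the parities of a shifted parse) makes the shifted parse fail its checks. The choice of \(t_3\) is controlled by a free parameter ranging over a set of size about \(n^2\)–\(n^3\), which is why the alphabet has cubic size. Bounding the size of this forbidden set by at most \(O(n^2)\) values per data word, with each value computable in polynomial time, is the step I expect to require the most care. It amounts to checking roughly half a dozen parse pairs: a data burst against a straddling burst, a straddling burst against a trailer burst, and so on. For each pair, one must verify that an ambiguity forces one explicit linear equation on \(t_3\).

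\emph{Step 3 (decoding).} The decoder tries all \(O(n)\) hypotheses and parses, accepts the unique one consistent with all three checks and the label restriction, and reinserts the values. This runs in polynomial time and proves Theorem~\ref{thm:adjacent-closed-form}.
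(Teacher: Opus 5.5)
Your architecture matches the paper's: a balanced parity prefix, one exposed final label carrying a locator, and one-sided avoidance. The gap is the locator for adjacent pairs, which is the real content of the theorem. You assert that $t_3$ ``locates a single deletion (or adjacent pair) once the deleted values are known,'' but neither of your candidates does this. The field-linear check $\sum_i\alpha_i x_i$ is a full field element, so it leaves no room for an avoidance parameter, and you give no argument that it separates insertion positions.

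A single full-word Tenengolts checksum does handle $r=1$, because the parity prefix supplies its equal-sum hypothesis. It fails for $r=2$, even with both rows balanced. Take $n=4$, write field elements by their binary labels, and let $x=(2,6,1,2)$ and $y=(1,6,2,2)$.
\begin{itemize}
\item Then $W(x)=(2,6,1,2,3,4)$ and $W(y)=(1,6,2,2,3,4)$, and both have zero-sum odd and even rows.
\item Deleting positions $1$--$2$ of the first and $2$--$3$ of the second leaves $(1,2,3,4)$ in both.
\item Both have ascent word $10111$, so both have checksum $13\equiv1\pmod 6$.
\end{itemize}
Your $t_3$ therefore need not separate them, and the claim that it fixes the position ``up to runs of equal pairs'' is false.

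The missing ingredient is the paper's three-locator separation. An adjacent pair deletes exactly one symbol from each positional parity row, and the two-place shift keeps every later symbol in its row. The parity prefix makes each row zero-sum, so Tenengolts' separation applied to $W_o(x)$ and $W_e(x)$ separately (moduli $o$ and $e$) locates the pair, while $T_M(W(x))$ handles $r=1$. These checksums must be computed on all of $W(x)$, parity symbols included, because a burst may delete a parity symbol while the label survives. Likewise, the symbol at position $n+1$ must be the XOR of the data class containing position $n+1$ (the paper's $A(x)$). With your ``odd parity first'' rule and $n$ odd, the burst $(x_n,t_1)$ deletes $x_n$ together with the only parity containing it.

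A second problem is packing. One exposed symbol must carry the locator injectively and still leave room for avoidance. ``Perturbing'' a check breaks the implication from equal surviving labels to equal locators. The paper stores $H(x)=\iota(\Lambda(x),j_x)$, where $\iota$ is an injective mixed-radix label on (locator triple, index) and $j_x<R_n$ is the least index whose label avoids $U_x$.

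The boundary step you expect to be hardest is actually the easy one. If $H$ survives only on the $x$-side, equality forces either $H(x)=B(y)$ with $(y,A(y))=\mathsf B_{1,d}(W(x))$, or $H(x)=A(y)$ with $y=\mathsf B_{2,d}(W(x))$. So the forbidden set depends on $x$ alone, contributes at most one value per start, and consists of coordinates of $W(x)$, giving $|U_x|\le M$ (sharpened to $B_n$).

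This linear bound is needed for the alphabet budget. The locator triple already takes $M\,o\,e\approx M^3/4$ values, and $q_{\mathrm{adj}}<2M\,o\,e\,R_n<(n+2)^4$. That locator combined with your $O(n^2)$ avoidance range would overshoot the quartic bound. Your cubic alphabet is also unsupported: it presupposes an adjacent-pair locator with only $O(n)$ values, and the example above shows that the obvious such candidate fails.
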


These results occupy a different parameter regime from codes with low bit
redundancy over a fixed alphabet.  Explicit binary and \(q\)-ary two-deletion
constructions have redundancy logarithmic in \(n\)
\cite{GuruswamiHastad2021,SongCai2023,SimaGabrysBruck2020Systematic,
SimaGabrysBruck2020Qary,LiuTjuawinataXing2024}.  Ye, Sun, Yu, Ge, and Elishco
obtain \(q\)-ary two-deletion codes with
\(5\log_2 n+O(\log_2\log_2 n)\) bits for fixed \(q>2\)
\cite{YeSunYuGeElishco2026}.  Sun and Ge's partitioning method yields an
existential binary two-deletion code with \(6\log_2 n+O(1)\) redundant bits;
their explicit encoder with protected checks treats the related channel with
one deletion and one substitution
\cite{SunGe2026Partitioning}.  En Gad
recently proved the existence of binary
two-deletion codes with redundancy
\(3\log_2 n+O(\log_2\log_2 n)\) by random linear hashing and a conflict-graph
two-coloring~\cite{EnGad2026}.  That result is nonconstructive.  Our encoder
preserves every information symbol literally.  Its three trailer symbols are
exposed to deletion over an alphabet that grows with \(n\).  In bits, our arbitrary-deletion
construction uses at most \(30\log_2 n+96\) redundant bits.  Its optimality
statement concerns the number of full symbols; fixed-alphabet bit redundancy
is a different metric.

The present construction also differs from systematic checksum frameworks in
which the checksum is initially protected from synchronization errors.  Li,
Gabrys, and Farnoud provide an individually computable graph-color checksum
under that side-information model~\cite{LiGabrysFarnoud2025}; here the three
symbols carrying the checksum are part of the deletion-prone word.  Deletion
codes on multisets retain symbol multiplicities while discarding order
\cite{KovacevicTan2018,KreindelEssayagZabokritskiy2026Cyclic,
KreindelEssayagZabokritskiy2026Geometry}, so their geometry omits the
problem of locating the deleted positions addressed here.

For adjacent bursts, the classical starting point is Levenshtein's binary
construction~\cite{Levenshtein1967Adjacent}.  Later nonbinary and
variable-burst constructions use differential or shifted
Varshamov--Tenengolts (VT) checks and other
array representations
\cite{SchoenyWachterZehGabrysYaakobi2017,WangTangSimaGabrysFarnoud2024,
NguyenCaiSiegel2024,SongCaiQuek2025,SunLuZhangGe2025,
WangKongYaakobiDuman2026}.  Our adjacent result combines the parity--VT motif
with a literal prefix and a final label that is itself exposed to deletion;
the accompanying converses quantify both the unrestricted channel and this
specific parity-prefix architecture.

Our contributions are as follows.
\begin{enumerate}[leftmargin=*]
\item For two arbitrary deletions, we give a uniform literal-systematic code
with the optimal three exposed trailers over a polynomial-size alphabet.
\item For one adjacent burst of length at most two, we give a closed-form
literal-systematic construction with the same trailer count and a smaller
polynomial alphabet.
\item For arbitrary codes of size \(q^n\) and length \(n+3\), correction of
one exact adjacent pair requires \(q\ge n-1\) for \(n\ge3\).  For
literal-systematic correction of one adjacent burst of length at most two,
three trailers are already necessary at \(n=2\).
\item Within the architecture with one parity symbol for each of \(t\)
residue classes and one final label, exact length-\(t\) burst correction over
\(q=2^m\) forces
\[
  q\ge n+\left\lfloor\frac nt\right\rfloor+2
  \qquad(t\ge2,\ n\ge2t).
\]
\item At \(t=2,n=4\), the exact power-of-two threshold for correcting bursts
of lengths zero, one, and two in that architecture is \(q=16\).  The base
assignment is certified exhaustively; impossibility below \(16\) and
extension to every larger alphabet of power-of-two size are deductive.
\end{enumerate}

The paper formalizes the model and its benchmarks, develops the three-trailer
construction through boundary, value, and position recovery, and then treats
the adjacent-burst construction and its converses.  A comparison separates
the resource regimes before the conclusion.

\section{Model and full-symbol benchmarks}
\label{sec:model}

We now formalize deletion correction and literal systematicity, then establish
the lower and nonuniform upper benchmarks that isolate uniform pointwise
evaluation as the remaining obstacle.

\subsection{Deletion shadows and literal systematicity}

Let \(\mathcal A\) be an alphabet of cardinality \(q\ge2\).  For a word
\(w\in\mathcal A^M\) and an integer \(0\le k\le M\), write
\[
 \mathsf D_k(w)=
 \bigl\{w_{i_1}\cdots w_{i_{M-k}}:
       1\le i_1<\cdots<i_{M-k}\le M\bigr\}
\]
for the set of its \emph{distinct} exact-\(k\)-deletion descendants.  For a
set \(X\) of words, put
\(\mathsf D_k(X)=\bigcup_{w\in X}\mathsf D_k(w)\).  A code
\(\C\subseteq\mathcal A^M\) corrects exactly \(k\) deletions when
\[
 \mathsf D_k(x)\cap\mathsf D_k(y)=\varnothing
 \qquad(x,y\in\C,\ x\ne y).
\]
It corrects up to \(k\) deletions when the same condition holds at every
deletion count from zero through \(k\).

\begin{lemma}[Exact deletion correction is monotone]
\label{lem:exact-monotone}
If a fixed-length code corrects exactly \(k\) deletions, then it corrects up
to \(k\) deletions.
\end{lemma}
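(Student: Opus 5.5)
The plan is to prove the contrapositive through the standard ``deleting more preserves a collision'' argument. Assume the code \(\C\subseteq\mathcal A^M\) corrects exactly \(k\) deletions. Suppose, for contradiction, that it fails at some count \(j\) with \(0\le j<k\). Then there are distinct \(x,y\in\C\) and a common word \(z\in\mathsf D_j(x)\cap\mathsf D_j(y)\) of length \(M-j\). The case \(j=k\) is the hypothesis itself, so nothing needs proving there.

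The key step is to push this collision down to exactly \(k\) deletions. Since \(M-j\ge M-k\ge0\), the word \(z\) has at least \(k-j\) symbols. Delete any \(k-j\) of them, for concreteness the last \(k-j\), to obtain a word \(z'\) of length \(M-k\).

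Then check transitivity of subsequences. If \(z=x_{i_1}\cdots x_{i_{M-j}}\) with increasing indices, then \(z'\) is obtained by keeping an increasing subfamily of those indices. Hence \(z'\) is a subsequence of \(x\) of length \(M-k\), so \(z'\in\mathsf D_k(x)\). The same indexing argument applied to \(y\) gives \(z'\in\mathsf D_k(y)\). Thus \(\mathsf D_k(x)\cap\mathsf D_k(y)\neq\varnothing\) with \(x\ne y\), contradicting exact-\(k\) correction.

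The only point needing care is that \(\mathsf D_k\) is defined as a set of distinct descendants of a fixed length. A collision is therefore an equality of words, not of index sets, and the argument above respects this because \(z'\) is built from the word \(z\) alone. The fixed-length hypothesis ensures every codeword has the same \(M\), so \(\mathsf D_k(x)\) and \(\mathsf D_k(y)\) consist of words of the same length \(M-k\). I expect no real obstacle; the whole proof is this one-line reduction via \(\mathsf D_{k-j}(\mathsf D_j(w))\subseteq\mathsf D_k(w)\).
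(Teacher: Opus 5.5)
Your proposal is correct and is essentially the paper's own proof: from a common descendant after fewer than \(k\) deletions, delete \(k-j\) further symbols of that common word to get a common exact-\(k\)-deletion descendant, a contradiction. Your explicit check that a subsequence of a subsequence is again a subsequence just spells out the step the paper leaves implicit.
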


\begin{proof}
Suppose distinct codewords had a common descendant after \(a<k\) deletions.
Deleting any further \(k-a\) coordinates from that common word would produce
a common exact-\(k\)-deletion descendant, a contradiction.
\end{proof}

Accordingly, an unqualified statement that a fixed-length code ``corrects
\(k\) deletions'' will mean correction up to \(k\) deletions; by
Lemma~\ref{lem:exact-monotone}, this is equivalent here to correction of
exactly \(k\) deletions.

\begin{definition}[Literal-systematic encoder]
An encoder with \(n\) information symbols and \(r_{\mathrm{red}}\) redundant
symbols is \emph{literal-systematic} if it has the form
\[
 E:\mathcal A^n\longrightarrow\mathcal A^{n+r_{\mathrm{red}}},
 \qquad E(x)=x\concat p(x),
\]
where \(p:\mathcal A^n\to\mathcal A^{r_{\mathrm{red}}}\).  Thus the original
alphabet symbols occur unchanged as a prefix, and every redundant symbol in
the appended trailer remains exposed to deletion.
\end{definition}

\emph{Uniformity convention.}
A family of encoders and decoders, possibly over alphabets
\(\mathcal A_n\) that vary with \(n\), is \emph{uniform} here if one pair of
deterministic algorithms, given \(n\) and the relevant input word, constructs
all public \(n\)-dependent data and evaluates the encoder or decoder in time
polynomial in \(n\) and \(\log|\mathcal A_n|\).  Thus field models, orders,
projections, and palette embeddings must themselves be determined by the
stated public rules; no advice string or exponentially large table depending
on \(n\) is supplied for free.

At the level of existence alone, a
nonuniform literal-systematic code is equivalently a choice of one word from
every prefix part
\[
 P_x=\{x\concat z:z\in\mathcal A^{r_{\mathrm{red}}}\},
 \qquad x\in\mathcal A^n.
\]
This is the nonuniform model used in the Haxell benchmark below.

\subsection{A known full-symbol lower bound}

The lower bound stated in the Introduction is a direct specialization of
Theorem~3.6(ii) in the improved nonlinear Singleton bound of Liu and Xing
\cite{LiuXing2023}.  We include the following short direct proof, which does
not assume systematicity or linearity.

\begin{theorem}[Known full-symbol lower bound]
\label{thm:lower-bound}
Let \(q\ge2\), \(k\ge1\), and \(n\ge2\). A \(q\)-ary code of
cardinality \(q^n\) that corrects \(k\) deletions has block length at
least \(n+k+1\). In particular, every code of cardinality \(q^n\)
correcting two deletions uses at least three redundant full-alphabet
symbols, whether or not it is systematic.
\end{theorem}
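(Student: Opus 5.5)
The plan is a counting argument on exact-\(k\)-deletion descendants, combined with the disjointness condition from the definition. Let \(\C\subseteq\mathcal A^N\) have \(|\C|=q^n\) and correct \(k\) deletions. By Lemma~\ref{lem:exact-monotone} and the convention after it, this means the sets \(\mathsf D_k(x)\), \(x\in\C\), are pairwise disjoint. Each set is nonempty and lies in \(\mathcal A^{N-k}\). Assume for contradiction that \(N\le n+k\).

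First I dispose of \(N<n+k\). Then \(\mathcal A^{N-k}\) has \(q^{N-k}<q^n\) elements. On the other hand, it contains \(q^n\) pairwise disjoint nonempty sets, which is impossible. (If \(N<k\), no codeword even admits \(k\) deletions; for \(N\ge k\) the count is as stated, with the empty word when \(N=k\).)

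The main case is \(N=n+k\). Now \(q^n\) disjoint nonempty subsets sit inside a set of exactly \(q^n\) words, so each \(\mathsf D_k(x)\) must be a singleton. The key step is the claim that a word \(w\in\mathcal A^{n+k}\) with \(|\mathsf D_k(w)|=1\) is constant. To see this, fix a descendant position \(j\in\{1,\dots,n\}\). I would show that the \(j\)-th symbol of a descendant can be \(w_i\) for any \(i\) with \(j\le i\le j+k\). The construction keeps position \(i\), keeps \(j-1\) positions chosen among the \(i-1\ge j-1\) earlier ones, and keeps \(n-j\) positions chosen among the \(n+k-i\ge n-j\) later ones.

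Hence uniqueness of the descendant forces each window \(w_j,\dots,w_{j+k}\) to be constant. Consecutive windows have length \(k+1\ge2\) and overlap in \(k\ge1\) positions. Together they cover all positions \(1,\dots,n+k\), so \(w\) is constant.

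It follows that every codeword is one of the \(q\) constant words, so \(q^n=|\C|\le q\). This contradicts \(n\ge2\) and \(q\ge2\). Therefore \(N\ge n+k+1\), and the case \(k=2\) gives the three-redundant-symbol statement.

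The only delicate step is the window-realizability claim, namely checking the index counts \(i-1\ge j-1\) and \(n+k-i\ge n-j\); everything else is pure counting. This is also where \(k\ge1\) is used, since it makes consecutive windows overlap, while \(n\ge2\) enters only at the final inequality \(q^n>q\).
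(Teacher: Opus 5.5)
Your proof is correct and follows the paper's argument. The same counting forces every shadow to be a singleton at length \(n+k\), and a singleton shadow forces a constant codeword, which contradicts \(q^n>q\); the only cosmetic difference is that the paper compares the \(k+1\) contiguous windows \(c_{i+1}\cdots c_{i+n}\) (delete \(i\) leading and \(k-i\) trailing symbols), which is a concrete instance of your observation that output position \(j\) can carry any of \(w_j,\dots,w_{j+k}\).
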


\begin{proof}[Proof of Theorem~\ref{thm:lower-bound}]
Fix a code as in the theorem, write \(k\) for its deletion radius, and let
\(M\) be its block length.  Every exact-\(k\)-deletion shadow is nonempty and
lies in \(\mathcal A^{M-k}\).  Pairwise disjointness gives
\[
 q^n=|\C|\le q^{M-k},
\]
so \(M\ge n+k\).

Assume for a contradiction that \(M=n+k\).  The \(q^n\) nonempty, pairwise
disjoint sets \(\mathsf D_k(c)\), for \(c\in\C\), lie in the ambient set
\(\mathcal A^n\), which also has size \(q^n\).  Hence every
\(\mathsf D_k(c)\) is a singleton.

Fix \(c=c_1\cdots c_{n+k}\in\C\).  For every \(0\le i\le k\), the consecutive
window
\[
 c_{i+1}\cdots c_{i+n}
\]
is an exact-\(k\)-deletion descendant: delete the first \(i\) and last
\(k-i\) coordinates.  All these windows must be equal.  Equality of the
windows for \(i\) and \(i+1\) forces
\[
 c_{i+1}=c_{i+2}=\cdots=c_{i+n+1}.
\]
As \(i\) ranges from zero to \(k-1\), these chains cover every adjacent pair
of coordinates of \(c\).  Thus \(c\) is constant.  There are only \(q\)
constant words, contradicting \(|\C|=q^n>q\) for \(n\ge2\).
\end{proof}

The restriction \(n\ge2\) is necessary: for \(n=1\), the \(q\) repetition
words \(\{a^{k+1}:a\in\mathcal A\}\) correct \(k\) deletions with only \(k\)
redundant symbols.

\subsection{A linear-code obstruction}

The target parameters cannot be obtained from a linear code, regardless of
whether that code is displayed systematically.

\begin{proposition}[Half-Singleton specialization]
\label{prop:linear-obstruction}
For every \(n\ge3\), no linear \([n+3,n]_q\) code corrects two deletions.
In particular, no Reed--Solomon code has the target parameters.
\end{proposition}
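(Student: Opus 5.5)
The plan is to obtain Proposition~\ref{prop:linear-obstruction} as a direct specialization of the half-Singleton bound of~\cite{BeelenEtAl2026}, after translating two-deletion correction into that bound's insertion--deletion distance.

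\textbf{Step 1 (distance reformulation).} For words \(x,y\) of common length \(N\), let \(\mathrm{LCS}(x,y)\) be the length of a longest common subsequence. Define the insdel distance \(d_L(x,y)=2\bigl(N-\mathrm{LCS}(x,y)\bigr)\). Distinct words \(x,y\) satisfy \(\mathsf D_k(x)\cap\mathsf D_k(y)\neq\varnothing\) exactly when \(\mathrm{LCS}(x,y)\ge N-k\). In one direction, a common descendant has length \(N-k\). In the other, any common subsequence can be shortened to length \(N-k\). Hence a length-\(N\) code corrects \(k\) deletions if and only if its minimum insdel distance is at least \(2k+2\). By Lemma~\ref{lem:exact-monotone}, it does not matter whether ``exactly'' or ``up to'' is meant. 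For \(k=2\), the requirement is \(d_L(\C)\ge 6\).

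\textbf{Step 2 (invoke the bound).} The half-Singleton bound of~\cite{BeelenEtAl2026} states that every linear \([N,K]_q\) code with \(K\ge2\) has minimum insdel distance at most \(2(N-2K+2)\). Take \(N=n+3\) and \(K=n\ge3\). Then
\[
 6\le d_L(\C)\le 2\bigl(n+3-2n+2\bigr)=2(5-n).
\]
This inequality gives \(n\le2\), which contradicts \(n\ge3\). Reed--Solomon codes are linear \([n+3,n]_q\) codes, so the final sentence of the proposition follows immediately.

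\textbf{Main obstacle.} The delicate point is normalization. The cited bound must be applied in the form matching our conventions: \(d_L\) measured with the factor \(2\), the exact constant \(+2\), and the hypothesis \(K\ge2\). I would check this conversion explicitly against the statement in~\cite{BeelenEtAl2026}, since an off-by-one error would move the threshold away from \(n\ge3\).

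\textbf{Fallback.} If a self-contained argument is preferred, I would reproduce the mechanism behind the half-Singleton bound. Choose a set \(S\) of \(n-1\) coordinates. The codewords vanishing on \(S\) form a subspace of dimension at least \(1\); alternatively, project onto \(S\) to find two codewords agreeing there. Then exploit linearity to build a codeword \(c\) and a scalar shift relating \(c\) to another codeword so that the two share a common subsequence of length at least \(N-2=n+1\). This would directly exhibit a common two-deletion descendant. Arranging the zero pattern so that the alignment costs only two deletions is where the real work lies, and this is why citing the bound is the preferred route.
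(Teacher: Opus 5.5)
Your proposal is correct and takes essentially the same route as the paper: both specialize the half-Singleton bound of Cheng--Guruswami--Haeupler--Li, as restated in \cite{BeelenEtAl2026}, to \(N=n+3\) and \(K=n\). Your insdel-distance form \(6\le d_L\le 2(N-2K+2)\) is equivalent to the paper's form \(2\le N-2K+1=4-n\), so the normalization you were worried about is correct, and the fallback argument is not needed.
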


\begin{proof}
The LCS argument underlying the half-Singleton bound of Cheng,
Guruswami, Haeupler, and Li
\cite[Corollary~5.2]{ChengGuruswamiHaeuplerLi2023}, restated in
\cite[Theorem~4]{BeelenEtAl2026}, gives the following conditional
implication for the deletion-only channel: if a linear \([N,K]_q\) code
corrects two deletions, then
\[
  2\le N-2K+1=4-n,
\]
where the equality substitutes \(N=n+3\) and \(K=n\).  This is impossible
when \(n\ge3\).
\end{proof}

Reed--Solomon parity may still be used after a nonlinear synchronization
layer recovers the deletion locations; the complete three-trailer map itself
must be nonlinear from \(n=3\) onward.

\subsection{A nonuniform independent-transversal benchmark}

For total length \(M\), define the exact-\(k\)-deletion conflict graph
\(G_{M,k,q}\) on \(\mathcal A^M\): two distinct words are adjacent when their
exact-\(k\)-deletion shadows intersect.  Put
\[
 B=\binom{M}{k}.
\]
Each vertex has at most \(B\) distinct deletion outputs.  A fixed output of
length \(M-k\) has at most \(Bq^k\) length-\(M\) superwords, because one may
choose the insertion coordinates and their values.  Therefore
\begin{equation}
 \Delta(G_{M,k,q})\le B(Bq^k-1).
 \label{eq:crude-conflict-degree}
\end{equation}

We use Haxell's independent-transversal theorem in its strict form: if a
graph of maximum degree at most \(d\) has its vertex set partitioned into
parts of size strictly greater than \(2d\), then it has an independent set
meeting every part once \cite[Theorem~2]{Haxell2001}.
This theorem is the external transversal input; the multipartite conflict
graph and its degree estimate below are specific to the deletion-code
reduction.

\begin{theorem}[Existential literal-systematic benchmark]
\label{thm:haxell-systematic}
Let \(k\ge1\), \(n\ge1\), and \(M=n+k+1\).  If
\[
 q\ge2\binom{M}{k}^{2},
\]
then there is a literal-systematic code
\(E:\mathcal A^n\to\mathcal A^M\) that corrects up to \(k\) deletions and
uses exactly \(k+1\) redundant symbols.
\end{theorem}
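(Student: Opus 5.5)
We apply Haxell's theorem to the exact-\(k\)-deletion conflict graph \(G_{M,k,q}\) on \(\mathcal A^M\), with vertex partition into the prefix parts \(P_x\), \(x\in\mathcal A^n\). Each part has size \(q^{M-n}=q^{k+1}\). An independent transversal selects exactly one word \(x\concat p(x)\) from each \(P_x\), and this defines a literal-systematic encoder \(E(x)=x\concat p(x)\) with \(k+1\) trailers. Independence means that no two selected words are adjacent, i.e.\ their exact-\(k\)-deletion shadows are disjoint. By Lemma~\ref{lem:exact-monotone}, the image then corrects up to \(k\) deletions. Distinct \(x\) give distinct codewords because the prefixes differ.

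It remains to check the hypothesis of Haxell's theorem in its strict form, namely \(|P_x|>2d\), where \(d\) is a degree bound. By \eqref{eq:crude-conflict-degree} we may take \(d=B(Bq^k-1)\) with \(B=\binom{M}{k}\). Then
\[
 2d=2B^2q^k-2B<2B^2q^k\le q\cdot q^k=q^{k+1}=|P_x|,
\]
using the assumption \(q\ge2B^2\) and \(B\ge1\), so \(2B>0\). The strict inequality holds, and Haxell's theorem yields the independent transversal.

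The only point needing care is the degree estimate \eqref{eq:crude-conflict-degree}, which the paper already derives: a word has at most \(B\) distinct descendants, and each descendant has at most \(Bq^k\) superwords of length \(M\), one of which is the word itself. Edges inside a part are harmless, since the transversal uses one vertex per part and Haxell's theorem permits arbitrary edges. I therefore expect no real obstacle beyond keeping the inequality strict, which the \(-2B\) slack provides.
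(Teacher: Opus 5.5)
Your proposal is correct and follows the paper's proof essentially step for step. You partition the conflict graph into the prefix parts of size \(q^{k+1}\). You use the same strict chain \(q^{k+1}\ge2B^2q^k>2B(Bq^k-1)\ge2\Delta(G_{M,k,q})\) to invoke Haxell's theorem. Finally, you apply Lemma~\ref{lem:exact-monotone} to pass from exact-\(k\) to up-to-\(k\) correction.
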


\begin{proof}
Partition \(\mathcal A^M\) into the prefix parts
\[
 P_x=\{x\concat z:z\in\mathcal A^{k+1}\},
 \qquad x\in\mathcal A^n.
\]
Every part has size \(q^{k+1}\).  The assumed inequality and
\eqref{eq:crude-conflict-degree} give
\[
 q^{k+1}
 \ge2B^2q^k
 >2B(Bq^k-1)
 \ge2\Delta(G_{M,k,q}).
\]
Haxell's theorem supplies an independent transversal.  Its representatives
form a literal-systematic code correcting exactly \(k\) deletions, and
Lemma~\ref{lem:exact-monotone} gives correction up to \(k\).
\end{proof}

For \(k=2\), this theorem gives three redundant symbols whenever
\[
 q\ge2\binom{n+3}{2}^2=O(n^4).
\]
Requiring \(q\) to be a power of two changes this estimate by a factor of
less than two: take the least such power above the displayed threshold.
This is a nonuniform existence theorem on a graph with \(q^{n+3}\) vertices;
the present work additionally requires a single uniform trailer map and a
polynomial-time pointwise evaluator.

\subsection{An alphabet obstruction for appended optimal trailers}

A fixed word of length \(M-k\) has exactly
\begin{equation}
 \mathsf I_k(M,q)=\sum_{j=0}^k\binom Mj(q-1)^j
 \label{eq:insertion-sphere}
\end{equation}
distinct length-\(M\) superwords.  This standard insertion-sphere count
follows by the usual last-symbol recurrence; its binary ancestor goes back
to Levenshtein~\cite{Levenshtein1966}.

\begin{proposition}[Necessary alphabet size for an appended optimal trailer]
\label{prop:appended-alphabet-lower}
Let \(n\ge k\ge1\).  If
\[
 E(x)=x\concat p(x),
 \qquad p:\mathcal A^n\longrightarrow\mathcal A^{k+1},
\]
corrects \(k\) deletions, then
\[
 q^{k+1}\ge\mathsf I_k(n,q).
\]
\end{proposition}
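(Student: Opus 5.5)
The plan is a counting argument that fixes a short common core inside the data prefix and forces the trailer map \(p\) to be injective on the superwords of that core. Fix any word \(y\in\mathcal A^{n-k}\); this is possible because \(n\ge k\), with \(y\) the empty word when \(n=k\). Let
\[
 S_y=\{x\in\mathcal A^n: y\in\mathsf D_k(x)\}
\]
be the set of length-\(n\) superwords of \(y\). By the insertion-sphere count \eqref{eq:insertion-sphere} with \(M=n\), we have \(|S_y|=\mathsf I_k(n,q)\). I will take this count as known, since the paper cites it as standard.

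The key step is an observation about deleting only inside the prefix. For every \(x\in S_y\), delete the \(k\) coordinates of the prefix of \(E(x)=x\concat p(x)\) that turn \(x\) into \(y\), and keep the trailer intact. The result is
\[
 y\concat p(x)\in\mathsf D_k\bigl(E(x)\bigr).
\]
Now take distinct \(x,x'\in S_y\). Their codewords \(E(x)\ne E(x')\) are distinct because their prefixes differ. Correction of \(k\) deletions makes their exact-\(k\)-deletion shadows disjoint; correction up to \(k\) includes the exact count \(k\). So \(y\concat p(x)\ne y\concat p(x')\), which gives \(p(x)\ne p(x')\).

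Hence \(p\) restricted to \(S_y\) is an injection into \(\mathcal A^{k+1}\). Therefore
\[
 \mathsf I_k(n,q)=|S_y|\le q^{k+1},
\]
which is the claimed inequality.

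There is essentially no obstacle. The only points to check are that the deletion pattern is confined to the first \(n\) coordinates, so the trailer survives verbatim, and that \(n\ge k\) is exactly what makes the core \(y\) exist. The main care is in quoting the superword count correctly: \(\mathsf I_k(n,q)\) counts distinct length-\(n\) superwords of a length-\((n-k)\) word. One might worry that it depends on \(y\), but the cited formula shows it is the same for every \(y\), so any choice of \(y\) works.
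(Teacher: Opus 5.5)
Your proposal is correct and is the paper's own argument: fix a core \(y\in\mathcal A^{n-k}\), delete \(k\) data symbols from each superword so that \(y\concat p(x)\) is a common exact-\(k\) descendant, conclude that \(p\) is injective on the \(\mathsf I_k(n,q)\) superwords of \(y\), and compare with the \(q^{k+1}\) possible trailers. Your handling of the \(n=k\) endpoint (empty \(y\)) and of the exact-\(k\) case of ``up to \(k\)'' correction matches the paper's use as well.
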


\begin{proof}
Fix \(y\in\mathcal A^{n-k}\), and let \(I_k(y)\) be its set of length-\(n\)
superwords.  If distinct \(x,x'\in I_k(y)\) had \(p(x)=p(x')\), deleting
\(k\) data symbols from both codewords would produce the common descendant
\(y\concat p(x)\).  Thus \(p\) is injective on \(I_k(y)\).  Equation
\eqref{eq:insertion-sphere} gives \(|I_k(y)|=\mathsf I_k(n,q)\), while the
trailer range has size \(q^{k+1}\).
\end{proof}

At \(k=2\), the proposition gives
\[
 q^3\ge1+n(q-1)+\binom n2(q-1)^2,
\]
so \(q^3\ge\binom n2(q-1)^2\).  Since
\((q-1)^2\ge q^2/4\) for \(q\ge2\), it follows that
\(q\ge\tfrac14\binom n2=\Omega(n^2)\).  The quadratic necessary
scale, Haxell's quartic nonuniform scale, and the uniform construction proved
below are distinct benchmarks; only the number of full redundant symbols is
settled optimally here.

\section{Authenticating three deletion-prone trailers}
\label{sec:boundary}

The data and trailer occupy consecutive blocks before transmission, but their
boundary is not marked in a deletion descendant.  Thus two equal received
words could a priori assign different numbers of surviving symbols to the two
blocks.  We first rule out this ambiguity without consuming any of the payload
coordinates used for deletion syndromes.

If \(U\) and \(V\) are sets of words, write
\(UV=\{a\concat b:a\in U,\ b\in V\}\).  We use the deletion-shadow notation from
Section~\ref{sec:model}.  The factorization below uses a general trailer word
\(p\) of length \(L_{\mathrm{tr}}\); the encoder later specializes to
\(L_{\mathrm{tr}}=3\).

\begin{lemma}[Exact data--trailer factorization]
\label{lem:data-trailer-factorization}
Let \(x\in\mathcal A^n\), let \(p\in\mathcal A^{L_{\mathrm{tr}}}\), and let
\(0\le k\le n+L_{\mathrm{tr}}\).  Then
\[
 \mathsf D_k(xp)=
 \bigcup_{\max\{0,k-L_{\mathrm{tr}}\}\le a\le\min\{k,n\}}
       \mathsf D_a(x)\,\mathsf D_{k-a}(p).
\]
\end{lemma}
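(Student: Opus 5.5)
We prove the set equality by double inclusion, working directly from the definition of \(\mathsf D_k\) as the set of subsequences indexed by increasing position tuples.

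For the inclusion \(\subseteq\), take \(w\in\mathsf D_k(xp)\). It is obtained by keeping an increasing index set \(S\subseteq\{1,\dots,n+L_{\mathrm{tr}}\}\) of size \(n+L_{\mathrm{tr}}-k\). Split \(S\) into two parts:
\[
S_1=S\cap\{1,\dots,n\},\qquad S_2=S\cap\{n+1,\dots,n+L_{\mathrm{tr}}\}.
\]
Let \(a=n-|S_1|\) be the number of deleted data positions. Then \(k-a=L_{\mathrm{tr}}-|S_2|\) is the number of deleted trailer positions. Because the indices in \(S_1\) all precede those in \(S_2\), the word \(w\) is the concatenation of the subsequence of \(x\) on \(S_1\) and the subsequence of \(p\) on the shifted set \(S_2-n\). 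The first factor lies in \(\mathsf D_a(x)\) and the second in \(\mathsf D_{k-a}(p)\). The range of \(a\) follows from the counts:
\begin{itemize}
\item \(0\le a\le n\), since \(a\) counts deleted data positions;
\item \(0\le k-a\le L_{\mathrm{tr}}\), since \(k-a\) counts deleted trailer positions;
\item \(a\le k\), since \(k-a\ge 0\).
\end{itemize}
Together these give exactly \(\max\{0,k-L_{\mathrm{tr}}\}\le a\le\min\{k,n\}\).

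For the inclusion \(\supseteq\), fix an \(a\) in the stated range, and take \(u\in\mathsf D_a(x)\) with kept index set \(T_1\subseteq\{1,\dots,n\}\) and \(v\in\mathsf D_{k-a}(p)\) with kept index set \(T_2\subseteq\{1,\dots,L_{\mathrm{tr}}\}\). Both shadows are well defined because \(0\le a\le n\) and \(0\le k-a\le L_{\mathrm{tr}}\). The set \(T_1\cup(T_2+n)\) is increasing and has size \((n-a)+(L_{\mathrm{tr}}-k+a)=n+L_{\mathrm{tr}}-k\). The subsequence of \(xp\) on this set is \(u\concat v\), so \(u\concat v\in\mathsf D_k(xp)\).

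No step is a genuine obstacle. The only point needing care is the bookkeeping in the first inclusion: the concatenation order is preserved precisely because every retained data index precedes every retained trailer index. The boundary cases \(k=0\) and \(k=n+L_{\mathrm{tr}}\), where one of the shadows contains only the empty word, are covered by the same argument, since \(\mathsf D_0\) of a word is that word and \(\mathsf D_M\) of a length-\(M\) word is \(\{\text{empty word}\}\).
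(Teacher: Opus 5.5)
Your proof is correct and follows the paper's argument. Like the paper, you split a deletion pattern by the number \(a\) of data coordinates it removes and use the fact that every data index precedes every trailer index for the forward inclusion; for the reverse inclusion, you combine a pattern in each block. Your version simply makes the index bookkeeping and the derivation of the range \(\max\{0,k-L_{\mathrm{tr}}\}\le a\le\min\{k,n\}\) explicit, which the paper leaves implicit.
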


\begin{proof}
Every deletion pattern removes a uniquely determined number \(a\) of data
coordinates and \(k-a\) trailer coordinates.  Deletion preserves order, and
the whole data block precedes the trailer, so the resulting word lies in the
displayed concatenation.  Conversely, any pair of descendants in one term of
the union is realized by applying the two deletion patterns in their
respective blocks.
\end{proof}

\subsection{A repeated tag absent from the data}

Let \(C_{\mathrm{tag}}\) be a finite tag set, and let
\[
 \pi_{\mathrm{tag}}:\mathcal A\longrightarrow C_{\mathrm{tag}}
\]
be a public map.  Suppose each data word \(x\) is assigned a tag
\(\Gamma(x)\in C_{\mathrm{tag}}\) absent from all of its symbol tags:
\begin{equation}
 \Gamma(x)\notin
 \{\pi_{\mathrm{tag}}(x_1),\ldots,
   \pi_{\mathrm{tag}}(x_n)\}.
 \label{eq:tag-absent}
\end{equation}
Every trailer coordinate will carry this same tag.

\begin{theorem}[Absent-tag boundary authentication]
\label{thm:absent-label-boundary}
Let \(p(x)=p_1(x)\cdots p_{L_{\mathrm{tr}}}(x)\) satisfy
\[
 \pi_{\mathrm{tag}}(p_j(x))=\Gamma(x)
 \qquad(1\le j\le L_{\mathrm{tr}})
\]
for every message \(x\).  Fix a deletion count \(k\) and assume
\(L_{\mathrm{tr}}\ge k+1\).  If chosen exact-\(k\)-deletion patterns in
\(xp(x)\) and \(yp(y)\) give the same word, then the two patterns delete the
same number of data coordinates.
\end{theorem}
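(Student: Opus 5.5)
We argue by contradiction, using the factorization of Lemma~\ref{lem:data-trailer-factorization}. Suppose the pattern on \(xp(x)\) deletes \(a\) data coordinates and \(k-a\) trailer coordinates, the pattern on \(yp(y)\) deletes \(b\) and \(k-b\), and both yield the same word \(w\) of length \(n+L_{\mathrm{tr}}-k\). Without loss of generality \(a<b\). By the factorization, \(w=u\concat s\) with \(u\in\mathsf D_a(x)\) of length \(n-a\) and \(s\in\mathsf D_{k-a}(p(x))\). Also \(w=u'\concat s'\) with \(u'\in\mathsf D_b(y)\) of length \(n-b\) and \(s'\in\mathsf D_{k-b}(p(y))\). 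Because \(n-b<n-a\), the \(y\)-side data block ends strictly before the \(x\)-side data block does.

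The key step is to look at the positions \(n-b+1,\ldots,n-a\) of \(w\). There are \(b-a\ge1\) of them. On the \(x\)-side they lie inside \(u\), so each is a surviving symbol of \(x\), and its tag lies in \(\{\pi_{\mathrm{tag}}(x_1),\ldots,\pi_{\mathrm{tag}}(x_n)\}\). By \eqref{eq:tag-absent}, that tag differs from \(\Gamma(x)\). On the \(y\)-side the same positions lie in \(s'\), so each is a surviving trailer symbol of \(p(y)\), and its tag equals \(\Gamma(y)\). Thus the symbol \(w_{n-a}\) has tag \(\Gamma(y)\), and \(\Gamma(y)\neq\Gamma(x)\). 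This alone is not yet a contradiction, so we need a second comparison point.

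For that, use the last symbol of \(w\). The \(x\)-side trailer survives with length \(L_{\mathrm{tr}}-(k-a)\ge L_{\mathrm{tr}}-k\ge1\), since \(L_{\mathrm{tr}}\ge k+1\). So the final symbol of \(w\) comes from \(p(x)\) and has tag \(\Gamma(x)\). On the \(y\)-side, the last symbol of \(w\) lies in \(s'\), which is nonempty for the same reason, so it has tag \(\Gamma(y)\). Hence \(\Gamma(x)=\Gamma(y)\). But position \(n-a\) of \(w\) has tag \(\Gamma(y)=\Gamma(x)\) and is also a data symbol of \(x\), contradicting \eqref{eq:tag-absent}. Therefore \(a=b\).

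The only delicate step is securing this second comparison point: one must make sure a trailer symbol survives on both sides simultaneously. This is exactly where \(L_{\mathrm{tr}}\ge k+1\) is used, since it guarantees a nonempty surviving trailer for every admissible split. Everything else is index bookkeeping. The edge cases \(a=0\) or \(b=\min\{k,n\}\) need no special treatment, because the argument uses only \(a<b\) and the nonemptiness of the surviving trailers.
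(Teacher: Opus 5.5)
Your proof is correct and follows the paper's argument, with the roles of \(x\) and \(y\) exchanged (the paper assumes \(a>b\)). In both, the common nonempty trailer suffix, guaranteed by \(L_{\mathrm{tr}}\ge k+1\), forces \(\Gamma(x)=\Gamma(y)\). The nonempty stretch where one side's surviving data overlaps the other side's surviving trailer then places this common tag among the data tags, contradicting \eqref{eq:tag-absent}.
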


\begin{proof}
Let the two patterns delete \(a\) and \(b\) data coordinates, respectively.
Suppose for a contradiction that \(a>b\), and put
\(\delta_{\mathrm{split}}=a-b>0\).  In the common
received word, the surviving data prefix on the \(x\)-side has length
\(n-a\), whereas that on the \(y\)-side has length
\(n-b=n-a+\delta_{\mathrm{split}}\).  Comparing
the two factorizations produces words \(\xi,r,v\) with
\[
 w=\xi\concat r\concat v,
\]
where \(\xi\) is the surviving data from \(x\), the nonempty word \(r\) is both
an initial part of the surviving trailer of \(x\) and a final part of the
surviving data of \(y\), and \(v\) is a final part of both surviving trailers.
Moreover,
\[
 |v|=L_{\mathrm{tr}}-k+b\ge L_{\mathrm{tr}}-k\ge1.
\]
Projecting a symbol of \(v\) gives \(\Gamma(x)=\Gamma(y)\).  Every symbol of
\(r\), viewed from the \(x\)-factorization, has tag \(\Gamma(x)\), while
\(r\) consists of data symbols of \(y\).  Hence \(\Gamma(y)\) occurs among
the data tags of \(y\), contradicting \eqref{eq:tag-absent}.  The case
\(b>a\) is symmetric.
\end{proof}

Give \(C_{\mathrm{tag}}\) a public enumeration, and use the induced total
order.  Whenever
\(|C_{\mathrm{tag}}|>n\), the deterministic choice
\begin{equation}
 \Gamma(x)=
 \min\Bigl(C_{\mathrm{tag}}\setminus
   \{\pi_{\mathrm{tag}}(x_1),\ldots,
     \pi_{\mathrm{tag}}(x_n)\}\Bigr)
 \label{eq:least-absent-tag}
\end{equation}
is well defined.  Mark the at most \(n\) data tags and scan the enumeration
until its first unmarked element; the first \(n+1\) entries suffice.

\subsection{The exact equal-split obligation}

For \(n\ge2\), a three-symbol trailer
\(p:\mathcal A^n\to\mathcal A^3\) is said to satisfy the
\emph{two-deletion equal-split condition} if, for every
\(a\in\{0,1,2\}\) and all distinct \(x,y\in\mathcal A^n\),
\begin{equation}
 \mathsf D_a(x)\cap\mathsf D_a(y)\ne\varnothing
 \quad\Longrightarrow\quad
 \mathsf D_{2-a}(p(x))\cap\mathsf D_{2-a}(p(y))
 =\varnothing.
 \label{eq:equal-split-condition}
\end{equation}

\begin{theorem}[Equal-split reduction for two deletions]
\label{thm:equal-split-reduction}
Assume that every trailer symbol carries the repeated absent tag of
Theorem~\ref{thm:absent-label-boundary}.  Then the literal-systematic family
\[
 \{x\concat p(x):x\in\mathcal A^n\}
\]
corrects up to two deletions if and only if
\eqref{eq:equal-split-condition} holds.
\end{theorem}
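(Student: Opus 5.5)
By Lemma~\ref{lem:exact-monotone} it suffices to treat exactly two deletions. The family then corrects two deletions if and only if \(\mathsf D_2(xp(x))\cap\mathsf D_2(yp(y))=\varnothing\) for all distinct messages \(x,y\). The plan is to expand both shadows with Lemma~\ref{lem:data-trailer-factorization} (taking \(L_{\mathrm{tr}}=3\), \(k=2\); since \(n\ge2\), the index \(a\) ranges over \(\{0,1,2\}\)) and to use Theorem~\ref{thm:absent-label-boundary} to discard every mixed term. Because \(L_{\mathrm{tr}}=3\ge k+1\), that theorem applies: any common descendant arises from patterns that delete the same number \(a\) of data coordinates in both codewords. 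Hence
\[
 \mathsf D_2(xp(x))\cap\mathsf D_2(yp(y))
 =\bigcup_{a=0}^{2}\bigl(\mathsf D_a(x)\mathsf D_{2-a}(p(x))\bigr)\cap
   \bigl(\mathsf D_a(y)\mathsf D_{2-a}(p(y))\bigr).
\]

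Fix \(a\). Words in \(\mathsf D_a(x)\mathsf D_{2-a}(p(x))\) have a first block of length exactly \(n-a\) and a second block of length exactly \(3-(2-a)=1+a\). A concatenation \(uv\) with both block lengths fixed is therefore determined by the pair \((u,v)\) and determines it. It follows that the \(a\)-th intersection equals
\[
 \bigl(\mathsf D_a(x)\cap\mathsf D_a(y)\bigr)\bigl(\mathsf D_{2-a}(p(x))\cap\mathsf D_{2-a}(p(y))\bigr),
\]
which is empty precisely when at least one of the two factors is empty. That is exactly the implication \eqref{eq:equal-split-condition} for this \(a\).

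Combining the steps, the full intersection is empty for all distinct \(x,y\) if and only if \eqref{eq:equal-split-condition} holds for every \(a\in\{0,1,2\}\). For the forward direction, suppose the condition fails for some \(a\). Choose \(u\in\mathsf D_a(x)\cap\mathsf D_a(y)\) and \(v\in\mathsf D_{2-a}(p(x))\cap\mathsf D_{2-a}(p(y))\); the word \(uv\) is then a common exact-two-deletion descendant, realized blockwise as in Lemma~\ref{lem:data-trailer-factorization}. This direction does not need the tag hypothesis.

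The main point needing care is the reduction to equal splits. Theorem~\ref{thm:absent-label-boundary} is stated for chosen deletion patterns, and I would note that any common word in the two shadows comes from some pair of such patterns, so the theorem applies pattern by pattern. The other detail to state explicitly is that fixed block lengths make the decomposition \(uv\) unique, which is what lets the intersection factor blockwise.
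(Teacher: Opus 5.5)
Your proof is correct and follows essentially the paper's argument: necessity comes from concatenating a common data descendant with a common trailer descendant, and sufficiency comes from Theorem~\ref{thm:absent-label-boundary} together with Lemma~\ref{lem:data-trailer-factorization}; your fixed-block-length remark makes explicit the blockwise splitting that the paper uses implicitly. The only difference is that you reduce to exactly two deletions at the outset via Lemma~\ref{lem:exact-monotone}, whereas the paper handles a common descendant after \(d<2\) deletions directly, by deleting \(2-d\) further symbols from the common trailer descendant (of length \(3-d+a\ge2-d\)).
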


\begin{proof}
For necessity, if distinct \(x,y\) share an exact-\(a\)-deletion data
descendant and their trailers share an exact-\((2-a)\)-deletion descendant,
concatenating the two common words gives a common exact-two-deletion
descendant of the codewords.

For sufficiency, suppose distinct codewords have a common exact-\(d\)-deletion
descendant for some \(d\le2\).  Applying
Theorem~\ref{thm:absent-label-boundary} with deletion count \(d\) forces the
two patterns to delete the same number \(a\) of data symbols.  Lemma
\ref{lem:data-trailer-factorization} then gives common descendants in both
\(\mathsf D_a(x)\cap\mathsf D_a(y)\) and
\(\mathsf D_{d-a}(p(x))\cap\mathsf D_{d-a}(p(y))\).  If \(d<2\), delete a
further \(2-d\) coordinates from the common trailer descendant; its length is
\(3-d+a\ge2-d\), so this is possible.  We obtain a common exact-
\((2-a)\)-deletion trailer descendant, contradicting
\eqref{eq:equal-split-condition}.  The case \(d=2\) is immediate.
\end{proof}

The trailers will also carry fixed public position markers.  Formally, let
\(\pi_{\mathrm{mark}}:\mathcal A\to\{0,1,2,3\}\) be public and require
\[
 \pi_{\mathrm{mark}}(p_j(x))=j,
 \qquad j=1,2,3.
\]
If one trailer coordinate is deleted, the surviving ordered marker pair is
\((2,3)\), \((1,3)\), or \((1,2)\), according as coordinate \(1\), \(2\),
or \(3\) was deleted.  Thus the pair identifies which trailer coordinates,
and hence which payload coordinates, survived.

The equal-split condition reduces to the following three cases.

\begin{center}
\begin{tabularx}{\textwidth}{@{}c c >{\raggedright\arraybackslash}X@{}}
\toprule
data deletions & trailer deletions & remaining obligation \\
\midrule
0 & 2 & the intact systematic data already identifies the message \\
1 & 1 & the marker pair identifies one of three one-deletion payload views,
        and that view must separate the remaining ambiguity \\
2 & 0 & all three payloads survive and must separate messages sharing a
        two-deletion data descendant \\
\bottomrule
\end{tabularx}
\end{center}

Unequal data-deletion counts are absent by
Theorem~\ref{thm:absent-label-boundary}.  Sections~\ref{sec:values}--
\ref{sec:construction} now build the three equal-split views and pack them
into the marked trailer.

\section{Nonlinear value recovery and the Tenengolts locator}
\label{sec:values}

Throughout this section let \(\F=\F_q\) be a finite field of characteristic
two, so \(q\) is a power of two.  Fix a public \(\F_2\)-basis of
\(\F\), and give \(\F\) a fixed public total order \(\preceq\).  The order
will be used only for comparisons and need not be compatible with the field
operations.

For every finite word \(z=(z_1,\ldots,z_m)\in\F^m\), including the empty
word when \(m=0\), define
\begin{equation}
 S(z)=\sum_{i=1}^m z_i,
 \qquad
 \sigma_2(z)=\sum_{1\le i<j\le m}z_iz_j,
 \qquad
 V(z)=\sigma_2(z)+S(z)^2.
 \label{eq:nonlinear-checks}
\end{equation}
Thus \(S=\sigma_1\), and \(\sigma_2\) is the second elementary symmetric
function.  The form \(V\) is a convenient coordinate change of the second
check.

\subsection{Recovering deleted field values}

\begin{lemma}[One deleted value]
\label{lem:one-deletion-nonlinear}
Let \(w\in\mathsf D_1(x)\), put \(R=S(w)\), and let \(z\) be the deleted
field value.  Then \(w\) and \(S(x)\) determine \(z\) uniquely.  For fixed
\(w\) and \(V(x)\), at most two field values can be the deleted value.
\end{lemma}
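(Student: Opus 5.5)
The plan is a direct computation with the symmetric functions of \eqref{eq:nonlinear-checks}: write \(x\) as \(w\) with one value \(z\) inserted, and express \(S(x)\) and \(V(x)\) in terms of \(w\) and \(z\). The key observation is that both quantities are symmetric in the entries, so they do not depend on where \(z\) was inserted. Hence
\[
 S(x)=R+z,\qquad \sigma_2(x)=\sigma_2(w)+zR .
\]

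For the first claim, characteristic two gives \(z=S(x)+R=S(x)+S(w)\). This is uniquely determined by \(w\) and \(S(x)\), and no further work is needed.

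For the second claim, substitute into \(V\). Since squaring is additive in characteristic two, \((R+z)^2=R^2+z^2\), and so
\[
 V(x)=\sigma_2(w)+zR+R^2+z^2 .
\]
With \(w\) and \(V(x)\) fixed, \(z\) is therefore a root of
\[
 z^2+Rz+\bigl(R^2+\sigma_2(w)+V(x)\bigr)=0 .
\]
This is a monic polynomial of degree two over the field \(\F\), so it has at most two roots. Consequently at most two field values are consistent with \(w\) and \(V(x)\).

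The only step that needs care is the claim that the insertion position is irrelevant. It follows from the symmetry of \(S\) and \(\sigma_2\): the pairwise products involving \(z\) sum to \(z\) times the sum of the remaining entries, which is \(zR\). The same reasoning covers the degenerate case of an empty \(w\), where \(R=0\) and \(\sigma_2(w)=0\). Nothing else is an obstacle; the statement asks only for "at most two," so there is no need to analyse whether the quadratic has roots, which in characteristic two is a trace condition.
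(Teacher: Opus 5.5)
Your proof is correct and follows the paper's computation, reaching the same identity \(V(x)=\sigma_2(w)+R^2+z^2+Rz\). The only difference is the final count. You use the fact that a monic quadratic has at most two roots. The paper instead treats \(z\mapsto z^2+Rz\) as an \(\F_2\)-linear map with kernel \(\{0\}\) or \(\{0,R\}\), a form it reuses later to enumerate candidates by Gaussian elimination.
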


\begin{proof}
The sum identity is
\[
 S(x)=R+z,
\]
so \(z=S(x)+R\), where subtraction equals addition in characteristic two.
Moreover,
\[
 \sigma_2(x)=\sigma_2(w)+Rz
\]
and \((R+z)^2=R^2+z^2\).  Hence
\begin{equation}
 V(x)=\sigma_2(w)+R^2+z^2+Rz.
 \label{eq:one-deletion-V}
\end{equation}
For fixed \(w\) and \(V(x)\), this prescribes the value of the
\(\F_2\)-linear map
\[
 L_R(z)=z^2+Rz.
\]
If \(R=0\), its kernel is \(\{0\}\).  If \(R\ne0\), the equation
\(z^2+Rz=z(z+R)=0\) shows that its kernel is \(\{0,R\}\).  Every nonempty
fiber of a linear map is a coset of its kernel, so
\eqref{eq:one-deletion-V} has at most two solutions.
\end{proof}

\begin{lemma}[Two deleted values]
\label{lem:two-deletion-nonlinear}
Let \(w\in\mathsf D_2(x)\), let \(\{a,b\}\) be the unordered multiset of
deleted values, and put \(R=S(w)\).  Then \(w\), \(S(x)\), and \(V(x)\)
determine \(\{a,b\}\) uniquely, including when \(a=b\).
\end{lemma}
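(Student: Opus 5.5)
The plan is to show that the pair of checks \(S(x)\) and \(V(x)\), once corrected by quantities computable from \(w\), gives both elementary symmetric functions of the two deleted values. The multiset \(\{a,b\}\) is then the root multiset of a known monic quadratic.

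First, write \(x\) as a rearrangement of \(w\) with the two values \(a,b\) inserted. Since \(S\) and \(\sigma_2\) are symmetric, positions are irrelevant, and we may treat \(x\) as the multiset union of the entries of \(w\) and \(\{a,b\}\). The sum gives \(S(x)=R+a+b\), so
\[
 s:=a+b=S(x)+R
\]
is determined by \(w\) and \(S(x)\).

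Next, expand the second elementary symmetric function over the split of \(x\) into the entries of \(w\) and the pair \(\{a,b\}\). Pairs inside \(w\) contribute \(\sigma_2(w)\). Mixed pairs contribute \(R(a+b)=Rs\). The pair \(\{a,b\}\) itself contributes \(ab\). Hence
\[
 \sigma_2(x)=\sigma_2(w)+Rs+ab .
\]
By definition \(V(x)=\sigma_2(x)+S(x)^2\), and characteristic two turns subtraction into addition, so
\[
 p:=ab=V(x)+S(x)^2+\sigma_2(w)+Rs .
\]
This \(p\) is again computable from \(w\), \(S(x)\), and \(V(x)\).

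Finally, \(a\) and \(b\) satisfy
\[
 (T+a)(T+b)=T^2+sT+p\in\F[T].
\]
Because \(\F[T]\) is a unique factorization domain, the monic quadratic \(T^2+sT+p\) determines its multiset of roots in \(\F\). This holds whether or not \(a=b\); in the case \(a=b\) the polynomial is \((T+a)^2\) and the root multiset is \(\{a,a\}\). Thus \(\{a,b\}\) is uniquely determined.

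I expect no real obstacle. The only care points are bookkeeping the cross term \(Rs\) correctly and noting that uniqueness comes from factorization rather than from solving the quadratic. Indeed, in characteristic two the usual discriminant formula is unavailable, and when \(s=0\) the root is the unique square root of \(p\) (with multiplicity two).
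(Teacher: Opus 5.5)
Your proof is correct and follows the paper's argument essentially step for step. You recover $a+b=S(x)+R$, expand $\sigma_2(x)=\sigma_2(w)+R(a+b)+ab$ to isolate $ab$, and read off $\{a,b\}$ as the root multiset of $T^2+(a+b)T+ab$. Your appeal to unique factorization in $\F[T]$ covers the repeated-root case uniformly, whereas the paper adds a separate Frobenius remark for $a=b$; the content is otherwise the same.
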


\begin{proof}
First,
\begin{equation}
 \eta:=a+b=S(x)+R.
 \label{eq:two-deletion-sum}
\end{equation}
The pairs contributing to \(\sigma_2(x)\) split into pairs inside \(w\), pairs
joining one deleted value to one symbol of \(w\), and the pair \(ab\).
Consequently
\[
 \sigma_2(x)=\sigma_2(w)+\eta R+ab.
\]
Since \(\sigma_2(x)=V(x)+S(x)^2\) in characteristic two, the product is
\begin{equation}
 ab=V(x)+S(x)^2+\sigma_2(w)+\eta R.
 \label{eq:two-deletion-product}
\end{equation}
The deleted multiset is therefore the root multiset of
\[
 Z^2+\eta Z+ab.
\]
This also covers a repeated deleted value.  If \(a=b\), then \(\eta=0\) and
\(ab=a^2\); Frobenius squaring is a bijection of the finite field \(\F\),
so the repeated root is determined uniquely.
\end{proof}

The recovery in both lemmas is algorithmic in the public binary basis.  An
equation \(z^2+Rz=c\) is inverted directly when \(R=0\); when \(R\ne0\), the
substitution \(z=Ry\) reduces it to the \(\F_2\)-linear system
\(y^2+y=c/R^2\).  The quadratic in Lemma~\ref{lem:two-deletion-nonlinear}
has the same form after scaling when \(\eta\ne0\), while \(\eta=0\) is
inverted by the Frobenius automorphism.  Gaussian elimination over
\(\F_2\) therefore lists the relevant value or values in time
polynomial in \(\log q\).

Since \(\sigma_2=V+S^2\), the pairs \((S,V)\) and
\((\sigma_1,\sigma_2)\) determine one another exactly.  Thus the preceding
lemmas are a change of coordinates for the first two elementary symmetric
checks, with the one-deletion list size made explicit.

\subsection{A Tenengolts locator for one insertion}

In the one-deletion split used later, equality of the sum checks determines
the missing field value.  What remains is to distinguish its candidate
insertion positions.  The weak-ascent word below transfers that ordered
insertion problem to a binary VT checksum.

For every \(m\ge1\) and \(x\in\F^m\), define the weak-ascent word
\(\alpha(x)\in\{0,1\}^{m-1}\) by
\[
 \alpha(x)_i=\mathbf 1[x_i\preceq x_{i+1}],
 \qquad 1\le i<m.
\]
Thus \(\alpha(x)\) is empty when \(m=1\).  For \(n\ge2\) and
\(x\in\F^n\), define its checksum by
\begin{equation}
 T_n(x)=\sum_{i=1}^{n-1}i\,\alpha(x)_i\pmod n.
 \label{eq:tenengolts-checksum}
\end{equation}
Only \(\alpha\) uses the public order; \(S\) uses the field addition.  This
locator argument is a direct ordered-alphabet adaptation of the nonbinary
single-deletion checksum of Tenengolts \cite{Tenengolts1984}.  We include the
exact argument needed below.

\begin{lemma}[Binary VT separation]
\label{lem:binary-vt-separation}
Let \(c,d\in\{0,1\}^{\ell_{\mathrm{VT}}}\) be distinct words with a
common one-deletion descendant.  Then
\[
 \sum_{r=1}^{\ell_{\mathrm{VT}}} r c_r\not\equiv
 \sum_{r=1}^{\ell_{\mathrm{VT}}} r d_r\pmod{\ell_{\mathrm{VT}}+1}.
\]
\end{lemma}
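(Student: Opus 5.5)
We prove the classical Varshamov--Tenengolts fact: distinct binary words of length \(\ell:=\ell_{\mathrm{VT}}\) sharing a one-deletion descendant have different VT sums modulo \(\ell+1\). The strategy is to reconstruct each word from a common descendant \(u\in\{0,1\}^{\ell-1}\) and to compute exactly how the weighted sum changes with the insertion. Distinct insertion results will then be shown to have sums differing by a nonzero amount of absolute value at most \(\ell\).

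Write \(w_0=\sum_{r}r u_r\) and \(\mathrm{wt}(u)\) for the number of ones in \(u\). Suppose a bit \(s\) is inserted so that it lands at position \(p\in\{1,\ldots,\ell\}\). Let \(L_0,L_1\) count the zeros and ones of \(u\) to the left of \(p\), and \(R_0,R_1\) those to the right. Every one to the right of \(p\) moves up by one position. Hence the new sum is
\[
 w_0+R_1+s\,p .
\]
If \(s=0\), the increment is \(R_1\). If \(s=1\), then \(p=L_0+L_1+1\), and the increment is \(p+R_1=L_0+\mathrm{wt}(u)+1\).

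Next we order the possible increments. For zero insertions the increment \(R_1\) ranges over \(0,\ldots,\mathrm{wt}(u)\). For one insertions the increment \(\mathrm{wt}(u)+1+L_0\) ranges over \(\mathrm{wt}(u)+1,\ldots,\mathrm{wt}(u)+1+(\ell-1-\mathrm{wt}(u))=\ell\). So the two ranges are disjoint, and together they lie in \(\{0,\ldots,\ell\}\).

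The main point is that the increment determines the resulting word. For a zero insertion, inserting a 0 anywhere inside a run of zeros gives the same word, and \(R_1\) fixes which gap between consecutive ones receives the new 0. Symmetrically, for a one insertion, \(L_0\) fixes which gap between consecutive zeros receives the new 1.

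Thus the map from insertion results \(c\) to increments is injective, with values in \(\{0,\ldots,\ell\}\). Two distinct words \(c\ne d\) obtained from \(u\) therefore have sums differing by a nonzero integer of absolute value at most \(\ell\), which is nonzero modulo \(\ell+1\).

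The expected obstacle is only this injectivity bookkeeping, namely identifying insertions up to the run in which they land. The rest is direct counting.
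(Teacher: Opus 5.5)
Your proof is correct, but it is organized differently from the paper's.

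The paper writes \(c=I_i(z,\varepsilon)\) and \(d=I_j(z,\eta)\) with \(i\le j\), and computes one pairwise difference of the two weighted sums, \(i\varepsilon-j\eta+\sum_{r=i}^{j-1}z_r\). It then checks the four cases of \((\varepsilon,\eta)\). The difference can vanish only when \(\varepsilon=\eta\) and every intervening bit equals the inserted bit, in which case \(c=d\). Otherwise it is a nonzero integer of absolute value at most \(\ell_{\mathrm{VT}}\).

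You instead measure each outcome against the common descendant alone. You show that \(0\)-insertions give increments in \(\{0,\ldots,\mathrm{wt}(u)\}\), that \(1\)-insertions give increments in \(\{\mathrm{wt}(u)+1,\ldots,\ell\}\), and that the increment fixes the run receiving the new bit. This is Levenshtein's decoding argument, and it proves more than the lemma needs. The increment is a bijection from the \(\ell+1\) distinct single-insertion outcomes onto \(\{0,\ldots,\ell\}\). Hence \(u\) together with the residue recovers the deleted bit and its run, and every residue modulo \(\ell+1\) is used.

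The paper's pairwise computation is shorter and avoids the run bookkeeping. That suits its purpose, since only separation is needed and decoding is handled later by candidate enumeration. Its vanishing cases are the pairwise form of your observation that two insertions of the same bit into the same run give the same word.
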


\begin{proof}
Write \(c=I_i(z,\varepsilon)\) and \(d=I_j(z,\eta)\), where \(I_i\)
inserts the displayed bit in slot \(i\) of
\(z\in\{0,1\}^{\ell_{\mathrm{VT}}-1}\).  Interchanging the words if
necessary, assume \(i\le j\).  Direct expansion gives
\begin{equation}
 \sum_{r=1}^{\ell_{\mathrm{VT}}} r c_r
 -\sum_{r=1}^{\ell_{\mathrm{VT}}} r d_r
 =i\varepsilon-j\eta+\sum_{r=i}^{j-1}z_r.
 \label{eq:binary-vt-difference}
\end{equation}
If \(\varepsilon=\eta=0\), the right side is nonnegative and can vanish
only when all intervening bits are zero, in which case the two insertions
give the same word.  Otherwise it lies in
\(\{1,\ldots,\ell_{\mathrm{VT}}\}\).  If
\(\varepsilon=\eta=1\), the right side is nonpositive and can vanish only
when all intervening bits are one, again giving the same word; otherwise it
lies in \(\{-\ell_{\mathrm{VT}},\ldots,-1\}\).  In the two mixed cases it
also lies in one of these nonzero intervals.  A nonzero integer of absolute
value at most \(\ell_{\mathrm{VT}}\) cannot vanish modulo
\(\ell_{\mathrm{VT}}+1\).
\end{proof}

\begin{lemma}[A data deletion deletes one ascent bit]
\label{lem:ascent-deletion}
Let \(m\ge2\), let \(x\in\F^m\), and let
\(z\in\mathsf D_1(x)\).  Then
\(\alpha(z)\in\mathsf D_1(\alpha(x))\).
\end{lemma}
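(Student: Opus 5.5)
Let \(z\) arise from \(x\) by deleting coordinate \(j\in\{1,\ldots,m\}\), so that \(z=x_1\cdots x_{j-1}x_{j+1}\cdots x_m\). The plan is to exhibit explicitly a bit of \(\alpha(x)\) whose deletion yields \(\alpha(z)\), splitting into the boundary cases \(j=1\), \(j=m\) and the interior case \(1<j<m\). The ascent word depends only on comparisons of consecutive symbols. So the argument reduces to checking which consecutive pairs survive the deletion and which single new pair is created.

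\emph{Boundary cases.} If \(j=1\), then the consecutive pairs of \(z\) are exactly the pairs \((x_i,x_{i+1})\) with \(2\le i<m\). Hence \(\alpha(z)\) is \(\alpha(x)\) with its first bit deleted. Symmetrically, if \(j=m\), then \(\alpha(z)\) is \(\alpha(x)\) with its last bit deleted. When \(m=2\), both cases give the empty word \(\alpha(z)\), which is the descendant obtained by deleting the single bit of \(\alpha(x)\).

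\emph{Interior case \(1<j<m\).} The pairs of \(z\) to the left of the deletion are \((x_i,x_{i+1})\) for \(i\le j-2\), and those to the right are \((x_i,x_{i+1})\) for \(i\ge j+1\). These contribute the bits \(\alpha(x)_1\cdots\alpha(x)_{j-2}\) and \(\alpha(x)_{j+1}\cdots\alpha(x)_{m-1}\) unchanged. The only new comparison is the pair \((x_{j-1},x_{j+1})\), which replaces the two bits \(\alpha(x)_{j-1}\) and \(\alpha(x)_j\). Put \(u=\alpha(x)_{j-1}\), \(v=\alpha(x)_j\), and \(b=\mathbf 1[x_{j-1}\preceq x_{j+1}]\). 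It suffices to show that \(b\in\{u,v\}\): then deleting whichever of the two bits is not equal to \(b\) (either one, if \(u=v=b\)) produces exactly \(\alpha(z)\).

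This is the key step, and it rests on the transitivity and totality of the public order \(\preceq\). If \(u=v=1\), then \(x_{j-1}\preceq x_j\preceq x_{j+1}\), so \(b=1\). If \(u=v=0\), then \(x_{j-1}\succ x_j\succ x_{j+1}\); strictness follows from totality, and transitivity then gives \(x_{j-1}\succ x_{j+1}\), so \(b=0\). If \(u\ne v\), then \(\{u,v\}=\{0,1\}\) and \(b\in\{u,v\}\) holds automatically. The only care needed is that the weak comparison \(\preceq\) has the strict relation \(\succ\) as its exact negation, which holds because \(\preceq\) is a total order. Beyond that, the main bookkeeping concern is indexing the surviving bits correctly, so that the resulting word has length \(m-2\) and is a genuine one-deletion descendant of \(\alpha(x)\).
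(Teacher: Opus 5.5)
Your proof is correct and takes the same route as the paper's. Endpoint deletions drop the corresponding endpoint comparison, and for an interior deletion the new bit \(\mathbf 1[x_{j-1}\preceq x_{j+1}]\) agrees with one of the two old bits by transitivity (with totality giving strictness in the all-descent case), so deleting the other old bit yields \(\alpha(z)\); your explicit indexing of the surviving bits only spells out what the paper leaves implicit.
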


\begin{proof}
Deleting an endpoint removes the corresponding endpoint comparison.  For an
internal deletion, write the three local symbols as \(r,a,s\).  The two old
comparison bits and the new one are
\[
 c=\mathbf 1[r\preceq a],\qquad
 d=\mathbf 1[a\preceq s],\qquad
 e=\mathbf 1[r\preceq s].
\]
If \(c=d\), transitivity of the total order gives \(e=c=d\).  If
\(c\ne d\), the binary bit \(e\) equals one of \(c,d\).  Deleting the other
old bit therefore gives the new ascent word.
\end{proof}

\begin{lemma}[The recovered symbol and ascent word fix the insertion]
\label{lem:ascent-insertion-unique}
Fix \(z\in\F^{n-1}\) and \(a\in\F\).  Distinct words obtained by inserting
\(a\) once into \(z\) have distinct ascent words.
\end{lemma}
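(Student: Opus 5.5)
We prove the contrapositive: two distinct insertions of \(a\) into \(z\) cannot have the same ascent word.

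\emph{Normalization.} Let \(u=I_i(z,a)\) and \(u'=I_j(z,a)\), where \(I_i\) places \(a\) so that it becomes coordinate \(i\) of the length-\(n\) word, and assume \(i<j\). Then \(u\) and \(u'\) agree outside positions \(i,\ldots,j\). On that window,
\[
 u_i\cdots u_j=a\,z_i\cdots z_{j-1},
 \qquad
 u'_i\cdots u'_j=z_i\cdots z_{j-1}\,a .
\]
If \(z_i=\cdots=z_{j-1}=a\), the two words coincide, contrary to distinctness. So some \(z_r\) with \(i\le r\le j-1\) differs from \(a\).

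\emph{Extracting an ascent bit.} Let \(r\) be the least index in \([i,j-1]\) with \(z_r\ne a\). Then \(z_i=\cdots=z_{r-1}=a\), and we compare the two words at the bit indexed \(r\).
\begin{itemize}
\item In \(u\), we have \(u_r=z_{r-1}=a\) when \(r>i\), and \(u_r=a\) also when \(r=i\). In both cases \(u_{r+1}=z_r\), so \(\alpha(u)_r=\mathbf 1[a\preceq z_r]\).
\item In \(u'\), we have \(u'_r=z_r\). The next symbol is \(u'_{r+1}=z_{r+1}\) if \(r+1<j\), and \(u'_{r+1}=a\) if \(r+1=j\).
\end{itemize}

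\emph{The boundary case \(r=j-1\).} Here \(\alpha(u')_r=\mathbf 1[z_r\preceq a]\). Since \(z_r\ne a\) and \(\preceq\) is a total order, exactly one of \(a\preceq z_r\) and \(z_r\preceq a\) holds. Hence \(\alpha(u)_r\ne\alpha(u')_r\).

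\emph{The general case \(r<j-1\).} The bit at index \(r\) no longer compares the same pair in both words, so I would instead follow the window until \(a\) is reinserted. Suppose for contradiction that \(\alpha(u)=\alpha(u')\), and set \(c=\mathbf 1[a\preceq z_r]\); by symmetry of the order, assume \(c=1\), so \(a\prec z_r\).

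On indices \(r,\ldots,j-1\), the word \(u\) reads \(a,z_r,\ldots,z_{j-1}\), while \(u'\) reads \(z_r,\ldots,z_{j-1},a\). Equality of the bits on this stretch says
\[
 \mathbf 1[a\preceq z_r]=\mathbf 1[z_r\preceq z_{r+1}],\quad
 \mathbf 1[z_r\preceq z_{r+1}]=\mathbf 1[z_{r+1}\preceq z_{r+2}],\ \ldots
\]
So the comparison bits along \(z_r,\ldots,z_{j-1}\) are all equal to \(c=1\). The stretch \(z_r\preceq\cdots\preceq z_{j-1}\) is therefore weakly increasing.

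The last comparison in \(u'\), at index \(j-1\), is \(\mathbf 1[z_{j-1}\preceq a]\), and it must equal the corresponding bit of \(u\), namely \(\mathbf 1[z_{j-2}\preceq z_{j-1}]=1\). This gives \(z_{j-1}\preceq a\). Combined with the chain, \(a\prec z_r\preceq z_{j-1}\preceq a\), which is a contradiction. The case \(c=0\) is symmetric with strict descents: the chain becomes \(z_r\succ\cdots\succ z_{j-1}\) and the last bit gives \(z_{j-1}\succ a\), contradicting \(z_r\prec a\).

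\emph{Main obstacle.} The delicate point is the index bookkeeping in the sliding comparison, in particular confirming that the bit propagation chain really links \(\alpha(u)\) and \(\alpha(u')\) entrywise. Writing both words explicitly on positions \(r,\ldots,j\) handles this. Bits outside \([i-1,j]\) coincide anyway, and the boundary bits at \(i-1\) and \(j\) are not needed.
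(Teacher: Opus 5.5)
Your proof is correct and is essentially the paper's argument. Trimming the leading copies of \(a\) from the crossed block \(z_i\cdots z_{j-1}\), you show that equal ascent words force every comparison around the cycle \(a,z_r,\ldots,z_{j-1},a\) to take one common value, which a total order forbids; your boundary case \(r=j-1\) is just the shortest instance of that cycle, and the only difference is that you close the cycle with the strict comparison \(a\prec z_r\) (or \(z_r\prec a\)) instead of also trimming trailing copies of \(a\) and invoking antisymmetry.
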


\begin{proof}
Compare two insertion slots, and let
\(B=(b_1,\ldots,b_d)\) be the nonempty block crossed when the inserted copy
of \(a\) moves from the first slot to the second.  Crossing a copy of \(a\)
does not change the resulting word, so leading and trailing copies of \(a\)
may be removed from \(B\).  Because the two resulting words are distinct,
this leaves \(b_1,b_d\ne a\).

Equality of the two ascent words would equate the shifted lists
\[
\begin{split}
 &\mathbf 1[a\preceq b_1],\mathbf 1[b_1\preceq b_2],\ldots,
   \mathbf 1[b_{d-1}\preceq b_d],\\
 &\mathbf 1[b_1\preceq b_2],\ldots,
   \mathbf 1[b_{d-1}\preceq b_d],\mathbf 1[b_d\preceq a].
\end{split}
\]
Thus all comparisons around the cycle \(a,b_1,\ldots,b_d,a\) would have the
same value.  If that value were one, antisymmetry would force every symbol
in the cycle to equal \(a\), contrary to \(b_1\ne a\).  If it were zero,
the cycle would be strictly decreasing, which is impossible in a total
order.
\end{proof}

\begin{proposition}[Tenengolts one-deletion locator]
\label{prop:tenengolts-one-deletion}
Let \(n\ge2\).  If distinct \(x,y\in\F^n\) have a common one-deletion
descendant, then
\[
 (S(x),T_n(x))\ne(S(y),T_n(y)).
\]
Equivalently, every fiber of \((S,T_n)\) corrects one deletion.
\end{proposition}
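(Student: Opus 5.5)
We argue by contradiction, chaining the three lemmas just proved. Suppose distinct \(x,y\in\F^n\) share a common one-deletion descendant \(w\in\F^{n-1}\) and satisfy \(S(x)=S(y)\) and \(T_n(x)=T_n(y)\). The plan is first to show that \(x\) and \(y\) arise by inserting the same value into \(w\), and then to transfer the remaining ambiguity to the ascent words, where Lemma~\ref{lem:binary-vt-separation} applies.

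\emph{Step 1 (same inserted value).} By Lemma~\ref{lem:one-deletion-nonlinear}, the deleted value is \(S(x)+S(w)\) on the \(x\)-side and \(S(y)+S(w)\) on the \(y\)-side. Since \(S(x)=S(y)\), both equal a common field element \(a\). Hence \(x\) and \(y\) are two insertions of the same symbol \(a\) into the same word \(w\). Because \(x\ne y\), Lemma~\ref{lem:ascent-insertion-unique} gives \(\alpha(x)\ne\alpha(y)\).

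\emph{Step 2 (a common ascent descendant).} Applying Lemma~\ref{lem:ascent-deletion} to each side gives \(\alpha(w)\in\mathsf D_1(\alpha(x))\cap\mathsf D_1(\alpha(y))\). Here \(m=n\ge2\), so the lemma applies. Thus \(\alpha(x)\) and \(\alpha(y)\) are distinct binary words of length \(\ell_{\mathrm{VT}}=n-1\) with a common one-deletion descendant. Note that \(\alpha(w)\) has length \(n-2\), which may be zero, and Lemma~\ref{lem:binary-vt-separation} still covers that case.

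\emph{Step 3 (contradiction).} Lemma~\ref{lem:binary-vt-separation} shows that the weighted sums \(\sum_i i\,\alpha(x)_i\) and \(\sum_i i\,\alpha(y)_i\) differ modulo \(\ell_{\mathrm{VT}}+1=n\). By \eqref{eq:tenengolts-checksum}, this says exactly that \(T_n(x)\ne T_n(y)\), contradicting the assumption.

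The equivalent formulation follows from the definition: a fiber of \((S,T_n)\) corrects one deletion precisely when no two distinct members share a one-deletion descendant, which is what we have shown.

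The main obstacle is Step 1. Lemma~\ref{lem:ascent-insertion-unique} needs the inserted symbol to be the same on both sides, and equality of the sum checks is exactly what supplies this. The other delicate point is making sure the VT modulus \(\ell_{\mathrm{VT}}+1\) matches the modulus \(n\) in \(T_n\); it does, because the ascent words have length \(n-1\).
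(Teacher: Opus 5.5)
Your proof is correct and is essentially the paper's argument: equal sums force the same inserted value, Lemma~\ref{lem:ascent-deletion} gives the common ascent descendant \(\alpha(w)\), and Lemmas~\ref{lem:binary-vt-separation} and~\ref{lem:ascent-insertion-unique} finish. The only difference is the order of those last two lemmas. You first use Lemma~\ref{lem:ascent-insertion-unique} to get \(\alpha(x)\ne\alpha(y)\) and then the VT separation to get \(T_n(x)\ne T_n(y)\). The paper instead uses the VT separation to force \(\alpha(x)=\alpha(y)\) and then Lemma~\ref{lem:ascent-insertion-unique} to force \(x=y\).
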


\begin{proof}
Let \(z\) be a common one-deletion descendant of \(x\) and \(y\), and let
\(a,b\) be their respective deleted values.  If \(S(x)=S(y)\), then
\[
 S(z)+a=S(x)=S(y)=S(z)+b,
\]
so \(a=b\).  Lemma~\ref{lem:ascent-deletion} shows that \(\alpha(x)\) and
\(\alpha(y)\) have the common one-deletion descendant \(\alpha(z)\).  If
also \(T_n(x)=T_n(y)\), Lemma~\ref{lem:binary-vt-separation}, applied to
binary words of length \(n-1\), forces \(\alpha(x)=\alpha(y)\).  The two
data words are obtained by inserting the same recovered value into the same
word \(z\); Lemma~\ref{lem:ascent-insertion-unique} then forces \(x=y\), a
contradiction.
\end{proof}

The locator also supplies the final step in the two-deletion insertion
enumeration used in Section~\ref{sec:graph}.

\begin{lemma}[A prescribed insertion order is complete]
\label{lem:prescribed-insertion-order}
Fix a deterministic listing \((\zeta_1,\ldots,\zeta_k)\) of a multiset \(U\).
Every word obtained by inserting the multiset \(U\) into a word \(z\) can
be obtained by inserting its symbols in precisely this listed order.
\end{lemma}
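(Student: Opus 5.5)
The plan is to argue by induction on \(k=|U|\), using the obvious reformulation: a word \(x\) arises from \(z\) by inserting the multiset \(U\) exactly when \(z\) is a subsequence of \(x\) and the complementary symbols of \(x\) form the multiset \(U\). In other words, there is an embedding of \(z\) into \(x\) whose complement has value multiset \(U\). The goal is to produce a chain
\[
 z=x^{(0)},\ x^{(1)},\ \ldots,\ x^{(k)}=x,
\]
where \(x^{(j)}\) is obtained from \(x^{(j-1)}\) by inserting one copy of \(\zeta_j\).

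The case \(k=0\) is trivial. For \(k\ge1\), fix an embedding of \(z\) into \(x\) with complementary position set \(P\), so that the values of \(x\) on \(P\) are \(U\). Because \(\zeta_k\) lies in \(U\), some position \(p\in P\) carries \(x_p=\zeta_k\). Let \(x'\) be \(x\) with coordinate \(p\) deleted.

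Then \(z\) still embeds into \(x'\), since the embedding avoided \(p\). Its complement is \(P\setminus\{p\}\), whose value multiset is \(U\) with one copy of \(\zeta_k\) removed. That multiset is exactly the one listed by \((\zeta_1,\ldots,\zeta_{k-1})\).

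By the induction hypothesis, \(x'\) is reached from \(z\) by inserting \(\zeta_1,\ldots,\zeta_{k-1}\) in that order. Reinserting \(\zeta_k\) at slot \(p\) then returns \(x\), which completes the chain.

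The main point needing care is that the listing is of a multiset, so repeated values must be matched to distinct positions of \(P\). Peeling off the last listed symbol and removing exactly one copy at each step keeps the multiplicities consistent.

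An alternative, non-inductive route is also available. Assign the listed symbols bijectively to the positions of \(P\) by matching equal values, which is possible because the value multisets agree. Then insert the listed symbols in order, each into the slot determined by its assigned position relative to the positions already present. Either route is elementary, and the only obstacle is this bookkeeping of repeated values.
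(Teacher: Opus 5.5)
Your proof is correct and is essentially the paper's argument. The paper fixes an embedding of \(z\), matches the listed copies of \(U\) to the complementary positions, and inserts them in listed order at their final relative locations; this is exactly your ``alternative'' route, and your induction peeling off \(\zeta_k\) simply unrolls the same matching one position at a time, with the multiplicity bookkeeping handled equivalently (one copy removed per step, versus the paper's remark that equal copies are interchangeable).
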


\begin{proof}
Choose an embedding of \(z\) in the final word and assign the listed copies
of \(U\) to the complementary positions.  Insert those positions one at a
time in the prescribed value order, always at their final location relative
to the symbols already present.  Equal copies are interchangeable, so
multiplicities cause no exception.
\end{proof}

\begin{lemma}[One last completion at a prescribed checksum]
\label{lem:last-insertion-gate}
Let \(n\ge2\).  Fix \(w\in\F^{n-1}\), \(b\in\F\), and
\(\tau\in\mathbb Z_n\).  Among the
distinct words obtained by inserting \(b\) once into \(w\), at most one has
\(T_n=\tau\).
\end{lemma}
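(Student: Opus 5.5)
Suppose two distinct words \(x,y\in\F^n\) both arise by inserting the same value \(b\) once into \(w\), and both have \(T_n(x)=T_n(y)=\tau\). We derive a contradiction from the results of the preceding subsection. This is essentially the second half of the proof of Proposition~\ref{prop:tenengolts-one-deletion}, with the sum check no longer needed because the inserted value is prescribed.

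The plan has three steps.
\begin{enumerate}[leftmargin=*]
\item \emph{Common descendant of the ascent words.} By construction \(w\in\mathsf D_1(x)\cap\mathsf D_1(y)\). Since \(n\ge2\), Lemma~\ref{lem:ascent-deletion} applies to both words with \(m=n\). Hence \(\alpha(w)\in\mathsf D_1(\alpha(x))\cap\mathsf D_1(\alpha(y))\), where \(\alpha(x),\alpha(y)\in\{0,1\}^{n-1}\).
\item \emph{VT checksums force equal ascent words.} By definition \eqref{eq:tenengolts-checksum}, \(T_n\) is exactly the binary VT sum \(\sum_r r\,\alpha_r\) reduced modulo \(n=\ell_{\mathrm{VT}}+1\), with \(\ell_{\mathrm{VT}}=n-1\). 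If \(\alpha(x)\ne\alpha(y)\), Lemma~\ref{lem:binary-vt-separation} would make these sums incongruent modulo \(n\). That contradicts \(T_n(x)=T_n(y)=\tau\), so \(\alpha(x)=\alpha(y)\).
\item \emph{Equal ascent words force equal insertions.} Now \(x\) and \(y\) are two insertions of the same symbol \(b\) into the same word \(w\), and they have equal ascent words. Lemma~\ref{lem:ascent-insertion-unique} gives \(x=y\), contradicting distinctness.
\end{enumerate}

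One degenerate case needs a check. When \(n=2\), the ascent words have length \(1\) and \(\alpha(w)\) is empty. Lemma~\ref{lem:binary-vt-separation} still applies with \(\ell_{\mathrm{VT}}=1\), since two distinct single bits have VT sums \(0\) and \(1\), which differ modulo \(2\).

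The main point to verify is this alignment of moduli: \(T_n\) is taken modulo \(n\) while the ascent words have length \(n-1\), which is exactly what Lemma~\ref{lem:binary-vt-separation} requires. Beyond that the argument is a direct chain of the three cited lemmas.
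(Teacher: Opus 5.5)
Your proof is correct and essentially the same as the paper's. The paper notes that the two insertions share the descendant \(w\) and the field sum \(S(w)+b\), then cites Proposition~\ref{prop:tenengolts-one-deletion} directly; you instead inline that proposition's argument via Lemmas~\ref{lem:ascent-deletion}, \ref{lem:binary-vt-separation}, and~\ref{lem:ascent-insertion-unique}, correctly omitting its sum step because the inserted value is already fixed.
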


\begin{proof}
Two distinct insertions are one-deletion-confusable through \(w\) and have
the same field sum \(S(w)+b\).  Proposition~\ref{prop:tenengolts-one-deletion}
therefore forces their \(T_n\)-values to differ.
\end{proof}

In summary, given the surviving descendant, full \(S\) recovers one missing
value, full \(V\) leaves a list of at most two candidates, full \((S,V)\)
recovers the unordered pair of two missing values, and \(T_n\) makes a final
prescribed insertion unique.  The
remaining position ambiguity is represented by the sparse conflict graphs
of Section~\ref{sec:graph}.

\section{Sparse residual conflicts and one computable color}
\label{sec:graph}

The checks in Section~\ref{sec:values} recover the deleted values; their
positions can remain ambiguous.  We now encode the position ambiguities that
can survive the three punctured payload views.  Each graph below is exact for
its stated data-descendant and payload-equality test; because it ignores the
separately authenticated tags and coordinate markers, it may be a supergraph
of the corresponding full-codeword collisions.  This is sufficient: every
such collision is an edge of one of the three sparse graphs, and one proper
color of their union separates all three views.

Throughout this section \(n\ge2\).  Write \(q=2^s\), identify \(\F=\F_q\) with
\(\F_2^s\) through a fixed public basis, and put
\[
  u=\lceil\log_2 n\rceil,
  \qquad N=2^u.
\]
The field operations and the public total order from Section~\ref{sec:values}
remain distinct from this binary coordinate representation.

\subsection{Tail projections and the three graphs}

We define the bit slices before using them in an edge predicate.  Choose
integers \(0\le h_S,h_V\le s\), and let
\[
 \pi_S:\F_2^s\longrightarrow\F_2^{s-h_S},
 \qquad
 \pi_V:\F_2^s\longrightarrow\F_2^{s-h_V}
\]
be fixed surjective coordinate projections, so that
\(\dim\ker\pi_S=h_S\) and \(\dim\ker\pi_V=h_V\).  Set
\[
 S_{\mathrm{tail}}(x)=\pi_S(S(x)),
 \qquad
 V_{\mathrm{tail}}(x)=\pi_V(V(x)).
\]
The deterministic construction in Section~\ref{sec:construction} will use
\begin{equation}
 s=10u+22,
 \qquad h_S=2u+3,
 \qquad h_V=7u+16.
 \label{eq:deterministic-tail-dimensions}
\end{equation}
Thus its two tails have respective lengths \(8u+19\) and \(3u+6\).
Until the palette calculation below, only \(h_S\) is relevant.

Define \(G_V\) to be the simple undirected graph with vertex set \(\F^n\)
in which two distinct vertices \(x,y\) are adjacent exactly when
\begin{equation}
 \mathsf D_1(x)\cap\mathsf D_1(y)\ne\varnothing
 \quad\text{and}\quad V(x)=V(y).
 \label{eq:GV-adjacency}
\end{equation}
This graph represents the one-deletion ambiguity in the payload view that
reveals the full nonlinear check \(V\).

Define \(G_H\) to be the simple undirected graph with the same vertex set,
where distinct \(x,y\) are adjacent exactly when
\begin{equation}
 \begin{split}
  &\mathsf D_1(x)\cap\mathsf D_1(y)\ne\varnothing,\\
  &\bigl(S_{\mathrm{tail}}(x),V_{\mathrm{tail}}(x),T_n(x)\bigr)
   =
   \bigl(S_{\mathrm{tail}}(y),V_{\mathrm{tail}}(y),T_n(y)\bigr).
 \end{split}
 \label{eq:GH-adjacency}
\end{equation}
It represents the one-deletion ambiguity in the complementary tail view.

Finally, define \(J_2^T\) to be the simple undirected graph on \(\F^n\)
in which distinct \(x,y\) are adjacent exactly when
\begin{equation}
 \begin{split}
  &\mathsf D_2(x)\cap\mathsf D_2(y)\ne\varnothing,\\
  &(S(x),V(x),T_n(x))=(S(y),V(y),T_n(y)).
 \end{split}
 \label{eq:J2-adjacency}
\end{equation}
This is the remaining position conflict when two data symbols are deleted
and all three payloads survive; it is the final residual graph needed by the
construction.

Let
\begin{equation}
 G_\star=(\F^n,E(G_V)\cup E(G_H)\cup E(J_2^T)).
 \label{eq:union-residual-graph}
\end{equation}
Repeated edges in the union are retained only once.

\subsection{Fiber bounds and individual neighborhood oracles}

The following result records both the degree bounds and the local access
needed by the coloring algorithms.  An \emph{individual neighborhood oracle}
means an algorithm that, given one vertex, lists its distinct neighbors
without constructing the graph on \(q^n\) vertices.  Once that list is
deduplicated and sorted by the public \(sn\)-bit message name, it also gives a
degree/indexed-neighbor oracle: query zero returns the degree, and query
\(i\ge1\) returns the \(i\)-th neighbor or a distinguished null value when
\(i\) exceeds the degree.

\begin{lemma}[Residual degrees and neighborhood enumeration]
\label{lem:three-residual-degrees}
For every \(x\in\F^n\),
\begin{equation}
 \begin{aligned}
  \deg_{G_V}(x)&\le 2n^2,\\
  \deg_{G_H}(x)&\le n2^{h_S},\\
  \deg_{J_2^T}(x)&\le\binom n2(n-2).
 \end{aligned}
 \label{eq:three-residual-degrees}
\end{equation}
All three neighborhoods are individually enumerable in time polynomial in
\(n\), \(s\), and \(2^{h_S}\).  In particular, under
\eqref{eq:deterministic-tail-dimensions} they are enumerable in time
polynomial in \(n\) and \(s=\log_2 q\).
\end{lemma}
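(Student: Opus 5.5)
We prove each bound by the same scheme.  Fix \(x\) and a neighbor \(y\), so \(x\) and \(y\) share a common descendant \(w\).  Parametrize \(y\) by \(w\), by the deleted value(s) of \(y\) relative to \(w\), and by the insertion slots.  Then use the equalities forced by the edge predicate to cap the number of parameter choices.  The same parametrization, read as a loop, gives the neighborhood oracle: enumerate the admissible \(w\) and the admissible values, perform the insertions, test the edge predicate directly, then deduplicate and sort.

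\emph{The graph \(G_V\).}  Every \(w\in\mathsf D_1(x)\) is obtained by deleting one of the \(n\) coordinates of \(x\), so there are at most \(n\) choices of \(w\).  Fix \(w\).  Since \(V(y)=V(x)\), Lemma~\ref{lem:one-deletion-nonlinear} applied to \(y\) with descendant \(w\) leaves at most two candidate values \(z\) for the symbol deleted from \(y\).  These candidates are computed by the \(\F_2\)-linear solve described after Lemma~\ref{lem:two-deletion-nonlinear}.  Each value can be inserted into one of at most \(n\) slots.  This gives \(\deg_{G_V}(x)\le n\cdot 2\cdot n=2n^2\), and the enumeration runs in time polynomial in \(n\) and \(s\).

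\emph{The graph \(G_H\).}  Again fix one of the at most \(n\) descendants \(w\).  Equality of \(S_{\mathrm{tail}}\) means \(S(y)\in S(x)+\ker\pi_S\), a coset of size \(2^{h_S}\).  By Lemma~\ref{lem:one-deletion-nonlinear}, each candidate sum determines the deleted value \(z=S(y)+S(w)\), so there are at most \(2^{h_S}\) values.  For fixed \(w\) and \(z\), all words obtained by inserting \(z\) into \(w\) share the same \(S\).  By Lemma~\ref{lem:last-insertion-gate}, at most one of them has \(T_n=T_n(x)\).  Hence \(\deg_{G_H}(x)\le n2^{h_S}\).  The oracle loops over \(w\), over the \(2^{h_S}\) kernel elements, and over the \(n\) insertion slots, then tests the full predicate \eqref{eq:GH-adjacency}, including the \(V_{\mathrm{tail}}\) equality.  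Its running time is polynomial in \(n\), \(s\), and \(2^{h_S}\).  Under \eqref{eq:deterministic-tail-dimensions}, \(2^{h_S}=8\cdot 2^{2u}\le 32n^2\), so the time is polynomial in \(n\) and \(s\).

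\emph{The graph \(J_2^T\).}  There are at most \(\binom n2\) descendants \(w\in\mathsf D_2(x)\), one for each pair of deleted positions.  Fix \(w\).  Since \(S(y)=S(x)\) and \(V(y)=V(x)\), Lemma~\ref{lem:two-deletion-nonlinear} determines the deleted multiset \(\{a,b\}\) uniquely.  By Lemma~\ref{lem:prescribed-insertion-order}, \(y\) is reached by inserting \(a\) first and then \(b\), taken in a fixed listing order.  Inserting \(a\) into \(w\in\F^{n-2}\) produces at most \(n-1\) intermediate words.  For each intermediate word, Lemma~\ref{lem:last-insertion-gate} allows at most one insertion of \(b\) with \(T_n=T_n(x)\).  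Every word with these insertions has the correct \(S\) and \(V\), because the multiset of its symbols is fixed.  Counting gives \(\binom n2(n-1)\).

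\emph{The obstacle.}  The main obstacle is sharpening \(\binom n2(n-1)\) to the stated \(\binom n2(n-2)\); the counts for \(G_V\) and \(G_H\) already match.  My first attempt is to show that \(x\) itself appears among the completions for each \(w\), and to exclude it because it is not a neighbor.  The point is that \(x\) is recovered from its own \(w\) by a valid listed insertion order, by Lemma~\ref{lem:prescribed-insertion-order}.  This means some intermediate word \(w'\), obtained from \(w\) by inserting \(a\), has its unique \(T_n\)-matching completion equal to \(x\).  That \(w'\) therefore contributes no neighbor, which removes one of the \(n-1\) intermediates and leaves at most \(n-2\) per \(w\).

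Several details need checking.  When \(a=b\), the first insertion must be taken in a canonical way.  Also, different intermediate words may coincide; this only helps, since after deduplication the bound becomes \(\binom n2(n-2)\).  If this exclusion argument fails, I would count slots more carefully, so that insertions of \(a\) adjacent to equal symbols are merged.  The oracle loops over \(w\), performs the quadratic solve, loops over the intermediate words and the slots for \(b\), and tests \eqref{eq:J2-adjacency}; this runs in time polynomial in \(n\) and \(s\).
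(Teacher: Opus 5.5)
Your proof is correct and follows the paper's argument essentially step by step, including the refinement you flagged as the obstacle. The paper likewise notes that, for fixed \(w\), the bucket of words with prescribed \((w,S,V,T_n)\) has at most \(n-1\) members and contains \(x\) itself, so your exclusion argument works and no finer slot count is needed. The only cosmetic difference is in the \(J_2^T\) oracle: the paper reads the deleted multiset off one realization of \(w\) inside \(x\), which is legitimate by Lemma~\ref{lem:two-deletion-nonlinear}, instead of solving the quadratic. Your \(\F_2\)-linear solve is also polynomial time, so this does not affect the result.
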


\begin{proof}
Fix \(x\in\F^n\).  For \(G_V\), choose one of the at most \(n\) distinct
words \(w\in\mathsf D_1(x)\).  If \(z\) is the value inserted into \(w\),
equation~\eqref{eq:one-deletion-V} prescribes one fiber of the
\(\F_2\)-linear map \(z\mapsto z^2+S(w)z\).  By
Lemma~\ref{lem:one-deletion-nonlinear}, that fiber has at most two elements.
Each value has \(n\) insertion slots, so there are at most \(2n^2\)
candidates.  Generating them, filtering by \eqref{eq:GV-adjacency}, removing
\(x\), and deduplicating gives the whole neighborhood.  The linearized
equation can be solved by Gaussian elimination in the public \(s\)-bit
basis, so this is polynomial in \(n\) and \(s\).

For \(G_H\), again fix a child \(w\in\mathsf D_1(x)\).  A parent obtained by
inserting \(z\) has the required \(S\)-tail precisely when
\[
 \pi_S(z)=S_{\mathrm{tail}}(x)+\pi_S(S(w)).
\]
This is either empty or an affine coset of \(\ker\pi_S\), and here it is
nonempty because \(x\) itself supplies a solution.  It therefore contains
exactly \(2^{h_S}\) values.  Once \(w,z\) are fixed, the full sum
\(S(w)+z\) is fixed.  Among the distinct words obtained by inserting this
same \(z\) into \(w\), at most one has checksum \(T_n(x)\), by
Lemma~\ref{lem:last-insertion-gate}.  The required \(V\)-tail only filters
this list.  Thus there are at most \(2^{h_S}\) possible parents per child
and at most \(n2^{h_S}\) in total.  Enumerating the affine coset, trying the
insertion slots, filtering, and deduplicating is a complete oracle.  With
\(h_S=2u+3\),
\[
 2^{h_S}=8N^2<32n^2,
\]
where the strict inequality uses \(N<2n\).  Thus the running time is
polynomial.

For \(J_2^T\), fix \(w\in\mathsf D_2(x)\).  The target values \(S(x)\) and
\(V(x)\), together with \(w\), recover the unordered deleted multiset
\(\{a,b\}\) by Lemma~\ref{lem:two-deletion-nonlinear}; repeated values are
included.  Insert its two members in one deterministic order.  There are at
most \(n-1\) distinct intermediate words after the first insertion, and for
each one Lemma~\ref{lem:last-insertion-gate} permits at most one distinct
last insertion with checksum \(T_n(x)\).  Hence the bucket with fixed
\((w,S,V,T_n)\) contains at most \(n-1\) words.  It contains \(x\), and so
contributes at most \(n-2\) neighbors.  There are at most \(\binom n2\)
distinct choices of \(w\), proving the third bound.

For the corresponding oracle, enumerate all two-position deletions of \(x\).
For each distinct child retain one realization and its two removed symbols;
Lemma~\ref{lem:two-deletion-nonlinear} shows that the resulting multiset is
independent of the retained realization.  Use the prescribed-order
completeness of Lemma~\ref{lem:prescribed-insertion-order}, try every
first-insertion slot and then every final-insertion slot, filter by
\((S,V,T_n)\), remove \(x\), and deduplicate.  This is polynomial in \(n\)
and \(s\), without a finite-field factorization subroutine.  When \(n=2\),
the bucket bound is one, so \(J_2^T\) is edgeless; this is also exactly what
the factor \(n-2\) states.
\end{proof}

Consequently
\begin{equation}
 \Delta(G_\star)\le
 D_n:=2n^2+n2^{h_S}+\binom n2(n-2).
 \label{eq:union-degree-bound}
\end{equation}
For the deterministic tail dimension \(h_S=2u+3\), if \(u\ge2\), then
\begin{equation}
 D_n<2N^2+8N^3+\tfrac12N^3\le9N^3.
 \label{eq:stable-union-degree}
\end{equation}
At the endpoint \(n=2\), the three summands in
\eqref{eq:union-degree-bound} are \(8,64,0\), respectively, and hence
\begin{equation}
 D_2=72.
 \label{eq:endpoint-union-degree}
\end{equation}
We use the degree majorant \(D_2=72\).

\begin{proposition}[Canonical indexed access to the union graph]
\label{prop:gstar-indexed-oracle}
Put \(M=q^n\) and name every vertex by its public binary coordinate word in
\(\{0,1\}^{sn}\).  The simple graph \(G_\star\) has a deterministic
degree/indexed-neighbor oracle whose running time per query is polynomial in
\(n\), \(s\), and \(2^{h_S}\).
\end{proposition}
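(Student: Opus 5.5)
The plan is to build the oracle for \(G_\star\) directly from the three individual neighborhood oracles of Lemma~\ref{lem:three-residual-degrees}. Given a query vertex \(x\), named by its word in \(\{0,1\}^{sn}\), and an index \(i\), we run the three enumeration procedures for \(G_V\), \(G_H\), and \(J_2^T\) at \(x\). Each procedure outputs a list of candidate neighbors already filtered by its adjacency predicate \eqref{eq:GV-adjacency}, \eqref{eq:GH-adjacency}, or \eqref{eq:J2-adjacency}. We concatenate the three lists into one list of at most \(D_n\) words. Since the lemma guarantees that each list is exactly the corresponding neighborhood, the union list is exactly \(N_{G_\star}(x)\), possibly with repetitions.

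Next we deduplicate and sort this combined list lexicographically by the public \(sn\)-bit names, and remove \(x\) if it appears. The degree query \(i=0\) returns the length \(d\) of the resulting list. A query with \(1\le i\le d\) returns its \(i\)-th entry, and a query with \(i>d\) returns the distinguished null value. The running time is dominated by the three enumerations, each polynomial in \(n\), \(s\), and \(2^{h_S}\). The remaining work is sorting at most \(D_n\le 2n^2+n2^{h_S}+\binom n2(n-2)\) words of \(sn\) bits, which is also polynomial in these quantities.

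Two properties must be checked: determinism and symmetry. Every step uses only public data: the basis, the order \(\preceq\), the projections \(\pi_S,\pi_V\), and lexicographic order on names. The indexed answer is therefore a function of \((x,i)\) alone, independent of the internal enumeration order. For the oracle to describe a simple undirected graph, we need \(y\in N(x)\) if and only if \(x\in N(y)\). This holds because each of the three adjacency predicates is symmetric in \(x,y\), and each oracle is exact rather than a one-sided superset. The only delicate point is exactness for \(G_H\) and \(J_2^T\). Their enumerations fix the target values \(S_{\mathrm{tail}}(x)\), \(V_{\mathrm{tail}}(x)\), \(T_n(x)\) (respectively \(S(x),V(x),T_n(x)\)) from \(x\), and they must nonetheless capture every \(y\) satisfying the symmetric predicate. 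For \(J_2^T\), this completeness is supplied by Lemma~\ref{lem:prescribed-insertion-order}.

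I expect no real obstacle. The most care is needed in confirming that the lemma's enumeration is complete, not merely a superset of the neighbors, so that the final filtering and deduplication yield canonical, symmetric adjacency lists.
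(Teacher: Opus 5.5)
Your proposal is correct and follows the paper's proof essentially verbatim: run the three exact enumerators of Lemma~\ref{lem:three-residual-degrees}, concatenate their outputs, remove the queried vertex, sort by the \(sn\)-bit public names and deduplicate, then answer the degree and indexed-neighbor queries from the resulting list, with sorting at most \(D_n\) names adding only a polynomial factor. Your extra remarks on determinism and symmetry are harmless but not needed. The enumerators already return the exact neighborhoods, and the adjacency predicates are symmetric, so the sorted list is the true neighborhood in the simple union graph.
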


\begin{proof}
Run the three complete enumerators from
Lemma~\ref{lem:three-residual-degrees}, concatenate their outputs, remove the
queried vertex, sort by the public name, and delete repetitions.  The result
is exactly the neighborhood in the simple union
\eqref{eq:union-residual-graph}; its length and its \(i\)-th entry answer the
two oracle queries.  Sorting at most \(D_n\) names of length \(sn\) adds only
a polynomial factor.  Notice that \(M=2^{sn}\), so the names have exactly
\(\log_2M=sn\) bits.
\end{proof}

\subsection{Two-round polynomial-witness coloring}

The polynomial-witness lemma below is a quantitative specialization of the
classical color-reduction method of Erd\H{o}s--Frankl--F\"uredi and Linial.  We use the
systematic-code and individually computable formulation developed by Li and
Farnoud and by Li, Gabrys, and Farnoud
\cite{ErdosFranklFuredi1985,Linial1992,LiFarnoud2023}; compare especially
\cite[Lemma~13, Corollary~14, Lemmas~15--16, and
Theorems~2--3]{LiGabrysFarnoud2025}.
The explicit constants are included because they close the payload budget.

\begin{lemma}[Two-round polynomial-witness coloring]
\label{lem:two-round-local-coloring}
Let \(J\) be a simple graph whose vertices have distinct
\(\ell_{\mathrm{name}}\)-bit public names, whose maximum degree is at most
\(D\), and whose neighborhoods are individually enumerable.  If \(D=0\),
one color suffices.  If \(D\ge1\), then \(J\) has an individually computable
proper coloring with at most
\begin{equation}
 64D^2+32D\ell_{\mathrm{name}}
 \label{eq:two-round-palette}
\end{equation}
colors.  Evaluating one color inspects only the radius-two neighborhood of
the requested vertex.
\end{lemma}

The canonical auxiliary fields, two color-reduction rounds, and the
constant calculation are recorded in
Appendix~\ref{app:two-round-coloring}.

\begin{proposition}[One computable residual color]
\label{prop:deterministic-residual-color}
Under the deterministic dimensions
\eqref{eq:deterministic-tail-dimensions}, the graph \(G_\star\) has a
public proper coloring
\[
 \ell_1:\F^n\longrightarrow\{0,1\}^{6u+13}
\]
whose value at one requested message is computable in time polynomial in
\(n\) and \(s=\log_2 q\).
\end{proposition}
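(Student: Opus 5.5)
The plan is to apply Lemma~\ref{lem:two-round-local-coloring} to \(J=G_\star\), with name length \(\ell_{\mathrm{name}}=sn\) and degree bound \(D=D_n\) from \eqref{eq:union-degree-bound}. The proof then reduces to three checks: the lemma's oracle hypothesis holds, the palette fits into \(6u+13\) bits, and evaluating one color is polynomial. For the oracle, Proposition~\ref{prop:gstar-indexed-oracle} supplies the degree/indexed-neighbor oracle for \(G_\star\) with per-query cost polynomial in \(n\), \(s\), and \(2^{h_S}\). Under \eqref{eq:deterministic-tail-dimensions} we have \(2^{h_S}=8N^2<32n^2\), so each query is polynomial in \(n\) and \(s\). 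The lemma inspects only the radius-two neighborhood of the queried vertex, which has at most \(1+D+D^2\) vertices. Hence one color costs polynomially many oracle calls plus the local computation in the appendix, which is polynomial in \(D\) and \(\ell_{\mathrm{name}}\).

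For the palette, it suffices to show
\[
 64D_n^2+32D_n\,sn\le 2^{6u+13}=8192\,N^6,
\]
and then embed the colors injectively into \(\{0,1\}^{6u+13}\) by a public binary index. Split into the stable range \(u\ge2\) (\(n\ge3\)) and the endpoint \(n=2\). For \(u\ge2\), use \eqref{eq:stable-union-degree}, \(D_n<9N^3\), which gives \(64D_n^2<64\cdot81N^6=5184N^6\). For the second term, use \(n\le N\) and \(s=10u+22\):
\[
 32D_n sn<32\cdot 9N^3\cdot(10u+22)N=288(10u+22)N^4 .
\]
It remains to show \(288(10u+22)\le 3008N^2\), that is, \(288(10u+22)\le 3008\cdot 4^u\). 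At \(u=2\) the left side is \(12096\) and the right side is \(48128\). The right side grows geometrically while the left side grows linearly, so the inequality holds for all \(u\ge2\).

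At the endpoint \(n=2\) we have \(u=1\), \(N=2\), \(s=32\), and \(D_2=72\) by \eqref{eq:endpoint-union-degree}. Then
\[
 64\cdot72^2+32\cdot72\cdot64=331776+147456=479232,
\]
which is at most \(2^{19}=524288\), so the endpoint also fits.

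The main obstacle is this constant bookkeeping. The bound \(9N^3\) in \eqref{eq:stable-union-degree} only holds for \(u\ge2\), and the small case \(n=2\) is nearly tight, so I would recheck the inequality chain \eqref{eq:stable-union-degree} and the exact form \eqref{eq:two-round-palette} before relying on them. If the margin failed at some small \(u\), I would verify the values \(u=2,3\) directly and rely on monotone growth beyond that. A smaller point concerns the case \(D=0\): Lemma~\ref{lem:two-round-local-coloring} then gives a single color, which still fits. Finally, the coloring is public and uniform because the names, the oracle, and the color reduction are all fixed deterministic rules.
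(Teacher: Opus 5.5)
Your proposal is correct and follows the paper's proof. You apply Lemma~\ref{lem:two-round-local-coloring} to \(G_\star\) with \(\ell_{\mathrm{name}}=ns\) and bound the palette in two cases. For \(u\ge2\) you use \(D_n<9N^3\); your slack \(288(10u+22)\le3008N^2\) is a looser form of the paper's \(756N^2\). For the endpoint \(n=2\) you use \(479232<2^{19}\), and you then inject the \(\F_{q_2}^2\) palette into \(6u+13\) bits. Your explicit cost accounting, via \(2^{h_S}=8N^2\) and the radius-two neighborhood, only spells out what the paper takes from Lemma~\ref{lem:two-round-local-coloring} and Proposition~\ref{prop:gstar-indexed-oracle}.
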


\begin{proof}
Apply Lemma~\ref{lem:two-round-local-coloring} once to the union graph
\(G_\star\), using the public coordinate word as a distinct name of length
\begin{equation}
 \ell_{\mathrm{name}}=ns.
 \label{eq:message-name-length}
\end{equation}
Write \(K_{\mathrm{col}}\) for the resulting palette size.  For the
deterministic choice \(s=10u+22\), the color fits in exactly the slot planned
for Section~\ref{sec:construction}.

Indeed, suppose first that \(u\ge2\).  Equations
\eqref{eq:stable-union-degree} and \eqref{eq:message-name-length} give
\begin{equation}
 \begin{split}
 K_{\mathrm{col}}
 &<5184N^6+288N^4(10u+22)\\
 &\le5940N^6
 <2^{13}N^6=2^{6u+13}.
 \end{split}
 \label{eq:stable-color-budget}
\end{equation}
Here \(288(10u+22)\le756N^2\) for every \(u\ge2\): equality holds at
\(u=2\), after which the right side quadruples at each step while the left
side increases by only \(2880\).  If \(u=1\), then \(n=2\), \(s=32\),
\(\ell_{\mathrm{name}}=64\), and the direct endpoint calculation is
\begin{equation}
 K_{\mathrm{col}}
 \le64\cdot72^2+32\cdot72\cdot64
 =479232<2^{19}=2^{6u+13}.
 \label{eq:endpoint-color-budget}
\end{equation}

The proof of Lemma~\ref{lem:two-round-local-coloring} realizes its final
color as a pair of elements of a canonical auxiliary field.  Concatenate
the two canonical binary coordinate words and pad on the left with zeros to
length \(6u+13\).  The preceding inequalities show that this is a canonical
public injection of the palette.
Denote the resulting single residual color by
\begin{equation}
 \ell_1:\F^n\longrightarrow\{0,1\}^{6u+13}.
 \label{eq:residual-color}
\end{equation}
It is individually evaluable in time polynomial in \(n\) and \(s\), and it
is proper simultaneously on \(G_V\), \(G_H\), and \(J_2^T\).
\end{proof}

Therefore equality of \(\ell_1\) is impossible for every unresolved
equal-split pair represented by these graphs.  This is the only residual
graph color stored by the deterministic three-trailer construction.

\section{The three-trailer construction}
\label{sec:construction}

We now assemble the checks and the residual color into three suffix symbols.
The design constraint is exact: after any one trailer deletion, the two
surviving payloads must expose the view assigned to that puncture, while all
three payloads together must expose the two-deletion view.  We first choose
the parameters and verify that every component fits; only then do we define
the public alphabet representation and the encoder.

\subsection{Parameters and the public alphabet}

Fix \(n\ge2\), and retain the notation
\[
 u=\lceil\log_2 n\rceil,
 \qquad N=2^u
\]
from Section~\ref{sec:graph}.  Set
\begin{equation}
 \begin{aligned}
 d_{\mathrm{tag}}&=u+1,
 &\delta&=d_{\mathrm{tag}}+2=u+3,\\
 b_{\mathrm{col}}&=6u+13,
 &s&=u+b_{\mathrm{col}}+3\delta=10u+22,\\
 \rho&=s-\delta=9u+19,
 &q&=2^s.
 \end{aligned}
 \label{eq:construction-parameters}
\end{equation}
Here \(d_{\mathrm{tag}}\) is the number of tag bits, the extra two bits in
\(\delta\) hold a trailer-position marker, \(b_{\mathrm{col}}\) is the
residual-color budget, and \(\rho\) is the payload length in each trailer.

The capacity check has two regimes.  If \(u\ge2\), then
Equation~\eqref{eq:stable-color-budget} gives
\[
 K_{\mathrm{col}}<2^{6u+13}=2^{b_{\mathrm{col}}}.
\]
The remaining regime is the single endpoint \(n=2\), for which \(u=1\),
\(s=32\), \(D_2=72\), and \(\ell_{\mathrm{name}}=64\).  The direct
calculation in Equation~\eqref{eq:endpoint-color-budget} gives
\[
 K_{\mathrm{col}}
 \le 64\cdot72^2+32\cdot72\cdot64
 =479232<2^{19}=2^{b_{\mathrm{col}}}.
\]
Thus Proposition~\ref{prop:deterministic-residual-color} supplies one
proper residual color \(\ell_1(x)\) in exactly \(b_{\mathrm{col}}\) bits
for every \(n\ge2\).  Moreover,
\begin{equation}
 q=2^{10u+22}=2^{22}N^{10}<2^{32}n^{10},
 \label{eq:uniform-alphabet-closure}
\end{equation}
because \(N<2n\).  This proves the promised alphabet estimate before the
storage capacity is used below.

We fix the field and all binary coordinates canonically.  Let \(f_n(X)\) be
the lexicographically first monic irreducible polynomial of degree \(s\) in
\(\F_2[X]\), and realize
\[
 \F_q=\F_2[X]/(f_n(X))
\]
and write \(\theta\) for the residue class of \(X\).  The polynomial basis
\(1,\theta,\ldots,\theta^{s-1}\), with coefficients in increasing degree
order, gives a public bijection
\(\operatorname{coord}:\F_q\to\{0,1\}^s\); their lexicographic order is the
public total order used in Section~\ref{sec:values}.  Deterministic
irreducibility testing is polynomial in \(s\).  Even an exhaustive search
through the at most \(2^s=q=O(n^{10})\) monic candidates is polynomial in
the family parameter \(n\), so this field choice is uniform for the family.

Let
\[
 C_{\mathrm{tag}}=\{0,1\}^{d_{\mathrm{tag}}}.
\]
Splitting the public coordinate word into blocks of lengths
\(d_{\mathrm{tag}},2,\rho\) defines a set bijection
\begin{equation}
 \beta:\F_q\longrightarrow
 C_{\mathrm{tag}}\times\{0,1,2,3\}\times\{0,1\}^{\rho}.
 \label{eq:public-symbol-bijection}
\end{equation}
The middle two bits are interpreted as the displayed integer.  Write
\(\pi_{\mathrm{tag}}\) and \(\pi_{\mathrm{mark}}\) for the first and second
coordinate projections of \(\beta\).  This is only a public set
representation of field symbols; it does not alter the field operations.
Since
\[
 |C_{\mathrm{tag}}|=2^{u+1}=2N>n,
\]
the least-absent choice in Equation~\eqref{eq:least-absent-tag} gives a
well-defined tag
\begin{equation}
 \Gamma(x)=\min\bigl(C_{\mathrm{tag}}\setminus
 \{\pi_{\mathrm{tag}}(x_i):1\le i\le n\}\bigr)
 \label{eq:construction-absent-tag}
\end{equation}
for every message \(x\in\F_q^n\).

\subsection{The bit-sliced payloads}

Take the coordinate projections \(\pi_S\) and \(\pi_V\) from
Section~\ref{sec:graph} to retain the final coordinates of the field-basis
word.  Their omitted initial coordinates define the complementary heads.
With the dimensions in Equation~\eqref{eq:deterministic-tail-dimensions},
write
\begin{equation}
 \begin{aligned}
 S(x)&=(S_{\mathrm{head}}(x),S_{\mathrm{tail}}(x)),
 &|S_{\mathrm{head}}|&=2u+3,
 &|S_{\mathrm{tail}}|&=8u+19,\\
 V(x)&=(V_{\mathrm{head}}(x),V_{\mathrm{tail}}(x)),
 &|V_{\mathrm{head}}|&=7u+16,
 &|V_{\mathrm{tail}}|&=3u+6.
 \end{aligned}
 \label{eq:head-tail-slices}
\end{equation}
Let \(\ell_0(x)\in\{0,1\}^u\) be the \(u\)-bit binary representation of
the least nonnegative representative of \(T_n(x)\in\mathbb Z_n\).  This is
injective because \(n\le2^u\).  The residual color
\(\ell_1(x)\in\{0,1\}^{b_{\mathrm{col}}}\) is the one fixed in
Equation~\eqref{eq:residual-color}.

Define
\begin{equation}
 \begin{aligned}
  w_1(x)&=(S_{\mathrm{head}}(x),V_{\mathrm{head}}(x)),\\
  w_2(x)&=(S_{\mathrm{tail}}(x),\ell_0(x)),\\
  w_3(x)&=(V_{\mathrm{tail}}(x),\ell_1(x)).
 \end{aligned}
 \label{eq:three-payloads}
\end{equation}
The complete payload allocation is as follows.  The last column verifies
that every row fills one \(\rho\)-bit payload exactly.

\begin{center}
\begin{tabularx}{\textwidth}{@{}c >{\raggedright\arraybackslash}X
                              >{\raggedright\arraybackslash}X@{}}
\toprule
trailer & payload & bit count \\
\midrule
1 & \(w_1=(S_{\mathrm{head}},V_{\mathrm{head}})\)
  & \((2u+3)+(7u+16)=9u+19=\rho\) \\
2 & \(w_2=(S_{\mathrm{tail}},\ell_0)\)
  & \((8u+19)+u=9u+19=\rho\) \\
3 & \(w_3=(V_{\mathrm{tail}},\ell_1)\)
  & \((3u+6)+(6u+13)=9u+19=\rho\) \\
\bottomrule
\end{tabularx}
\end{center}

The three trailer symbols are
\begin{equation}
 p_j(x)=\beta^{-1}\bigl(\Gamma(x),j,w_j(x)\bigr),
 \qquad j=1,2,3,
 \label{eq:three-trailer-symbols}
\end{equation}
and the literal-systematic encoder is
\begin{equation}
 E_n(x)=x\concat p_1(x)p_2(x)p_3(x).
 \label{eq:uniform-three-trailer-encoder}
\end{equation}
Every trailer therefore has the repeated data-absent tag \(\Gamma(x)\),
and its marker records its original coordinate.  The payloads expose the
following exact views.  The table records the splits of an exact-two deletion
pattern between data and trailers; correction of fewer deletions follows
later from Lemma~\ref{lem:exact-monotone}.

\begin{center}
\begin{tabularx}{\textwidth}{@{}c c >{\raggedright\arraybackslash}X
                              >{\raggedright\arraybackslash}X@{}}
\toprule
data deletions & missing trailer & required recovered view & separator \\
\midrule
0 & two trailers & intact \(x\) & systematic prefix \\
1 & 1 & \((S_{\mathrm{tail}},V_{\mathrm{tail}},T_n,\ell_1)\)
  & \(G_H\) \\
1 & 2 & \((V,\ell_1)\) & \(G_V\) \\
1 & 3 & \((S,T_n)\) & Proposition~\ref{prop:tenengolts-one-deletion} \\
2 & none & \((S,V,T_n,\ell_1)\) & \(J_2^T\) \\
\bottomrule
\end{tabularx}
\end{center}

The table lists only the information needed by each separator; a surviving
pair can contain additional head bits.  Its order also makes clear why the
marker must be stored outside the payload: the marker identifies which row
is available before that row is interpreted.

\subsection{Correction of every deletion split}

\begin{proof}[Proof of Theorem~\ref{thm:main}]
Suppose that \(E_n(x)\) and \(E_n(y)\) have a common exact-two-deletion
descendant.  Let \(e\) and \(f\) be the numbers of deleted data symbols in
the two deletion patterns.  All three trailers on each side carry the
repeated tag chosen absent from that side's data, and a three-symbol trailer
has at least one survivor after two deletions.  Therefore
Theorem~\ref{thm:absent-label-boundary} gives \(e=f\).  Equivalently, an
unequal split would force both a nonempty trailer--trailer match and a
trailer--data match; the former equates the repeated tags, while the latter
places that tag among data symbols from which it was chosen to be absent.
Thus it remains to treat the three equal splits.

If \(e=f=0\), no data symbol is deleted.  The common received word has the
same length-\(n\) data prefix on both sides, so \(x=y\).

Suppose \(e=f=1\).  The data blocks have a common one-deletion descendant,
and exactly one trailer coordinate is deleted on each side.  Projecting the
two surviving trailer symbols through \(\pi_{\mathrm{mark}}\) gives one of
the ordered pairs
\[
 (2,3),\qquad(1,3),\qquad(1,2),
\]
according as trailer \(1,2,3\), respectively, is missing.  These pairs are
distinct, so equality of the received trailer portions identifies the same
missing coordinate on both sides.

If trailer \(3\) is missing, payloads \(w_1,w_2\) recover the full sum
\(S\) and \(\ell_0\), hence \(T_n\).  Proposition
\ref{prop:tenengolts-one-deletion} then gives \(x=y\).  If trailer \(2\) is
missing, payloads \(w_1,w_3\) recover the full \(V\) and \(\ell_1\).  Were
\(x\ne y\), the common data descendant and equality of \(V\) would make
\(\{x,y\}\) an edge of \(G_V\), contradicting properness of \(\ell_1\).  If
trailer \(1\) is missing, payloads \(w_2,w_3\) recover
\((S_{\mathrm{tail}},V_{\mathrm{tail}},T_n,\ell_1)\).  Distinct \(x,y\)
would then form an edge of \(G_H\) with equal residual colors, again a
contradiction.

Finally suppose \(e=f=2\).  No trailer symbol is deleted, so all three
payloads agree.  Together they recover \((S,V,T_n,\ell_1)\).  If the two
messages were distinct, their common two-deletion data descendant and
equality of \((S,V,T_n)\) would make them adjacent in \(J_2^T\), whereas
equality of \(\ell_1\) contradicts its properness on that graph.

We have excluded every exact-two collision.  By
Lemma~\ref{lem:exact-monotone}, the image of \(E_n\) corrects up to two
deletions.  Equation~\eqref{eq:uniform-alphabet-closure} gives the stated
alphabet bound.  Combining this construction with
Theorem~\ref{thm:lower-bound} shows that three full redundant symbols are
necessary and sufficient in this model.
\end{proof}

\begin{proposition}[Proper-color interface]
\label{prop:proper-color-interface}
Fix the public field, tag and marker representation, projections, and three
payload slices used above, with the three payload-length identities in
Equation~\eqref{eq:three-payloads}.  Let
\(\ell:\F_q^n\to\{0,1\}^{b_{\mathrm{col}}}\) be any total function that is
proper on the simple union graph
\(G_\star=G_V\cup G_H\cup J_2^T\).  Replacing \(\ell_1\) by \(\ell\) in
\eqref{eq:three-payloads} and then using
\eqref{eq:three-trailer-symbols} defines a literal-systematic code correcting
up to two deletions.
\end{proposition}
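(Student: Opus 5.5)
The plan is to rerun the proof of Theorem~\ref{thm:main} with \(\ell\) in place of \(\ell_1\) and check that each step uses only two facts about the color: it occupies exactly \(b_{\mathrm{col}}\) bits, and it is proper on \(G_V\), \(G_H\), and \(J_2^T\).

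First, well-definedness. Since \(\ell\) is total with values in \(\{0,1\}^{b_{\mathrm{col}}}\), the payload \(w_3=(V_{\mathrm{tail}},\ell)\) still has \((3u+6)+(6u+13)=\rho\) bits. Together with the unchanged \(w_1,w_2\), Equation~\eqref{eq:three-trailer-symbols} therefore defines three field symbols through \(\beta^{-1}\). The tag coordinate is still \(\Gamma(x)\) from Equation~\eqref{eq:construction-absent-tag}, and the marker coordinates are still \(1,2,3\). So the map is literal-systematic, and every trailer symbol carries the repeated data-absent tag.

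Second, the boundary reduction. Theorem~\ref{thm:absent-label-boundary} applies with \(L_{\mathrm{tr}}=3\ge k+1\) for \(k\le2\), so both patterns delete the same number \(e\) of data symbols. By Theorem~\ref{thm:equal-split-reduction} and Lemma~\ref{lem:exact-monotone}, it suffices to exclude exact-two collisions for \(e\in\{0,1,2\}\).

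Third, the case analysis, copied verbatim.
\begin{itemize}
\item \(e=0\): the intact prefix gives \(x=y\).
\item \(e=1\): the marker pair identifies which trailer was deleted, and the same one on both sides.
  \begin{itemize}
  \item If trailer~3 is missing, the full \(S\) and \(T_n\) are recovered, and Proposition~\ref{prop:tenengolts-one-deletion} gives \(x=y\). The color is not used.
  \item If trailer~2 is missing, the full \(V\) and \(\ell\) agree, so distinct \(x,y\) would form an edge of \(G_V\) with equal colors.
  \item If trailer~1 is missing, \((S_{\mathrm{tail}},V_{\mathrm{tail}},T_n,\ell)\) agree, so distinct \(x,y\) would form an edge of \(G_H\) with equal colors.
  \end{itemize}
\item \(e=2\): all three payloads agree, so \((S,V,T_n,\ell)\) agree, giving an edge of \(J_2^T\) with equal colors.
\end{itemize}
Each edge case contradicts properness of \(\ell\) on \(G_\star\), because every edge of \(G_V\), \(G_H\), or \(J_2^T\) is an edge of the union \(G_\star\).

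The only point needing care is confirming that nothing in the earlier proof used a special property of \(\ell_1\) beyond these two facts. In particular:
\begin{itemize}
\item the graphs \(G_V\), \(G_H\), \(J_2^T\) are defined from \(S,V,T_n\) and the projections alone, independently of the color;
\item the head/tail split of the payloads is fixed independently of the color.
\end{itemize}
Once this is noted, the proposition follows. Uniformity and running time are not claimed here, so no additional argument is needed for them.
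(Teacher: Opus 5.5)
Your proposal is correct and follows essentially the same route as the paper, which likewise re-inspects the split-by-split proof of Theorem~\ref{thm:main}. That proof observes that the boundary authentication, markers, recovery of \(S,V,T_n\), and payload lengths do not depend on how the color is chosen; that only properness on \(G_V\), \(G_H\) (one data and one trailer deletion) and on \(J_2^T\) (two data deletions) is used; and that totality makes the encoder defined everywhere, so your version simply writes these cases out explicitly.
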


\begin{proof}
Inspect the preceding split-by-split proof.  The boundary argument, the
trailer markers, the recovery of \(S,V,T_n\), and all three payload identities
are independent of how the residual color is obtained.  In the two cases
where one data and one trailer symbol are deleted, the proof uses only
properness on \(G_V\) or \(G_H\); in the two-data-deletion case it uses only
properness on \(J_2^T\).  Totality ensures that the encoder is defined for
every message.  These are exactly the stated hypotheses.
\end{proof}

\subsection{Uniform encoder and candidate-enumeration decoder}

For each \(n\), the encoder first constructs the canonical field described
above.  Given \(x\in\F_q^n\), it computes \(S(x),V(x),T_n(x)\), evaluates
the individually computable color \(\ell_1(x)\) using the neighborhood
oracles of Section~\ref{sec:graph}, scans the \(n\) data tags to find
\(\Gamma(x)\), and forms Equation~\eqref{eq:three-trailer-symbols}.  The
field construction, field arithmetic, graph-neighborhood enumeration, and
two-round color evaluation all take time polynomial in \(n\) and
\(s=\log_2 q\) under the parameters in
Equation~\eqref{eq:construction-parameters}.

The zero-error decoder uses candidate enumeration.  Given an arbitrary
received word \(z\), return failure
unless \(|z|\in\{n+1,n+2,n+3\}\).  Otherwise put
\(d_{\mathrm{ch}}=n+3-|z|\in\{0,1,2\}\).  For every
\(e\in\{0,\ldots,d_{\mathrm{ch}}\}\),
interpret the first \(n-e\) received symbols as the surviving data prefix
and the remaining \(3-d_{\mathrm{ch}}+e\) symbols as the surviving trailer.
Enumerate
all length-\(n\) supersequences of that prefix obtained by inserting \(e\)
field symbols.  There are at most
\begin{equation}
 \binom ne q^e
 \label{eq:decoder-candidate-count}
\end{equation}
such candidates.  Re-encode each candidate \(x\), and retain it exactly
when the proposed trailer suffix lies in
\(\mathsf D_{d_{\mathrm{ch}}-e}(p_1(x)p_2(x)p_3(x))\).  The true message
occurs in the enumeration for its actual split whenever the input is a
channel output.  The correction proof shows that at most one message can be
retained.  The decoder returns that message, or failure if the input is not a
channel output from the code.

For the present family, \(e\le2\) and \(q<2^{32}n^{10}\), so the total
candidate count is polynomial in \(n\) (in particular,
\(q^e=O(n^{20})\)).  The complexity statement is measured along the
prescribed family \(q=q(n)\).

\section{Adjacent bursts: a closed-form construction}
\label{sec:adjacent}

The arbitrary-deletion construction must separate two independently moving
deletion positions.  When the two deleted symbols are consecutive, the same
full-symbol objective admits a direct parity--Tenengolts construction, with
the residual graph color replaced by three locators.  The useful coordinate
geometry is the odd/even
decomposition: an adjacent pair removes one symbol from each row, while its
two-place shift preserves the row of every surviving suffix symbol.
Correction of either one binary deletion or two
adjacent binary deletions goes back to Levenshtein
\cite{Levenshtein1967Adjacent}; a direct quaternary construction for a burst
of at most two insertions or deletions was given by Khuat and Kim
\cite{KhuatKim2021}.  Recent nonbinary constructions address other
redundancy and channel regimes
\cite{WangTangSimaGabrysFarnoud2024,SongCai2023,SongCaiQuek2025,
SunLuZhangGe2025,YeSunGe2026}, while recent work gives general finite
upper bounds for multiple burst deletions
\cite{WangKongYaakobiDuman2026}.  Full/odd/even differential-VT checks and
a message-dependent least-allowed marker already appear in
Nguyen--Cai--Siegel \cite[Constructions~7--8]{NguyenCaiSiegel2024}.  Here
these ingredients are combined with a literal prefix and an exposed final
label to prove Theorem~\ref{thm:adjacent-closed-form}.

\subsection{The burst channel and the parity prefix}

For a word \(z=z_1\cdots z_L\), a length-\(r\) adjacent deletion beginning
at \(d\) is
\begin{equation}
 \mathsf B_{r,d}(z)
 =z_1\cdots z_{d-1}z_{d+r}\cdots z_L,
 \qquad
 1\le r\le L,\quad 1\le d\le L-r+1.
 \label{eq:adjacent-burst-map}
\end{equation}
Put \(\mathsf B_0(z)=z\).  A fixed-length code corrects one burst of length
at most two when, for every fixed \(r\in\{0,1,2\}\), distinct codewords have
disjoint sets of length-\(r\) burst descendants.  Descendants obtained with
different values of \(r\) have different lengths, so there is no
cross-length collision to consider.

Fix \(n\ge1\), and set
\[
 M=n+2,\qquad
 o=\left\lceil\frac M2\right\rceil,\qquad
 e=\left\lfloor\frac M2\right\rfloor.
\]
Define the forbidden-set budget
\begin{equation}
 B_n=
 \begin{cases}
  2,&n=1,2,\\
  3,&n=3,\\
  k+2+\lfloor(k-1)/2\rfloor,&n=2k\ge4,\\
  k+2+\lceil k/2\rceil,&n=2k+1\ge5,
 \end{cases}
 \qquad
 R_n=B_n+1,
 \label{eq:adjacent-scan-budget}
\end{equation}
and put
\[
 K=M\cdot o\cdot e\cdot R_n.
\]
Here \(M\cdot o\cdot e\) is the number of locator triples in
\(\mathbb Z_M\times\mathbb Z_o\times\mathbb Z_e\), and \(R_n\) is the
number of candidate labels reserved for each triple.
Write \(q_{\mathrm{adj}}\) for the least power of two at least \(K\), put
\(m_{\mathrm{adj}}=\log_2 q_{\mathrm{adj}}\), and let
\(\F_{\mathrm{adj}}=\F_{q_{\mathrm{adj}}}\).  As in
Section~\ref{sec:construction}, realize this field using the
lexicographically first monic irreducible polynomial of degree
\(m_{\mathrm{adj}}\) over \(\F_2\).  Its polynomial basis gives a public
binary label
\[
 \operatorname{lab}:\F_{\mathrm{adj}}
       \longrightarrow\{0,\ldots,q_{\mathrm{adj}}-1\},
 \qquad
 \operatorname{lab}\left(\sum_{j=0}^{m_{\mathrm{adj}}-1}b_j\theta^j\right)
       =\sum_{j=0}^{m_{\mathrm{adj}}-1}b_j2^j.
\]
We order field elements by these numeric labels and identify an integer in
the displayed range with the uniquely labelled field element.  Field
addition is then bitwise XOR.  The construction below uses this addition
and the public order, but no field multiplication and no compatibility
between the two.

For \(x=(x_1,\ldots,x_n)\in\F_{\mathrm{adj}}^n\), define
\begin{equation}
 \begin{split}
 A(x)&=\sum_{\substack{1\le i\le n\\i\not\equiv n\pmod2}}x_i,\\
 B(x)&=\sum_{\substack{1\le i\le n\\i\equiv n\pmod2}}x_i,\\
 W(x)&=(x,A(x),B(x))\in\F_{\mathrm{adj}}^M.
 \end{split}
 \label{eq:adjacent-parity-prefix}
\end{equation}
Position \(n+1\) has parity opposite to \(n\), whereas position \(n+2\)
has the same parity as \(n\).  Thus, in characteristic two, the odd- and
even-position subsequences \(W_o(x)\in\F_{\mathrm{adj}}^o\) and
\(W_e(x)\in\F_{\mathrm{adj}}^e\) satisfy
\begin{equation}
 \sum_i W_o(x)_i=0,\qquad
 \sum_i W_e(x)_i=0.
 \label{eq:adjacent-zero-sums}
\end{equation}
For \(r\in\{1,2\}\) and every admissible start \(d\), put
\begin{equation}
 P_{r,d}(x)=\mathsf B_{r,d}(W(x))
       \in\F_{\mathrm{adj}}^{M-r}.
 \label{eq:adjacent-prefix-template}
\end{equation}

\begin{lemma}[Parity-template inversion]
\label{lem:adjacent-parity-inversion}
For every \(r\in\{1,2\}\) and every admissible \(d\), the map
\(P_{r,d}\) is injective and has a direct \(O(n)\)-time inverse on its
image.
\end{lemma}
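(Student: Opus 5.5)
The plan is to reconstruct \(x\) from \(P_{r,d}(x)\) together with the known \(r\) and \(d\). The key tool is the zero-sum identities \eqref{eq:adjacent-zero-sums}: in \(W(x)\), each of the two parity rows sums to zero. The burst removes a known set of coordinates, so it is enough to recover those coordinates of \(W(x)\). If all of them are \emph{data} coordinates, we then read \(x\) off the restored word. If one of them is \(n+1\) or \(n+2\), the data part loses at most one coordinate and we recover that coordinate the same way.

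The first step locates the deleted coordinates. Given \(d\) and \(r\), the received word agrees with \(W(x)\) on positions \(1,\dots,d-1\). Position \(j\ge d\) of the received word equals \(W(x)_{j+r}\). Hence every surviving coordinate of \(W(x)\) is known with its original index, and only the indices \(d,\dots,d+r-1\) are missing.

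For \(r=1\), the single missing coordinate \(W(x)_d\) lies in one parity row. By \eqref{eq:adjacent-zero-sums}, in characteristic two it equals the XOR of all the other known entries of that row. For \(r=2\), the two missing indices \(d,d+1\) have opposite parities, so each row is missing exactly one entry. Each entry is again the XOR of the remaining entries in its row. In both cases we reconstruct \(W(x)\) completely and output its first \(n\) coordinates.

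Injectivity is then immediate, because the inverse map is a function of \(P_{r,d}(x)\) alone. Equivalently, if \(P_{r,d}(x)=P_{r,d}(y)\), the reconstruction gives \(W(x)=W(y)\) and hence \(x=y\). The cost is one pass to place the surviving symbols and at most two row XORs over at most \(M\) field elements. Field addition is bitwise XOR on \(m_{\mathrm{adj}}=O(\log n)\)-bit labels, so we count each addition as a unit operation, giving \(O(n)\) time.

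There is no real obstacle. The only point to check carefully is the parity bookkeeping at the end of \(W(x)\): that \(A\) sits in the row of parity opposite to \(n\) and \(B\) in the row of parity equal to \(n\), so that each row sum vanishes. This holds by the definition in \eqref{eq:adjacent-parity-prefix}.
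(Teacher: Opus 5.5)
Your proposal is correct and follows essentially the same route as the paper: the known start $d$ fixes the original index, and hence the parity class, of every surviving symbol; each class loses at most one entry, which the zero-sum identities \eqref{eq:adjacent-zero-sums} recover as an XOR; and the first $n$ coordinates of the rebuilt $W(x)$ give $x$. Because you treat every admissible $d$ uniformly, your argument also covers the boundary deletions of $A(x)$, $B(x)$, $(x_n,A(x))$, or $(A(x),B(x))$, which the paper lists explicitly.
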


\begin{proof}
The known start \(d\) identifies the original position, and hence the
parity class, of every surviving symbol.  If \(r=1\), exactly one parity
class has lost a symbol.  By \eqref{eq:adjacent-zero-sums}, that missing
value is the XOR of the surviving values in the same class.  If \(r=2\),
the adjacent pair contains one odd and one even position, so the two
zero-sum equations recover one missing value from each class.  Reinserting
the recovered value or values at the known start reconstructs \(W(x)\);
its first \(n\) coordinates are the unique message \(x\).

The argument applies without an internal-position restriction.  In
particular, it covers deletion of either \(A(x)\) or \(B(x)\), the boundary
pair \((x_n,A(x))\), and the final prefix pair \((A(x),B(x))\).
\end{proof}

For an ordered word \(z\in\F_{\mathrm{adj}}^L\), use the weak-ascent
checksum
\begin{equation}
 T_L(z)=\sum_{i=1}^{L-1}i\,\mathbf1[z_i\preceq z_{i+1}]
          \pmod L,
 \qquad T_1(z)=0.
 \label{eq:adjacent-tenengolts-checksum}
\end{equation}
The following classical nonbinary Tenengolts separation is
Proposition~\ref{prop:tenengolts-one-deletion} in~\cite{Tenengolts1984}:
\begin{equation}
 \mathsf D_1(u)\cap\mathsf D_1(v)\ne\varnothing,\qquad
 \sum_i u_i=\sum_i v_i,\qquad T_L(u)=T_L(v)
 \quad\Longrightarrow\quad u=v.
 \label{eq:adjacent-tenengolts-input}
\end{equation}
That proposition gives \eqref{eq:adjacent-tenengolts-input} for \(L\ge2\).
For \(L=1\), the common one-deletion descendant is empty and equality of the
two sums is already equality of the sole symbols.  This endpoint is used
when \(n=1\), for which \(e=1\).

\subsection{Three locators and a deletion-prone final label}

Applying a VT-type locator to a full word and to its two step-two
subsequences is a known full/odd/even differential-VT motif; see
Nguyen--Cai--Siegel \cite{NguyenCaiSiegel2024}.  The present argument uses
the ordinary ascent checksum \eqref{eq:adjacent-tenengolts-checksum} and a
different packing.  Define
\begin{equation}
 \Lambda(x)=
 \bigl(T_M(W(x)),T_o(W_o(x)),T_e(W_e(x))\bigr)
 \in\mathbb Z_M\times\mathbb Z_o\times\mathbb Z_e.
 \label{eq:adjacent-locator-triple}
\end{equation}

\begin{lemma}[Three-locator separation]
\label{lem:adjacent-locator-separation}
Let \(r\in\{1,2\}\).  If
\[
 P_{r,d}(x)=P_{r,d'}(y)
\]
for admissible starts \(d,d'\) and distinct messages \(x,y\), then
\(\Lambda(x)\ne\Lambda(y)\).
\end{lemma}

\begin{proof}
Suppose first that \(r=1\).  The words \(W(x)\) and \(W(y)\) share a
one-deletion descendant, and both have total sum zero by
\eqref{eq:adjacent-zero-sums}.  If their first locator coordinates were
equal, \eqref{eq:adjacent-tenengolts-input} would give
\(W(x)=W(y)\), hence \(x=y\).

Now let \(r=2\).  An adjacent pair removes exactly one odd and one even
position.  The suffix following the burst shifts by two positions, so its
positional parity is unchanged.  Consequently \(W_o(x)\) and \(W_o(y)\)
share a one-deletion descendant, and the same is true of
\(W_e(x)\) and \(W_e(y)\).  Each of these four subsequences has sum zero.
Equality of the last two locator coordinates, followed by
\eqref{eq:adjacent-tenengolts-input}, would give
\[
 W_o(x)=W_o(y),\qquad W_e(x)=W_e(y).
\]
Interleaving the two equal subsequences yields \(W(x)=W(y)\), and again
\(x=y\).  Thus distinct messages cannot have equal locator triples.
\end{proof}

It remains to store \(\Lambda(x)\) in a symbol that can itself be deleted.
The set below protects that exposed label at the boundary.  If \(H\) survives
only from the codeword of \(x\) in a channel collision, the other codeword
contributes an exposed parity symbol, so equality forces \(H(x)\) to be one
of the scanned values.  Lemma~\ref{lem:adjacent-final-label-avoidance}
formalizes this implication after we rewrite and count the set.
For a fixed message \(x\), form a sufficient avoidance set
\(U_x\subseteq\F_{\mathrm{adj}}\) as follows.  For every
\(d\in\{1,\ldots,M\}\), let \(p=P_{1,d}(x)\), and let \(y\) be the word
formed by the first \(n\) symbols of \(p\).  If
\(p=(y,A(y))\), include \(B(y)\) in \(U_x\).  For every
\(d\in\{1,\ldots,M-1\}\), also include
\(A(P_{2,d}(x))\).

The deletion transformation identifies this scan set exactly.  Write
\[
 \begin{aligned}
 \pi_0(x)&=0,\qquad
 \pi_d(x)=\sum_{i=1}^d x_i\quad(1\le d\le n),\\
 \mathcal A_x&=
 \{W(x)_i:1\le i\le M,\ i\not\equiv n\pmod2\}.
 \end{aligned}
\]
so \(\mathcal A_x\) is the set of values in the positional parity class
containing \(A(x)\), and put
\[
 \mathcal Z_x=
 \{x_d:1\le d\le n,\ d\equiv n\pmod2,\ \pi_d(x)=0\}.
\]

\begin{proposition}[Exact scan-set formula]
\label{prop:adjacent-forbidden-formula}
For every message,
\begin{equation}
 U_x=\mathcal A_x\cup\{B(x)\}\cup\mathcal Z_x.
 \label{eq:adjacent-forbidden-formula}
\end{equation}
In particular, every scanned value is a coordinate of \(W(x)\).
\end{proposition}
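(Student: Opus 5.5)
The plan is to compute $U_x$ directly from its definition. The scan has two parts, the burst scan over $P_{2,d}$ and the one-deletion scan over $P_{1,d}$. I will show that the first produces exactly $\mathcal A_x$ and the second produces a subset of $\mathcal A_x\cup\{B(x)\}\cup\mathcal Z_x$ that contains $\{B(x)\}\cup\mathcal Z_x$. The union is then the formula \eqref{eq:adjacent-forbidden-formula}. The final sentence of the statement follows at once, since $\mathcal A_x$, $B(x)=W(x)_{n+2}$ and $\mathcal Z_x$ all consist of coordinates of $W(x)$.

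\emph{Burst part.} Deleting $W(x)_d W(x)_{d+1}$ leaves a word of length $n$. As in the proof of Lemma~\ref{lem:adjacent-locator-separation}, every survivor keeps its positional parity: the prefix is untouched and the suffix shifts by two. So the coordinates of $P_{2,d}(x)$ at positions $i\not\equiv n\pmod 2$ are exactly the surviving members of the class of $W(x)$ that contains $A(x)$. That class has sum zero by \eqref{eq:adjacent-zero-sums}, and the adjacent pair removes exactly one of its members. Hence $A(P_{2,d}(x))$ equals that deleted member. As $d$ ranges over $1,\dots,M-1$, every position of that class lies in some adjacent pair, so this part of the scan yields exactly $\mathcal A_x$.

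\emph{One-deletion part.} I split by the start $d$.
\begin{itemize}
\item For $d=n+2$ we get $p=(x,A(x))$, so the test holds with $y=x$ and contributes $B(x)$.
\item For $d=n+1$ we get $p=(x,B(x))$. This contributes $B(x)$ only when $B(x)=A(x)$, which is harmless.
\item For $d\le n$ we have $y=(x_1,\dots,x_{d-1},x_{d+1},\dots,x_n,A(x))$, and the test is $B(x)=A(y)$.
\end{itemize}

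For the last case, the symbols after $x_d$ change parity, and the appended $A(x)$ sits at position $n$, which lies in the $B$-class. Expanding $A(y)$ and simplifying in characteristic two should turn the test into
\[
\pi_{d-1}(x)=\mathbf 1[d\equiv n]\,x_d .
\]
The same bookkeeping should give
\[
B(y)=\pi_{d-1}(x)+\mathbf 1[d\not\equiv n]\,x_d .
\]
Combining the two identities splits this case by parity.
\begin{itemize}
\item If $d\equiv n$, the test reads $\pi_d(x)=0$ and the included value is $B(y)=x_d$. These contributions are exactly the elements of $\mathcal Z_x$.
\item If $d\not\equiv n$, the test reads $\pi_{d-1}(x)=0$ and the included value is $x_d$. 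This is a coordinate in the class of $A(x)$, so it is already in $\mathcal A_x$.
\end{itemize}

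The main obstacle is the index bookkeeping in the $d\le n$ case: tracking which original coordinates change parity class under the shift, and where the appended $A(x)$ lands. I would check the two displayed identities carefully, including the endpoint cases $d=1$ (where $\pi_0=0$) and $d=n$.
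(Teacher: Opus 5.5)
Your proposal is correct and follows essentially the paper's route. The burst clause yields exactly \(\mathcal A_x\) by parity preservation and the zero-sum rows, and the one-deletion clause splits into \(d\le n\) (by the parity of \(d\) relative to \(n\)), deletion of \(A(x)\), and deletion of \(B(x)\); both of your displayed identities, for the test and for \(B(y)\), are correct. The only cosmetic difference is that the paper obtains the included value \(B(y)=x_d\) from a total-sum observation (the word \(P_{1,d}(x)\) has sum \(x_d\), while \((y,A(y))\) has sum \(B(y)\)), rather than by expanding \(B(y)\) directly.
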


The set \(U_x\) is a sufficient scan set and may include harmless values
arising from a self-return \(y=x\).  The exact distinct-message boundary
list used for the \(n=4\) threshold is \(\mathcal L_x\) in
\eqref{eq:adjacent-n4-list}.

\begin{proof}
The two-deletion clause contributes exactly \(\mathcal A_x\).  An
adjacent pair contains one position of each parity, and shifting the suffix
by two preserves parity.  The surviving sum in the class containing
\(A(x)\) is therefore the deleted coordinate in that class.  Conversely,
as the adjacent start varies, every coordinate in that class is deleted by
some pair.

Now delete a data coordinate \(x_d\), and write
\[
 P_{1,d}(x)=(y,B(x)).
\]
A direct expansion of the two parity sums gives
\begin{equation}
 A(y)+B(x)=
 \begin{cases}
  \pi_d(x),&d\equiv n\pmod2,\\
  \pi_{d-1}(x),&d\not\equiv n\pmod2.
 \end{cases}
 \label{eq:adjacent-prefix-return-test}
\end{equation}
When the one-deletion clause is active, the total sum of
\(P_{1,d}(x)=(y,A(y))\) is both \(x_d\) and \(B(y)\), so its
contribution is \(x_d\).  In the second line of
\eqref{eq:adjacent-prefix-return-test}, this value already belongs to
\(\mathcal A_x\).  In the first line, the additional contributions are
exactly \(\mathcal Z_x\).  Deleting \(B(x)\) always contributes
\(B(x)\), while deleting \(A(x)\) adds no new value.  Combining the
one- and two-deletion clauses proves
\eqref{eq:adjacent-forbidden-formula}.
\end{proof}

The exact formula gives a smaller worst-case scan than the coordinate bound
\(|U_x|\le M\).

\begin{proposition}[Prefix-return cardinality bound]
\label{prop:adjacent-prefix-return-bound}
For every \(n\ge1\) and every message,
\begin{equation}
 |U_x|\le B_n,
 \label{eq:adjacent-forbidden-bound}
\end{equation}
where \(B_n\) is defined in
\eqref{eq:adjacent-scan-budget}.
\end{proposition}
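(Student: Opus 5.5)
The plan is to read the bound off the exact formula \eqref{eq:adjacent-forbidden-formula} of Proposition~\ref{prop:adjacent-forbidden-formula}. Write \(k=\lfloor n/2\rfloor\). The class of positions \(i\le M\) with \(i\not\equiv n\pmod2\) consists of \(k\) data positions together with position \(n+1\). Hence \(|\mathcal A_x|\le k+1\), and \(|\mathcal A_x\cup\{B(x)\}|\le k+2\). It then remains to show that \(\mathcal Z_x\) contributes few values outside \(\mathcal A_x\cup\{B(x)\}\): at most \(\lfloor(k-1)/2\rfloor\) for \(n=2k\) and at most \(\lceil k/2\rceil\) for \(n=2k+1\). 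The small cases \(n\le3\) will be checked directly from the formula.

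The key observation concerns two consecutive same-parity positions \(d-2<d\), both \(\equiv n\pmod2\), with \(\pi_{d-2}(x)=\pi_d(x)=0\). Subtracting gives \(x_{d-1}+x_d=0\), so \(x_d=x_{d-1}\). Since \(d-1\) lies in the opposite class, \(x_d\in\mathcal A_x\), and \(d\) contributes nothing new. The same argument with the convention \(\pi_0=0\) shows that for \(n\) even the position \(d=2\) never contributes a new value. At the other end, \(\pi_n(x)=A(x)+B(x)\). So if \(d=n\) lies in the zero set, then \(A(x)=B(x)\), and \(B(x)\) is already in \(\mathcal A_x\). This means the bonus element \(B(x)\) is lost in exactly that case.

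The count therefore becomes a path independent-set problem. The positions \(d\equiv n\pmod2\) with \(\pi_d(x)=0\) that contribute new values form a set with no two consecutive elements in the chain of same-parity positions.

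For \(n=2k\), the chain is \(2,4,\ldots,2k\). Position \(2\) is excluded, leaving a path on \(k-1\) vertices, whose independence number is \(\lceil(k-1)/2\rceil\). When \(k-1\) is odd, an independent set of that maximum size must contain both endpoints, in particular \(2k\). That kills \(B(x)\) and yields net extra \(\lfloor(k-1)/2\rfloor\).

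For \(n=2k+1\), the chain \(1,3,\ldots,2k+1\) has \(k+1\) vertices. If \(k\) is odd, the maximum \((k+1)/2=\lceil k/2\rceil\) is the bound directly. If \(k\) is even, the maximum \(k/2+1\) forces both endpoints, including \(2k+1=n\), and again loses \(B(x)\), leaving net \(k/2=\lceil k/2\rceil\).

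In both parities, any independent set containing the endpoint \(n\) costs the \(B(x)\) slot, so the net contribution is at most the stated amount. This gives \(|U_x|\le B_n\) as tabulated in \eqref{eq:adjacent-scan-budget}.

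The main obstacle I expect is the bookkeeping of this trade-off at the endpoint \(d=n\), together with checking that nothing else can collapse unexpectedly in a way that invalidates the argument. Collapses only help the upper bound, so I would structure the proof purely as an injection of contributing positions into an independent set. The endpoint cases \(n=1,2,3\) (\(B_n=2,2,3\)) will be verified by listing \(\mathcal A_x\), \(B(x)\), and \(\mathcal Z_x\) explicitly. For these, the same two observations suffice: \(\pi_2=0\) forces \(x_2=x_1\), and \(\pi_n=0\) forces \(A=B\).
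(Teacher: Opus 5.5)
Your proposal is correct and follows essentially the paper's own argument. Both use the base set \(\mathcal A_x\cup\{B(x)\}\) of size at most \(k+2\), the fact that two consecutive activated same-parity positions force \(x_d=x_{d-1}\in\mathcal A_x\) (with the first even position always excluded), and the trade-off that activation at \(d=n\) forces \(A(x)=B(x)\). Your path independent-set count is just a cleaner packaging of the paper's nonconsecutive-subset count.

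One small correction concerns the endpoints \(n=2,3\). There the two observations you cite do not suffice; you also need that \(A(x)\) equals the single data symbol in its class (\(x_1\), resp. \(x_2\)), so that \(\mathcal A_x\) is a singleton. Your explicit listing will reveal this, and it is exactly what the paper uses.
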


\begin{proof}
First let \(n=2k\) and write
\[
 x=(a_1,b_1,a_2,b_2,\ldots,a_k,b_k),\qquad
 p_j=\sum_{i=1}^j(a_i+b_i),\qquad
 J=\{j:p_j=0\}.
\]
Proposition~\ref{prop:adjacent-forbidden-formula} becomes
\begin{equation}
 U_x=\{a_1,\ldots,a_k,A(x),B(x)\}
       \cup\{b_j:j\in J\}.
 \label{eq:adjacent-even-forbidden-form}
\end{equation}
The first set has at most \(k+2\) values.  If \(1\in J\), then
\(b_1=a_1\); and if both \(j-1\) and \(j\) lie in \(J\), then
\(b_j=a_j\).  Thus the nonfinal indices that can contribute new values
form a nonconsecutive subset of \(\{2,\ldots,k-1\}\), of size at most
\(\lfloor(k-1)/2\rfloor\).  If \(k\in J\), then
\(A(x)=B(x)\), so a possible new value \(b_k\) is offset by the loss
of one distinct base value.  This proves the even line of
\eqref{eq:adjacent-scan-budget}; the case \(k=1\) gives
\(|U_x|\le2\) directly.

For \(n=2k+1\), write
\[
 x=(b_0,a_1,b_1,\ldots,a_k,b_k),\qquad
 p_j=b_0+\sum_{i=1}^j(a_i+b_i),\qquad
 J=\{0\le j\le k:p_j=0\}.
\]
Then
\begin{equation}
 U_x=\{a_1,\ldots,a_k,A(x),B(x)\}
       \cup\{b_j:j\in J\}.
 \label{eq:adjacent-odd-forbidden-form}
\end{equation}
The base again has at most \(k+2\) values.  Among
\(\{0,\ldots,k-1\}\), two consecutive activated indices cannot both
contribute new values, since \(p_{j-1}=p_j=0\) forces
\(b_j=a_j\).  Hence at most \(\lceil k/2\rceil\) nonfinal values
are new.  Activation at \(k\) forces \(A(x)=B(x)\) and has no net
effect on this bound.  When \(k=1\), also \(A(x)=a_1\), giving
\(|U_x|\le3\).  The remaining endpoint \(n=1\) is immediate from
\eqref{eq:adjacent-forbidden-formula}.
\end{proof}

For the following sharpness statement and its proof in
Appendix~\ref{app:adjacent-sharpness}, let \(q\) be any power of two and
instantiate \(A,B,W,P_{r,d}\), and \(U_x\) by the preceding formulas with
\(\F_q\) in place of \(\F_{\mathrm{adj}}\).

\begin{proposition}[Sharpness of the prefix-return bound]
\label{prop:adjacent-prefix-return-sharpness}
For every power of two \(q\) and every \(n\ge1\),
\[
 \max_{x\in\F_q^n}|U_x|=B_n
 \quad\Longleftrightarrow\quad q\ge B_n.
\]
Equivalently, the least power-of-two field size at which
\eqref{eq:adjacent-forbidden-bound} is attained is
\[
 2^{\lceil\log_2 B_n\rceil}.
\]
In particular, the bound is attained over the field
\(\F_{\mathrm{adj}}\) used in
Theorem~\ref{thm:adjacent-closed-form}.
\end{proposition}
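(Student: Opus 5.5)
The plan is to prove the two directions separately. The forward implication is a counting statement, and the reverse implication is an explicit construction that meets the bound of Proposition~\ref{prop:adjacent-prefix-return-bound}.

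\emph{Necessity.} By Proposition~\ref{prop:adjacent-forbidden-formula}, \(U_x\) is a set of elements of \(\F_q\), so \(|U_x|\le q\). If the maximum over \(x\) equals \(B_n\), then \(q\ge B_n\). The reformulation as \(2^{\lceil\log_2 B_n\rceil}\) then follows from monotonicity in \(q\), once sufficiency is established. The final clause follows because \(q_{\mathrm{adj}}\ge K\ge R_n>B_n\).

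\emph{Sufficiency.} Assume \(q\ge B_n\). Proposition~\ref{prop:adjacent-prefix-return-bound} already gives \(|U_x|\le B_n\), so it suffices to exhibit one \(x\) with \(|U_x|\ge B_n\). Accidental extra activations of \(J\) can only enlarge \(U_x\), so they are harmless. I will work in the normal forms \eqref{eq:adjacent-even-forbidden-form} and \eqref{eq:adjacent-odd-forbidden-form}. In the even case \(n=2k\) I would choose the activated set
\[
 J=\{2,4,\ldots\}\cap\{2,\ldots,k-1\},
\]
which has exactly \(\lfloor(k-1)/2\rfloor\) elements. I then fix a target list of \(B_n\) distinct field elements and assign them to \(a_1,\ldots,a_k\), to \(b_j\) for \(j\in J\), and to \(A(x),B(x)\).

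The prefix conditions \(p_j=0\) for \(j\in J\) are imposed through the free odd-indexed entries. Since \(j-1\notin J\), I set
\[
 b_{j-1}=p_{j-2}+a_{j-1}+a_j+b_j,
\]
which forces \(p_{j-1}=a_j+b_j\) and hence \(p_j=0\). These steering values \(b_{j-1}\) may repeat earlier values, because they do not enter \(U_x\). The quantity \(A(x)=\sum a_i\) is determined by the \(a\)'s. Therefore I will choose the \(a\)'s so that their sum is a prescribed element outside the rest of the list; for example, I adjust one \(a_i\) after fixing the others. Finally, the last free coordinate \(b_k\) sets \(B(x)\) to its prescribed value. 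The requirement \(B(x)\ne A(x)\) also rules out activation at \(k\). The odd case \(n=2k+1\) is handled the same way: I take \(J\) to contain \(\lceil k/2\rceil\) nonconsecutive indices in \(\{0,\ldots,k-1\}\), starting at \(0\) (so \(b_0=0\) is a new value), and I use the intermediate \(b\)'s as steering variables.

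The main obstacle is the tight regime in which \(B_n\) is close to \(q\), and especially \(B_n=q\). There every field element must appear, and additive constraints interfere. For instance, the sum of all elements of \(\F_q\) is \(0\) for \(q\ge4\), so \(A(x)\) and \(B(x)\) cannot be chosen freely from the leftover set. In that regime I would first try to exploit the residual freedom: permute which target element plays the role of \(A(x)\), and reassign the activated \(b_j\)'s. I would verify that some assignment satisfies the single linear constraint \(\sum a_i=A(x)\) with all values distinct. The small cases \(n\le3\), and the few \(n\) where \(B_n\) equals a power of two with little slack, I would settle by exhibiting explicit words. That finite check is the step I expect to require the most care.
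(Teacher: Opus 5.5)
\textbf{Gap in sufficiency.} Your necessity argument and the final clause coincide with the paper's. Your sufficiency reduction is also sound and is essentially the paper's telescoping construction: take \(J=\{2,4,\ldots\}\cap\{2,\ldots,k-1\}\), steer through \(b_{j-1}\), let \(b_k\) fix \(B(x)\ne A(x)\), and ignore extra activations. Everything then reduces to choosing distinct \(a_1,\ldots,a_k\) whose XOR \(A\) is not among them, while leaving \(r+1\) further elements for \(B\) and the \(b_j\) with \(j\in J\). The room condition is exactly \(q\ge B_n\).

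But that choice is the whole content of the proposition, and your proposal does not supply it. Fixing \(a_1,\ldots,a_{k-1}\) with partial XOR \(\sigma\ne0\) and then adjusting \(a_k\) requires
\[
a_k\notin\{a_i\}_{i<k}\cup(\sigma+\{a_i\}_{i<k}),
\]
a set of up to \(2k-2\) elements. So the step is only guaranteed when \(q>2k-2\). Since \(B_n\approx 3k/2\), the hypothesis \(q\ge B_n\) leaves, for every power of two \(q\ge16\), a range of lengths of size roughly \(q/6\) with \(B_n\le q\le 2k-2\). Already at \(q=16\) this happens for \(n=18,19,20\). The tight regime is therefore an infinite family, not a few values of \(n\) that explicit words could settle, and ``exploiting residual freedom'' is not yet an argument.

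\textbf{The missing idea.} View \(\{a_1,\ldots,a_k,A\}\) as a zero-XOR \((k+1)\)-set. Since the elements of \(\F_q\) have total XOR zero for \(q\ge4\), this is equivalent to taking its complement to be a zero-XOR set \(Z\) of size \(q-k-1\). Then take \(B\) and the \(c_j\) from \(Z\). This needs the additive fact that, for \(q\ge4\), zero-XOR subsets of \(\F_q\) exist in every size except \(2\) and \(q-2\). The paper proves this as Lemma~\ref{lem:zero-xor-spectrum} by induction through a hyperplane, and checks that \(q-k-1\notin\{2,q-2\}\) in the relevant range. It also needs a zero-avoiding version, Lemma~\ref{lem:zero-avoiding-xor}, for \(n=2k+1\) with \(k\) odd. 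There \(J\) is forced to be \(\{0,2,\ldots,k-1\}\), and \(b_0=0\) must be a new value.

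\textbf{Your odd-case choice of \(J\).} Starting \(J\) at \(0\) for every odd \(n\) fails when \(k\) is even. At \(n=9\), \(q=8=B_9\), it would require five distinct nonzero elements \(a_1,\ldots,a_4,A\) of \(\F_8\) with XOR zero. Their complement \(\{B,c_2\}\) in \(\F_8\setminus\{0\}\) would then be a zero-XOR pair of distinct elements, which is impossible. The paper instead takes \(J=\{1,3,\ldots,k-1\}\) for even \(k\), so that \(0\) is not needed.
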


\begin{proof}
The upper bound is
Proposition~\ref{prop:adjacent-prefix-return-bound}, and
\(U_x\subseteq\F_q\) makes equality impossible when \(q<B_n\).
Appendix~\ref{app:adjacent-sharpness} constructs an \(x\) with
\(|U_x|=B_n\) whenever \(q\ge B_n\).  Finally,
\(q_{\mathrm{adj}}\ge K=M\cdot o\cdot e\cdot(B_n+1)>B_n\).
\end{proof}

Thus the cardinality bound for \(U_x\) is sharp.  A joint choice of candidate
labels across different messages could lead to a smaller alphabet; the
present pointwise argument does not optimize that choice.

Represent the three residues by
\[
 0\le t<M,\qquad 0\le t_o<o,\qquad 0\le t_e<e.
\]
For \(0\le j<R_n\), define the mixed-radix field label
\begin{equation}
 \iota(t,t_o,t_e,j)
   =\bigl(((t\,o+t_o)e+t_e)R_n+j\bigr).
 \label{eq:adjacent-mixed-radix}
\end{equation}
The right-hand side ranges bijectively over
\(\{0,\ldots,K-1\}\), which lies among the public field labels because
\(q_{\mathrm{adj}}\ge K\).  If
\(\Lambda(x)=(t,t_o,t_e)\), choose
\begin{equation}
 j_x=\min\{0\le j<R_n:
          \iota(t,t_o,t_e,j)\notin U_x\},
 \qquad
 H(x)=\iota(t,t_o,t_e,j_x),
 \label{eq:adjacent-final-label}
\end{equation}
and define the literal-systematic word
\begin{equation}
 E_n^{\mathrm{adj}}(x)
   =(W(x),H(x))=(x,A(x),B(x),H(x)).
 \label{eq:adjacent-encoder}
\end{equation}

\begin{lemma}[One-sided final-label avoidance]
\label{lem:adjacent-final-label-avoidance}
The choice in \eqref{eq:adjacent-final-label} always exists.  Moreover,
suppose equal burst outputs of a common length \(r\in\{1,2\}\) retain
\(H(x)\) on the \(x\)-side and delete \(H(y)\) on the other side.  Then
\(H(x)\in U_x\).
\end{lemma}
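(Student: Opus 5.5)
Existence is a pigeonhole count: for fixed \((t,t_o,t_e)\) the labels \(\iota(t,t_o,t_e,j)\), \(0\le j<R_n\), are \(R_n=B_n+1\) distinct field elements by the bijectivity of the mixed-radix map \eqref{eq:adjacent-mixed-radix}. Proposition~\ref{prop:adjacent-prefix-return-bound} gives \(|U_x|\le B_n\), so at least one of them lies outside \(U_x\), and the minimum in \eqref{eq:adjacent-final-label} is attained.

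For the avoidance statement, a burst output of length \(r\) on the \(x\)-side that keeps \(H(x)\) deletes \(r\) symbols from \(W(x)\) only. Hence the common output is \((P_{r,d}(x),H(x))\), whose last symbol is \(H(x)\) and whose prefix has length \(M-r\). On the \(y\)-side the burst removes \(H(y)\). If \(r=1\), the output is exactly \(W(y)=(y,A(y),B(y))\). If \(r=2\), it is \(\mathsf B_{1,d'}(W(y))\) with the burst covering \((W(y)_M,H(y))=(B(y),H(y))\), so the output is \((y,A(y))\). I will compare these outputs coordinatewise.

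\emph{Case \(r=1\).} Equality gives \(P_{1,d}(x)=(y,A(y))\) and \(H(x)=B(y)\). This is precisely the one-deletion clause in the definition of \(U_x\): for the start \(d\), the first \(n\) symbols of \(p=P_{1,d}(x)\) form \(y\), and \(p=(y,A(y))\). That clause puts \(B(y)\), hence \(H(x)\), into \(U_x\).

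\emph{Case \(r=2\).} Equality gives \(P_{2,d}(x)=(y',A(y'))\), where \(y'\) consists of the first \(n-1\)... On checking the lengths, the common word \((P_{2,d}(x),H(x))\) has length \(M-1=n+1\), so \(P_{2,d}(x)=y\) has length \(n\) and \(H(x)=A(y)\). The two-deletion clause of \(U_x\) includes \(A(P_{2,d}(x))\) for every \(d\), so \(H(x)\in U_x\).

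The only real obstacle is this bookkeeping. One must confirm that the \(y\)-side deletion removing \(H(y)\) forces the shapes above: for \(r=2\), the burst must be exactly the final pair, and for \(r=1\), the deleted symbol is \(H(y)\) itself. One must also confirm that the definition of \(U_x\) was written with exactly these index conventions. I would also note that no distinctness of \(x,y\) is needed, which is consistent with the remark that \(U_x\) may contain self-return values.
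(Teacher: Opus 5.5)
Your proof is correct and follows the paper's argument. Existence is the same pigeonhole count: the \(R_n=B_n+1\) distinct labels in the locator fiber, against \(|U_x|\le B_n\). The avoidance part makes the same observation, that the \(y\)-side output is \(W(y)\) for \(r=1\) and \((y,A(y))\) for \(r=2\), so coordinatewise comparison with \((P_{r,d}(x),H(x))\) lands exactly in the one-deletion and two-deletion clauses of \(U_x\). In a write-up, delete the abandoned false start in the \(r=2\) case (the \(y'\) of length \(n-1\)).
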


\begin{proof}
For the fixed locator triple of \(x\), the \(R_n=B_n+1\) values in
\eqref{eq:adjacent-mixed-radix} are distinct.  At most \(B_n\) of them
belong to \(U_x\) by \eqref{eq:adjacent-forbidden-bound}, proving
existence.

If \(r=1\), deleting \(H(y)\) leaves \(W(y)\), whereas the side retaining
\(H(x)\) has the form
\[
 (P_{1,d}(x),H(x)).
\]
Equality says precisely that
\(P_{1,d}(x)=(y,A(y))\) and \(H(x)=B(y)\), which is the first clause in
the definition of \(U_x\).  If \(r=2\), the only adjacent pair that deletes
\(H(y)\) is \((B(y),H(y))\), and that side leaves \((y,A(y))\).
The side retaining \(H(x)\) has the form
\((P_{2,d}(x),H(x))\).  Equality therefore gives
\(P_{2,d}(x)=y\) and \(H(x)=A(y)\), which is the second clause.  Thus in
both cases the surviving label belongs to \(U_x\).
\end{proof}

\subsection{Correction, algorithms, and comparison}

\begin{proof}[Proof of Theorem~\ref{thm:adjacent-closed-form}]
The received length determines \(r\in\{0,1,2\}\).  Suppose two encoded
messages give the same length-\(r\) burst output.  The following five cases
exhaust all burst starts, including those crossing the data--trailer
boundary.

\begin{enumerate}[leftmargin=*]
\item If \(r=0\), equality of the systematic prefixes gives \(x=y\).

\item Suppose \(H\) survives on both sides.  The two received prefixes then
      have the form \(P_{r,d}(x)=P_{r,d'}(y)\), and equality of the final
      received symbols gives \(H(x)=H(y)\).  The map \(\iota\) is injective
      in all four displayed arguments, so equality of the \(H\)-values
      implies \(\Lambda(x)=\Lambda(y)\).  Lemma
      \ref{lem:adjacent-locator-separation} therefore gives \(x=y\).

\item Suppose \(H\) survives on exactly one side.  After exchanging the two
      messages if necessary, Lemma
      \ref{lem:adjacent-final-label-avoidance} places the surviving value
      \(H(x)\) in \(U_x\), contrary to
      \eqref{eq:adjacent-final-label}.

\item Suppose \(H\) is deleted on both sides and \(r=1\).  Both bursts
      delete the last coordinate, so the two outputs are \(W(x)\) and
      \(W(y)\).  Their first \(n\) coordinates give \(x=y\).

\item Suppose \(H\) is deleted on both sides and \(r=2\).  On both sides the
      deleted pair must be \((B,H)\), so the outputs are
      \((x,A(x))\) and \((y,A(y))\).  Again their first \(n\) coordinates
      give \(x=y\).
\end{enumerate}

This proves zero-error correction for a burst of length at most two.  For
the alphabet estimate, minimality of the chosen power of two gives
\begin{equation}
 q_{\mathrm{adj}}<2K=2M\cdot o\cdot e\cdot R_n
 \le \frac{M^3R_n}{2}
 \le \frac38M^4+\frac34M^3
 <M^4=(n+2)^4,
 \label{eq:adjacent-alphabet-bound}
\end{equation}
where \(4oe\le M^2\),
\(R_n\le3M/4+3/2\), and \(M\ge3\).

The encoder computes \(A(x)\), \(B(x)\), and the three ascent syndromes in
\(O(n)\) operations on \(O(\log q_{\mathrm{adj}})\)-bit labels.  A direct
construction of \(U_x\) scans \(2M-1\) starts and recomputes the relevant
parity sum for each start, using \(O(n^2)\) label operations.  It then tests
at most \(R_n\) labels in one locator fiber.  Thus encoding has polynomial
bit complexity.

For decoding, reject a word whose length does not reveal a value
\(r\in\{0,1,2\}\).  If \(r=0\), take the first \(n\) symbols as the
candidate message and return it exactly when re-encoding reproduces the
received word.  For \(r\in\{1,2\}\), enumerate the \(O(n)\) possible burst
starts.  When \(H\) survives, split off the last received symbol, attempt
the inverse of \(P_{r,d}\) from
Lemma~\ref{lem:adjacent-parity-inversion}, and discard the proposed start
when its parity reconstruction fails the image test.  When \(H\) is
deleted, the only possibilities are the final coordinate for \(r=1\) and
the final pair \((B,H)\) for \(r=2\); in either case the first \(n\)
received symbols give the candidate message.  Re-encode every candidate,
retain it only when deletion at the proposed start reproduces the received
word, and deduplicate equal candidates.  A valid channel output retains its
true message, while the correction proof gives at most one retained
message.  Direct re-encoding at every start takes \(O(n^3)\) label
operations, up to polynomial factors in \(\log q_{\mathrm{adj}}\).

The canonical field representation can itself be found in polynomial time
in \(n\), since \(q_{\mathrm{adj}}<(n+2)^4\).  Once it is fixed, encoding
and decoding use XOR, order comparisons, and direct enumeration, with all
public data determined by \(n\).  This proves all assertions of the theorem.
\end{proof}

\section{Converse bounds for adjacent bursts}
\label{sec:adjacent-converses}

\begin{theorem}[Adjacent-burst converse]
\label{thm:adjacent-converse}
Let \(n\ge1\) and \(r_{\mathrm{red}}\ge0\) be integers, let \(q\ge2\),
and let \(\C\subseteq\mathcal A^{n+r_{\mathrm{red}}}\) have
\(|\C|=q^n\), where \(|\mathcal A|=q\) and \(n+r_{\mathrm{red}}\ge4\).
If \(\C\) corrects one exact
adjacent pair of deletions, then
\[
 (q-1)(n+r_{\mathrm{red}}-3)\le q^{r_{\mathrm{red}}-1}-q^{2-n}.
\]
Consequently \(r_{\mathrm{red}}\ge3\) for \(n\ge3\), and
\(r_{\mathrm{red}}=3\) forces
\(q\ge n-1\). For literal-systematic correction of a burst of length at
most two, \(r_{\mathrm{red}}\ge3\) already for \(n=2\). Thus
Theorem~\ref{thm:adjacent-closed-form} has the optimal trailer count for
that channel for every \(n\ge2\).
\end{theorem}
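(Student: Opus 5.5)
The plan is a sphere-packing count on adjacent-pair descendants. Write \(L=n+r_{\mathrm{red}}\) and, for \(c\in\mathcal A^L\), let \(\mathsf B_2(c)=\{\mathsf B_{2,d}(c):1\le d\le L-1\}\). Correction of one exact adjacent pair means these sets are pairwise disjoint subsets of \(\mathcal A^{L-2}\), so
\[
 \sum_{c\in\C}|\mathsf B_2(c)|\le q^{L-2}.
\]
Multiplying the target inequality by \(q^{n-1}\) turns it into
\[
 (q-1)(L-3)q^{n-1}+q\le q^{L-2}.
\]
So the whole proof reduces to showing that every code of size \(q^n\) satisfies
\[
 \sum_{c\in\C}|\mathsf B_2(c)|\ge (q-1)(L-3)q^{n-1}+q.
\]

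\emph{Step 1: exact count of distinct outputs.} First I will show that \(\mathsf B_{2,d}(c)=\mathsf B_{2,d+1}(c)\) exactly when \(c_d=c_{d+2}\). More generally, \(\mathsf B_{2,d}(c)=\mathsf B_{2,d'}(c)\) for \(d<d'\) iff \(c_i=c_{i+2}\) for all \(d\le i<d'\), which is a telescoping check. Consequently
\[
 |\mathsf B_2(c)|=1+\beta(c),\qquad \beta(c)=\#\{1\le i\le L-2:\ c_i\ne c_{i+2}\}.
\]
Here \(\beta(c)\) counts the breaks of the period-two pattern.

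\emph{Step 2: extremal lower bound.} The number of words with exactly \(b\) breaks is \(q^2\binom{L-2}{b}(q-1)^b\), since one chooses \(c_1,c_2\) freely and then, at each \(i\), either copies \(c_i\) or chooses one of the \(q-1\) other values. The sum \(\sum_{c\in\C}(1+\beta(c))\) is minimized by filling \(\C\) greedily with the words of fewest breaks. At most \(q^2\) codewords can have \(\beta=0\). If I charge only breaks \(\le1\) against the remaining \(q^n-q^2\) words, I get the lower bound \(q^n+q^n-q^2\). That is too weak, so the greedy bound has to be carried further into the layers \(b=2,3,\dots\).

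\emph{Main obstacle.} Matching the precise constant \((q-1)(L-3)q^{n-1}+q\) is the hard part. I would first try a cleaner averaging substitute: pair words by their \(\beta\)-values along a bijection, or use the fact that \(\beta(c)\ge L-2-\#\{i:c_i=c_{i+2}\}\), and bound how many codewords can have many equalities using the \(q^{n-1}\)-sized fibers that appear in the stated inequality. If the greedy computation does not close exactly, I would instead prove a slightly different but still sufficient inequality for the stated consequences.

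\emph{Consequences.} If \(r_{\mathrm{red}}\le2\) and \(n\ge3\), the right side \(q^{r_{\mathrm{red}}-1}-q^{2-n}\) is below \(q\), while the left side is at least \((q-1)(n-1)\ge q\). Hence \(r_{\mathrm{red}}\ge3\). For \(r_{\mathrm{red}}=3\) the inequality reads \((q-1)n\le q^2-q^{2-n}<q^2\), which forces \(q\ge n-1\).

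For the literal-systematic claim at \(n=2\) with two trailer symbols (so \(L=4\)), I will argue directly rather than through the inequality, which I expect to be the cleanest route:
\begin{enumerate}
\item \emph{Adjacent data pair.} Deleting both data symbols leaves the trailer \(p(x)\in\mathcal A^2\). Since bursts of length two must be corrected, \(p\) is injective on all \(q^2\) messages, so \(p\) is a bijection onto \(\mathcal A^2\).
\item \emph{Collision from single deletions.} Take \(x=ab\) with \(a\ne b\), and delete \(x_2\). The descendant is \(a\,p_1(x)p_2(x)\). Because \(p\) is surjective, this descendant also arises from some other codeword by a single deletion, either of one of its data symbols or of a trailer symbol. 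I would pick the message \(y=a\,p_1(x)\) and check whether its codeword, after deleting one trailer coordinate, produces the same word.
\item \emph{Pigeonhole fallback.} If that specific choice fails, I would count the single-deletion descendants of the \(q^2\) codewords, which lie in \(\mathcal A^3\), and show they cannot be pairwise disjoint.
\end{enumerate}
This yields \(r_{\mathrm{red}}\ge3\) at \(n=2\), which combined with Theorem~\ref{thm:adjacent-closed-form} gives the optimal trailer count for every \(n\ge2\).
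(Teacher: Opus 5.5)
\textbf{Gap 1: the main inequality cannot be reached by ball counting.} Your Step~1 is correct. The identity $|\mathsf B_2(c)|=1+\beta(c)$ is the paper's identity $|\mathcal B_2(z)|=\operatorname{runs}(z_o)+\operatorname{runs}(z_e)-1$, because $\beta$ counts run boundaries in the odd and even rows. The problem is the reduction that follows. The inequality $\sum_{c\in\C}|\mathsf B_2(c)|\ge(q-1)(L-3)q^{n-1}+q$ is false for an arbitrary set of $q^n$ words. Greedy layer-filling uses nothing about $\C$ except its size, so it cannot prove it. For $n=3$, $q=2$, $L=6$, the four $2$-periodic words plus four one-break words give $4+8=12<14$.

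Ball counting does not rescue the consequences either. At $r_{\mathrm{red}}=3$, $q=2$, $n=4$, the four $2$-periodic words plus twelve one-break words have total ball size $28\le 2^5$, so $q\ge n-1$ is never reached.

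The missing idea is a fractional cover on the received side rather than a count over codewords. Give each $y\in\mathcal A^{L-2}$ the weight $w(y)=1/(1+\beta(y))$. An adjacent pair deletion removes exactly one symbol from each parity row and keeps every other symbol in its row. Deleting a symbol cannot increase the number of runs, so $\beta(y)\le\beta(x)$ for every $y\in\mathsf B_2(x)$. Together with Step~1, every shadow therefore carries weight at least $1$. Disjointness then gives
\[
 |\C|\le\sum_{y\in\mathcal A^{L-2}}w(y)
 =\sum_{b=0}^{L-4}\frac{q^2\binom{L-4}{b}(q-1)^b}{b+1}
 =\frac{q^{L-1}-q^2}{(q-1)(L-3)},
\]
using your own layer count for words of length $L-2$ and the identity $\binom{L-4}{b}/(b+1)=\binom{L-3}{b+1}/(L-3)$. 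Dividing by $q^n$ gives exactly the stated inequality; this is the paper's argument.

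There is also a minor slip in the consequences. For $r_{\mathrm{red}}\le1$ the left side is not at least $(q-1)(n-1)$. It is, however, at least $q-1\ge1$ because $n+r_{\mathrm{red}}\ge4$, while the right side is below $1$, so the conclusion survives.

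\textbf{Gap 2: the $n=2$ endpoint.} Your observation that $p$ is a bijection is correct, but it does not produce a collision, and neither of your two routes supplies one.
\begin{enumerate}
\item Surjectivity of $p$ does not make $a\,p_1(x)p_2(x)$ a descendant of another codeword. For $p(a,b)=(a,b)$, whose codewords are $abab$, the word $aab$ obtained by deleting $x_2$ lies in no other codeword's one-deletion ball. Your candidate $y=a\,p_1(x)=aa$ has codeword $aaaa$, so it does not work either.
\item The pigeonhole fallback fails for $q\ge4$. A length-four word has at most four one-deletion descendants, and $4q^2\le q^3$.
\end{enumerate}
What you need is the stronger structure that your Step~1 already provides. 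With $L=4$, the $q^2$ nonempty, disjoint pair-deletion balls lie in $\mathcal A^2$, which has size $q^2$. So every ball is a singleton, which means $\beta(c)=0$ by Step~1. Every codeword is then $2$-periodic, and systematicity forces $E(a,b)=abab$. For $a\ne b$, the codewords $abab$ and $baba$ share the one-deletion descendant $aba$, which is the required collision.
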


The rational expression below appears for even block length in
Wang--Tang--Sima--Gabrys--Farnoud
\cite[Theorem~3]{WangTangSimaGabrysFarnoud2024}; see also the binary
precursor~\cite[Theorem~4]{SchoenyWachterZehGabrysYaakobi2017} and the
recent summary in Wang--Kong--Yaakobi--Duman
\cite[Table~I]{WangKongYaakobiDuman2026}.  For completeness, the following
fractional-cover proof establishes the expression for every \(L\ge4\),
without a parity restriction.

\begin{proposition}[Adjacent-pair fractional-cover bound]
\label{prop:adjacent-fractional-packing}
If \(\C\subseteq\mathcal A^L\), \(L\ge4\), corrects one exact adjacent
pair of deletions and \(|\mathcal A|=q\ge2\), then
\begin{equation}
 |\C|\le
 \left\lfloor
 \frac{q^{L-1}-q^2}{(q-1)(L-3)}
 \right\rfloor .
 \label{eq:adjacent-fractional-bound}
\end{equation}
\end{proposition}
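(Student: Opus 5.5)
The plan is a fractional-cover (weighted LP) argument over the set \(\mathcal A^{L-2}\) of exact adjacent-pair descendants. Its key combinatorial input is the odd/even row decomposition already used in Section~\ref{sec:adjacent}.

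\emph{Counting descendants.} For \(z\in\mathcal A^m\), put
\[
 s(z)=\#\{1\le i\le m-2:\ z_i\ne z_{i+2}\}.
\]
Equivalently, \(s(z)\) is the total number of value changes inside the two step-two subsequences, the odd row and the even row of \(z\). First I will show that for \(x\in\mathcal A^L\) the set of distinct length-two burst descendants satisfies
\[
 \bigl|\{\mathsf B_{2,d}(x):1\le d\le L-1\}\bigr|=s(x)+1.
\]
Two starts \(d<d'\) give the same output exactly when every step from \(d\) to \(d'\) is of the form \(x_j=x_{j+2}\). This holds because \(\mathsf B_{2,d}(x)\) and \(\mathsf B_{2,d+1}(x)\) differ only in the single coordinate \(x_{d+2}\) versus \(x_d\). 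Hence the distinct outputs correspond to the blocks cut out by the indices with \(x_d\ne x_{d+2}\), which gives \(s(x)+1\) of them.

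\emph{Monotonicity.} Next I will prove \(s(y)\le s(x)\) whenever \(y=\mathsf B_{2,d}(x)\). An adjacent pair removes exactly one symbol from each row, and the two-place shift of the suffix preserves the row of every surviving symbol. So each row of \(y\) is a one-deletion descendant of the corresponding row of \(x\). Deleting one symbol from a sequence cannot increase its number of value changes, because merging two neighbours replaces at most two changes by at most one.

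\emph{Weights and summation.} Assign each \(y\in\mathcal A^{L-2}\) the weight \(\omega(y)=1/(s(y)+1)\). For every codeword \(x\), the two steps above give
\[
 \sum_{y\text{ descendant of }x}\omega(y)\ \ge\ \frac{s(x)+1}{s(x)+1}=1 .
\]
The descendant sets of distinct codewords are disjoint, so summing over \(\C\) yields \(|\C|\le\sum_{y}\omega(y)\). It then remains to evaluate this sum. Since \(L\ge4\), both rows of \(y\) are nonempty, and \(y\) is determined by its two first row symbols (\(q^2\) choices) together with the pattern of changes. That pattern is a choice of \(j\) change positions among the \(L-4\) consecutive same-row pairs, each with \(q-1\) new values. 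Hence the number of \(y\) with \(s(y)=j\) is \(q^2\binom{L-4}{j}(q-1)^j\). Using the identity \(\binom{L-4}{j}/(j+1)=\binom{L-3}{j+1}/(L-3)\) and the binomial theorem, I will obtain
\[
 \sum_y\omega(y)=\frac{q^2\bigl(q^{L-3}-1\bigr)}{(q-1)(L-3)}=\frac{q^{L-1}-q^2}{(q-1)(L-3)} .
\]
Integrality of \(|\C|\) then gives the floor.

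The main obstacle I expect is the exact descendant count \(s(x)+1\), in particular checking carefully that outputs from non-consecutive starts are equal only through a chain of equalities \(x_j=x_{j+2}\), and the boundary starts \(d=1\) and \(d=L-1\). The monotonicity step and the final summation are routine once the row picture is in place.
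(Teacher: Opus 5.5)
Your proposal is correct and follows essentially the paper's route: your $s(z)+1$ is exactly the paper's $\mu_{\mathrm{adj}}(z)=\operatorname{runs}(z_o)+\operatorname{runs}(z_e)-1$, and your weight $1/(s(y)+1)$, the row-wise monotonicity, and the binomial summation are the paper's fractional cover. The differences are only in presentation. You count distinct descendants by comparing starts directly: for $d<d'$ the outputs differ only at positions $d\le i<d'$, as $x_{i+2}$ versus $x_i$, so no cancellation can occur, whereas the paper uses an alternating row-index path. You also count words with $s(y)=j$ from their change positions, whereas the paper uses Vandermonde's convolution.
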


\begin{proof}
For a word \(z\) of length at least two, let \(z_o\) and \(z_e\) be its
odd- and even-coordinate subsequences, let \(\operatorname{runs}(u)\)
denote the number of runs of a nonempty word \(u\), and put
\[
 \mathcal B_2(z)=
 \{\mathsf B_{2,d}(z):1\le d\le |z|-1\},
 \qquad
 \mu_{\mathrm{adj}}(z)=
 \operatorname{runs}(z_o)+\operatorname{runs}(z_e)-1.
\]
We first claim that
\begin{equation}
 |\mathcal B_2(z)|=\mu_{\mathrm{adj}}(z).
 \label{eq:adjacent-shadow-run-count}
\end{equation}
An adjacent deletion removes one coordinate from each row.  As its starting
position moves from left to right, the two row-deletion indices follow the
alternating path
\[
 (1,1),(2,1),(2,2),(3,2),(3,3),\ldots,
\]
truncated at the appropriate endpoint.  In one row, two deletion indices
give the same descendant exactly when the deleted coordinates lie in the
same run.  After quotienting the two index sets by these run classes, the
displayed path is coordinatewise nondecreasing and crosses every run-class
boundary in either row.  After consecutive repetitions are suppressed, it
never returns to a class pair that it has left.  It therefore has
\(1+(\operatorname{runs}(z_o)-1)+(\operatorname{runs}(z_e)-1)
=\mu_{\mathrm{adj}}(z)\)
distinct outputs, proving~\eqref{eq:adjacent-shadow-run-count} for both
parities of \(|z|\).

Now set \(m=L-2\).  If \(y\in\mathcal B_2(x)\), then each row of \(y\)
is a one-deletion descendant of the corresponding row of \(x\).  Deletion
cannot increase the number of runs, so
\(\mu_{\mathrm{adj}}(y)\le\mu_{\mathrm{adj}}(x)\).  Hence the weights
\[
 w(y)=\frac1{\mu_{\mathrm{adj}}(y)},\qquad y\in\mathcal A^m,
\]
form a fractional cover of the adjacent-deletion shadows: by
\eqref{eq:adjacent-shadow-run-count},
\[
 \sum_{y\in\mathcal B_2(x)}w(y)
 \ge \frac{|\mathcal B_2(x)|}{\mu_{\mathrm{adj}}(x)}=1.
\]

It remains to sum the weights.  Write
\(a=\lceil m/2\rceil\) and \(b=\lfloor m/2\rfloor\).
The number of length-\(m\) words with \(\mu_{\mathrm{adj}}(y)=i\) is
\[
 q^2(q-1)^{i-1}
 \sum_{r+s=i+1}
 \binom{a-1}{r-1}\binom{b-1}{s-1}
 =q^2(q-1)^{i-1}\binom{m-2}{i-1},
\]
where the last identity is Vandermonde's convolution.  Therefore
\begin{align*}
 \sum_{y\in\mathcal A^m}w(y)
 &=q^2\sum_{i=1}^{m-1}
   \frac{(q-1)^{i-1}}{i}\binom{m-2}{i-1}\\
 &=\frac{q^2(q^{m-1}-1)}{(q-1)(m-1)}
  =\frac{q^{L-1}-q^2}{(q-1)(L-3)}.
\end{align*}
The shadows of distinct codewords are disjoint, so summing the cover over
them bounds \(|\C|\) by this quantity.  Taking the integer part proves
\eqref{eq:adjacent-fractional-bound}; when \(L=4\), the same calculation
gives the endpoint value \(q^2\).
\end{proof}

\begin{proof}[Proof of Theorem~\ref{thm:adjacent-converse}]
Apply Proposition~\ref{prop:adjacent-fractional-packing} with
\(L=n+r_{\mathrm{red}}\) and \(|\C|=q^n\).  After division by \(q^n\) it gives
\[
 (q-1)(n+r_{\mathrm{red}}-3)
 \le q^{r_{\mathrm{red}}-1}-q^{2-n}.
\]
Deleting the first adjacent pair is injective on \(\C\), since equality
of two resulting words would itself be a channel collision.  Hence
\(q^n\le q^{n+r_{\mathrm{red}}-2}\), so \(r_{\mathrm{red}}\ge2\).
For \(r_{\mathrm{red}}=2\) and \(n\ge3\), the left side is at least
\(2(q-1)\ge q\), whereas the right side is strictly smaller than \(q\).
Thus \(r_{\mathrm{red}}\ge3\).

For \(r_{\mathrm{red}}=3\),
\[
 n(q-1)\le q^2-q^{2-n}<q^2.
\]
If \(q\le n-2\), then
\[
 n(q-1)\ge(q+2)(q-1)=q^2+q-2\ge q^2,
\]
a contradiction.  Hence \(q\ge n-1\).

It remains to check the endpoint used by the at-most-two statement.  Let
\(n=2\) and suppose that a literal-systematic encoder used
\(r_{\mathrm{red}}=2\).
Its \(q^2\) nonempty exact-pair deletion balls are disjoint subsets of the
\(q^2\) length-two outputs, so every ball is a singleton.  For a word
\(z_1z_2z_3z_4\), equality of the outputs obtained by deleting at starts
one and two forces \(z_1=z_3\), and equality at starts two and three
forces \(z_2=z_4\).  Thus every codeword is period two, and systematicity
forces \(E(a,b)=abab\).  For \(a\ne b\), however, the
one-deletion balls of \(abab\) and \(baba\) share both \(aba\) and
\(bab\), contradicting at-most-two correction.  Thus
\(r_{\mathrm{red}}\ge3\) at
\(n=2\), and Theorem~\ref{thm:adjacent-closed-form} supplies equality.
\end{proof}

\begin{remark}[Exact-two endpoints]
\label{rem:adjacent-converse-endpoints}
At \(n=1\), \(E(a)=aaa\) corrects a burst of length at most two with
only two redundant symbols.  At \(n=2\), \(E(a,b)=abab\) corrects an
exact adjacent pair with two redundant symbols, although the preceding
proof shows that it cannot correct the at-most-two channel.  Thus the
exact-two range \(n\ge3\) in the symbol-count converse is sharp.
\end{remark}

\subsection{A common-forbidden parent clique for residue-XOR prefixes}

\begin{theorem}[Residue-prefix architectural converse]
\label{thm:adjacent-parity-prefix-converse}
Let \(t\ge2\), \(n\ge2t\), and \(q=2^m\) with \(m\ge1\).
For \(1\le j\le t\), put
\[
 C_j(x)=
 \bigoplus_{\substack{1\le i\le n\\i\equiv n+j\pmod t}}x_i,
 \qquad
 W_{n,t}(x)=(x,C_1(x),\ldots,C_t(x)).
\]
If an arbitrary function \(H:\F_q^n\to\F_q\) makes the family
\(\{(W_{n,t}(x),H(x)):x\in\F_q^n\}\) correct one exact adjacent burst
of length \(t\), then
\(q\ge n+\lfloor n/t\rfloor+2\).
\end{theorem}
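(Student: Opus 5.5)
The plan is a clique-plus-forbidden-set counting argument on a single received word. Codewords \((W_{n,t}(x),H(x))\) have length \(n+t+1\), and a length-\(t\) burst leaves a word of length \(n+1\). I will fix one target word \(\zeta\in\F_q^{n}\) and look at all parents whose burst output has the form \((\zeta,c)\) for some final symbol \(c\). There are two kinds of such parents.

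\emph{Type~I parents} retain \(H\). Their burst lies inside \(W_{n,t}(x)\), so \(\mathsf B_{t,d}(W_{n,t}(x))=\zeta\) and the output is \((\zeta,H(x))\). Two distinct Type~I parents \(x\ne x'\) would collide if \(H(x)=H(x')\). Hence \(H\) is injective on the Type~I set \(\mathcal P_\zeta\), which is the clique.

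\emph{Type~II parents} delete \(H\) together with the final \(t-1\) check symbols, so their output is \((y,C_1(y))\). Since the target prefix \(\zeta\) has length \(n\), such a parent is forced to be \(y=\zeta\), and its output is \((\zeta,C_1(\zeta))\). More generally, any parent whose burst removes \(H\) contributes a fixed value \(c\) that no Type~I parent other than itself may take as \(H\). This gives a common forbidden set \(\mathcal F_\zeta\). For \(t\ge2\), the only burst that removes \(H\) is the terminal one, so I expect \(\mathcal F_\zeta\) to be small. I will therefore also consider Type~I outputs whose last retained symbol is a check symbol, rather than \(H\), and which collide with another parent's \(H\)-retaining output. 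Together these give \(q\ge|\mathcal P_\zeta|+|\mathcal F_\zeta\setminus H(\mathcal P_\zeta)|\).

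The core step is to choose \(\zeta\) so that the parent set has size \(n+\lfloor n/t\rfloor+2\) in total. The inversion principle of Lemma~\ref{lem:adjacent-parity-inversion} generalizes to \(t\) residue classes, because a length-\(t\) burst removes exactly one symbol from each class. So for every start \(d\), at most one parent satisfies \(\mathsf B_{t,d}(W_{n,t}(x))=\zeta\), and it exists exactly when the reconstructed checks are consistent. There are \(n+1\) starts, from \(d=1\) through \(d=n+1\), ending just before \(H\). I will choose \(\zeta\) so that every start produces a parent and so that these parents are pairwise distinct. This gives about \(n+1\) clique members. The extra \(\lfloor n/t\rfloor+1\) will come from forbidden values. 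These arise when a reconstructed parent coincides with \(\zeta\) padded by prefix XOR conditions, in analogy with \(\mathcal Z_x\) in Proposition~\ref{prop:adjacent-forbidden-formula}. Such coincidences occur at positions \(d\equiv n\pmod t\), i.e.\ roughly once per block of \(t\), which is where \(\lfloor n/t\rfloor\) enters. The terminal parent supplies the final \(+1\).

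The main obstacle is distinctness, and I will route it through the clique rather than through the values. Distinctness of the clique members is combinatorial: parents from different starts will differ in a coordinate fixed by the start, provided \(\zeta\) has no long constant runs in any residue class. The forbidden values are a different matter, since they must be distinct field elements that also differ from everything \(H\) already takes. This is possible only if \(q\) is large, so I will argue by contradiction. Assume \(q\le n+\lfloor n/t\rfloor+1\). I will build \(\zeta\) with residue-class entries, and hence forbidden values, that enumerate as many distinct elements of \(\F_q\) as the field allows. The clique is kept large by the position-based distinctness, so the total count of parents exceeds \(q\) in any event. In that case \(H\) restricted to the clique, together with the forbidden set, has no room, and the contradiction follows. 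If this bookkeeping fails for small \(m\), the fallback is to take \(\zeta\) with zero prefix XORs at every block boundary. This makes \(\mathcal F_\zeta\) consist of clique members' own coordinates, so that it is forced into \(H(\mathcal P_\zeta)\)'s complement and the contradiction no longer depends on the values being distinct.
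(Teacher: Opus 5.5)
Your single-received-word framework cannot reach the bound, and the step that is supposed to supply the extra \(\lfloor n/t\rfloor+1\) values does not exist.

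Consider the received words \((\zeta,c)\). A codeword that keeps \(H\) has its burst at one of the \(n+1\) starts \(d\le n+1\), so \(|\mathcal P_\zeta|\le n+1\). A codeword that loses \(H\) must lose the terminal block \((C_2,\ldots,C_t,H)\), which forces the message to be \(\zeta\) and \(c=C_1(\zeta)\). But \(\zeta\) is itself the Type~I parent at start \(d=n+1\). Hence the only constraints visible at these received words are:
\begin{itemize}
\item \(H\) is injective on \(\mathcal P_\zeta\);
\item \(H(x)\ne C_1(\zeta)\) for \(x\in\mathcal P_\zeta\setminus\{\zeta\}\).
\end{itemize}
The choice \(H(\zeta)=C_1(\zeta)\) meets both, so no choice of \(\zeta\) gives more than \(q\ge n+1\).

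Your ``Type~I outputs whose last retained symbol is a check symbol'' do not exist, because an \(H\)-retaining output ends in \(H\). A collision of two \(H\)-retaining outputs is just the injectivity you already have. So the claimed total of \(n+\lfloor n/t\rfloor+2\) parents has nothing behind it, and neither does the final counting contradiction or its fallback.

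The missing idea is that the extra exclusions come from \emph{other} received words, separately for each parent. Suppose \(P=W_{n,t}(x)\) loses an internal length-\(t\) block at start \(d\le n+1\), leaving \(y\), and let \(c\) be the deleted symbol in residue class \(n+1\pmod t\). Residue balance gives \(C_1(y)=c\). If moreover \(y\ne x\), then \(H(x)=c\) would make \((y,c)\) a common output of \(x\)'s codeword (internal burst) and \(y\)'s codeword (terminal burst).

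The paper builds the same clique you describe, inserting the row XORs of a child \(z\) at every gap \(g\). It then shows that the \(k=\lfloor n/t\rfloor\) child entries \(\alpha_1,\ldots,\alpha_k\) of class \(n+1\pmod t\), together with their XOR \(a=C_1(z)\), are forbidden at \emph{every} parent \(P_g\):
\begin{itemize}
\item \(\alpha_j\) for \(j<k\) is excluded via a block starting at its current position;
\item \(\alpha_k\) and \(a\) are excluded by a case split in \(g\). At \(g=n\), i.e.\ \(x=z\), the value \(a\) is excluded through the block starting at \(n\), which is a received word other than \((z,\cdot)\).
\end{itemize}
Each case needs \(y\ne x^{(g)}\). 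This is proved by a non-\(t\)-periodicity criterion that uses rows of \(z\) with pairwise distinct entries and XOR outside the row. The \(n+1\) distinct labels then avoid a common \((k+1)\)-set, which gives the bound.

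Two smaller repairs are also needed:
\begin{itemize}
\item Parent distinctness is not about constant runs. In fact \(P_{g-1}=P_g\) exactly when \(z_g\) equals its own row XOR.
\item Constructing a suitable \(z\) requires a large enough field. The paper secures this by first proving \(q>n\), using explicit configurations for \(q=2\) and for \(4\le q\le n\). Your contradiction hypothesis does not supply this.
\end{itemize}
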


\begin{proof}[Proof of
Theorem~\ref{thm:adjacent-parity-prefix-converse}]
All sums in this proof are XORs.  For a word \(u\), let
\[
 S_r(u)=\bigoplus_{\substack{i\\i\equiv r\pmod t}}u_i
 \qquad(r\in\mathbb Z/t\mathbb Z).
\]
\smallskip
\noindent\emph{The parent family.}
Fix a child \(z\in\F_q^n\) and write \(s_r=S_r(z)\).  At each
zero-based gap \(g\in\{0,\ldots,n\}\), insert
\[
 I_g=(s_{g+1},s_{g+2},\ldots,s_{g+t}),
\]
where the subscripts are read modulo \(t\), and call the resulting
length-\(n+t\) word \(P_g\).  Every residue class of \(P_g\) has XOR
zero, so \(P_g=W_{n,t}(x^{(g)})\) for the unique message formed by its
first \(n\) symbols.  Deleting \(I_g\) from the full codeword gives
\((z,H(x^{(g)}))\); hence \(H\) is injective on the distinct parents.

Moving \(I_g\) across \(z_g\) shows that
\[
 P_{g-1}=P_g\quad\Longleftrightarrow\quad z_g=s_g.
\]
More generally, \(P_g=P_h\) for \(g<h\) precisely when all coordinates
between the two gaps equal their row sums.  Therefore
\begin{equation}
 \bigl|\{P_g:0\le g\le n\}\bigr|
 =1+\bigl|\{i:z_i\ne s_i\}\bigr|.
 \label{eq:adjacent-parent-count}
\end{equation}

We first deduce \(q>n\).  If \(q=2\), place two ones in one residue row
and zeros elsewhere.  Its row sum is zero, so
\eqref{eq:adjacent-parent-count} gives three parents, more than two
available labels.  If \(4\le q\le n\), choose distinct nonzero
\(u,v\) and put \(w=u\oplus v\).  In every row of even length \(\ell\),
use only \(u\); in every row of odd length \(\ell\), use \(\ell-2\) copies
of \(u\), followed by \(v,w\).
Each row has sum zero and contains no zero, so all \(n+1>q\) parents
are distinct.  Both cases contradict injectivity of \(H\).

We use one elementary row fact.  If \(\ell\ge2\) and \(q>2\ell-2\), there
are \(\ell\) pairwise distinct elements whose XOR lies outside the row.
Indeed, choose an \((\ell-1)\)-set \(S\) with nonzero XOR \(\xi\);
otherwise two distinct completions of one fixed \((\ell-2)\)-set would
both have sum zero.  Since
\(|S\cup(\xi+S)|\le2\ell-2<q\), choose
\(u\notin S\cup(\xi+S)\).  The row \(S\cup\{u\}\) has sum
\(\xi\oplus u\), which is neither \(u\) nor an element of \(S\).

Every residue row has length at most \(\lceil n/t\rceil\), and
\(2\lceil n/t\rceil-2\le n\) because \(t\ge2\).  Thus \(q>n\)
allows us to apply the row fact independently to all \(t\) rows.
Interleave the resulting rows into \(z\).  Every row is pairwise distinct
and its sum lies outside the row.  Thus all \(n+1\) parents are distinct.

\smallskip
\noindent\emph{The boundary constraint.}
We next record the boundary constraint used below.  Let
\(P=W_{n,t}(x)\), delete a length-\(t\) block beginning at
\(1\le d\le n+1\), and call the resulting length-\(n\) word \(y\).
If \(c\) is the deleted symbol in residue class \(n+1\pmod t\), then
residue balance gives \(C_1(y)=c\).  Moreover,
\begin{equation}
 y=x\quad\Longleftrightarrow\quad
 P_i=P_{i+t}\ \text{ for every }\ d\le i\le n.
 \label{eq:adjacent-general-self-deletion}
\end{equation}
If \(y\ne x\) and \(H(x)=c\), deleting the internal block from
\((P,H(x))\) and deleting the terminal block
\((C_2(y),\ldots,C_t(y),H(y))\) from the codeword of \(y\) both give
\((y,c)\).  Exact correction therefore forbids \(c\) at \(x\).

\smallskip
\noindent\emph{The common forbidden set.}
The child row in residue class \(n+1\pmod t\) has
\(k=\lfloor n/t\rfloor\) entries; write them as
\(\alpha_1,\ldots,\alpha_k\), with XOR \(a\).  Its last child position
is \(p=n+1-t\).  Let \(b\) be the sum of the row in class
\(n\pmod t\); its last two entries \(z_{n-t},z_n\) are distinct and
\(b\) lies outside that row.  Set
\begin{equation}
 \mathcal S_A=\{\alpha_1,\ldots,\alpha_k,a\},
 \qquad |\mathcal S_A|=k+1.
 \label{eq:adjacent-common-forbidden-set}
\end{equation}
We show that every value in \(\mathcal S_A\) is forbidden at every
parent \(P_g\).  In each case, apply the boundary observation above with
\(x=x^{(g)}\), \(P=P_g\), and \(y\) equal to the length-\(n\) word left by
the stated internal deletion.  The competing codeword is that of \(y\),
with its terminal block \((C_2(y),\ldots,C_t(y),H(y))\) deleted; both sides
would then leave \((y,c)\) if the relevant value \(c\) were used as
\(H(x^{(g)})\).

For \(j<k\), begin the deletion at the current occurrence of
\(\alpha_j\), which is at most position \(n\).  Its child suffix is not
\(t\)-periodic because its next entry in the same row is
\(\alpha_{j+1}\ne\alpha_j\).  If \(I_g\) lies before \(\alpha_j\),
the parent suffix equals this child suffix; if it lies after
\(\alpha_j\), deleting \(I_g\) from the parent suffix gives the child
suffix.  Since deleting a length-\(t\) block from a \(t\)-periodic word
preserves \(t\)-periodicity,
\eqref{eq:adjacent-general-self-deletion} forbids \(\alpha_j\).

For \(\alpha_k\), if \(g\ge p\), use the block beginning at \(p\):
positions \(p\) and \(p+t=n+1\) of \(P_g\) contain \(\alpha_k\) and
\(a\), respectively.  If \(g<p\), then \(\alpha_k\) moves to position
\(n+1\); use the block beginning at \(n\).  When \(g\le n-t-1\), its two
positions \(n,n+t\) contain
\(z_{n-t}\ne z_n\), and when \(g=n-t\) they contain
\(b\ne z_n\).  Thus the relevant suffix is never \(t\)-periodic, and
\(\alpha_k\) is forbidden.

Finally, if \(g<n\), delete \(I_g\) itself.  The result is \(z\ne
x^{(g)}\), since otherwise \(P_g=P_n=W_{n,t}(z)\); the deleted symbol
in the distinguished residue class is \(a\).  At \(g=n\), instead
delete the block beginning at \(n\): positions \(n,n+t\) contain
\(z_n\ne b\).  Hence \(a\) is also forbidden at every parent.

The \(n+1\) labels \(H(x^{(g)})\) are distinct and all lie outside the
same \((k+1)\)-set \(\mathcal S_A\).  Consequently
\[
 q-k-1\ge n+1,
\]
which is the asserted
\(q\ge n+\lfloor n/t\rfloor+2\).
\end{proof}

\section{The at-most-two threshold at \texorpdfstring{\(n=4\)}{n=4}}
\label{sec:adjacent-n4}

\begin{theorem}[Power-of-two threshold for bursts of length at most two]
\label{thm:adjacent-n4-exact-threshold}
Let \(q=2^m\) with \(m\ge1\), and put
\[
 W(x_1,x_2,x_3,x_4)
 =(x_1,x_2,x_3,x_4,x_1\oplus x_3,x_2\oplus x_4).
\]
There is a function \(H:\F_q^4\to\F_q\) for which
\(\{(W(x),H(x)):x\in\F_q^4\}\) corrects every adjacent burst of
length zero, one, or two if and only if \(q\ge16\). For every
\(q=2^m\ge16\), such a function is obtained uniformly from one frozen
\(q=16\) assignment by canonical span transport.
\end{theorem}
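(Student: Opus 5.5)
The plan splits the statement into three parts: impossibility for $q\le8$, a certified assignment at $q=16$, and transport to every larger power of two.

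\emph{Impossibility.} For $q\in\{2,4\}$ I would apply Theorem~\ref{thm:adjacent-parity-prefix-converse} with $t=2$ and $n=4$. That theorem gives $q\ge 4+2+2=8$, so it already excludes $q\le4$, but it leaves $q=8$ open. For $q=8$ I would follow the common-forbidden parent clique argument and sharpen it with the length-one constraints.

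Fix a child $z\in\F_8^4$. The five parents $P_g$ obtained by inserting $I_g$ force five pairwise distinct labels, because $H$ is injective on the parents. Each label must avoid the common set $\mathcal S_A$, which has two elements.

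Next I would add the length-one burst obligations. By Lemma~\ref{lem:adjacent-final-label-avoidance} and Proposition~\ref{prop:adjacent-forbidden-formula}, each $H(x)$ must avoid $U_x$, which contains every coordinate of the $A$-parity class together with $B(x)$. The aim is to choose $z$ so that the union of forbidden values common to all five parents has at least four elements. Then $8-4<5$, which contradicts injectivity.

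If no single child achieves this, the fallback is a finite search over children in $\F_8^4$ modulo symmetries, namely scalar multiplication by $\F_2$-linear automorphisms and the parity-preserving reversal. For each child, I would check that the conflict graph restricted to the parent clique, together with its forbidden sets, admits no valid labelling. Since $\F_8$ is a subfield problem of size $8^4$, this is a finite certificate.

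\emph{The $q=16$ assignment.} I would exhibit an explicit $H:\F_{16}^4\to\F_{16}$. One option is to find it as a proper coloring of the conflict graph on $16^4$ messages with lists $\F_{16}\setminus U_x$, using a greedy or SAT search. The assignment is then certified by exhaustively checking the five cases from the proof of Theorem~\ref{thm:adjacent-closed-form}. Those cases are: $H$ surviving on both sides, which needs $H(x)\ne H(y)$ whenever $P_{r,d}(x)=P_{r,d'}(y)$; $H$ surviving on one side, which needs $H(x)\notin U_x$; and $H$ deleted on both sides, which needs nothing.

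\emph{Extension to $q=2^m\ge16$ by canonical span transport.} Given $x\in\F_q^4$, consider the $\F_2$-span of its coordinates. This span has dimension at most $4$, and by working with reduced coordinates over $\F_2$ it embeds into $\F_{16}$ viewed as $\F_2^4$. I would fix a canonical $\F_2$-linear injection $\phi_x$ from $\operatorname{span}(x)$ into $\F_{16}$, chosen by reduced echelon form in the public basis, and pull back $H(x)=\phi_x^{-1}(H_{16}(\phi_x x))$. For this to work, $H_{16}$ must be chosen equivariant so that its value lies in the span of its input.

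Every collision condition (the parity sums, the coordinate equalities, and the sets $U_x$) involves only XOR relations among the coordinates of $x$, $y$ and the label. So a collision over $\F_q$ would live in a joint span of dimension at most $8$. This is where the difficulty lies. Two confusable messages share most coordinates, so their joint span has dimension about $5$, and I would need $H_{16}$ compatible across these overlaps.

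The main obstacle is exactly this transport step: making the $q=16$ assignment span-equivariant while keeping collisions between different messages inside a common $4$-dimensional span. I would first try restricting to label values in the span of the shared child coordinates, and verify equivariance as part of the exhaustive $q=16$ certificate.
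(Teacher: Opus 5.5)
Your use of Theorem~\ref{thm:adjacent-parity-prefix-converse} for $q\le4$ matches the paper. The $q=8$ step, however, fails in two ways.

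First, $U_x$ is not a set of forbidden values. Lemma~\ref{lem:adjacent-final-label-avoidance} says only that a one-sided collision forces $H(x)\in U_x$, and $U_x$ contains self-return values. The necessary constraint is the exact list $\mathcal L_x$ of \eqref{eq:adjacent-n4-list}. For $n=4$, direct enumeration gives $\mathcal L_x\subseteq\{x_1,x_3,A(x)\}$, together with $x_4$ when $A(x)=B(x)$. In particular, $B(x)$ enters $U_x$ only by deleting $B(x)$ itself, where $y=x$. For example, the child $z=(1,7,2,4)$ (binary labels) has five distinct parents whose $U$-sets all equal $\{1,2,3,4\}$, so your count would ``prove'' impossibility. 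Yet $4\notin\mathcal L_{(3,3,1,7)}$ for the parent $P_0$, so that count is unsound.

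Second, with the correct lists, no single parent clique can be infeasible at $q=8$. Here $|\mathcal S_A|=k+1=3$, not two, so exactly five labels remain for five parents. Every domain $\F_8\setminus\mathcal L_x$ has at least four elements. Hence Hall's condition on the $K_5$ can fail only if all five parents share a fourth forbidden value. Checking the explicit parents $P_0,\dots,P_4$, this forces $A=B$ at every parent with a common $x_4$-value, and hence $z=0$. So your fallback search over single parent cliques would find nothing. Note also that impossibility needs one infeasible subinstance, not a check ``for each child''. The missing idea is an obstruction built from several overlapping cliques, as in the paper's eleven-message certificate. There, two $K_5$'s sharing $a,g$, four $K_4$'s, and the edges $dj,ek$ together force $H(e)=H(f)=H(k)$.

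The transport, which you leave open, fails as proposed, because no valid $H$ can be span-equivariant. Take $x=(c,0,0,0)$ with $c\ne0$, so $W(x)=(c,0,0,0,c,0)$. The adjacent-pair deletions at starts $1$ and $2$ give $y=(0,0,c,0)$ and $y=(c,0,c,0)$, both different from $x$, with $A(y)=c$ and $A(y)=0$. Hence $\mathcal L_x\supseteq\{0,c\}=\langle x_1,\dots,x_4\rangle_{\F_2}$, and every valid $H(x)$ lies outside the span of $x$. This already happens at $q=16$, so your pull-back through an injection of the span alone cannot be defined.

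The paper supplies two ingredients you lack. First, Lemma~\ref{lem:adjacent-n4-span-closure} shows that $G_m$-adjacent messages have \emph{equal} spans, and that $\mathcal L_x\subseteq\mathsf V_x$. The reason is that a balanced parent of any child of $W(x)$ is obtained by inserting forced XORs of that child. So there is no five-dimensional joint span to reconcile. Second, the canonical basis of $\mathsf V_x$ is extended by standard directions to a four-frame that depends only on $\mathsf V_x$. This makes $\theta_x$ a bijection $\widetilde{\mathsf V}_x\to\F_2^4$. As a result, $h_*$ may take arbitrary values, and edges and boundary lists transport bijectively.
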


For each \(m\), realize \(\F_q=\F_{2^m}\) using the lexicographically first
monic irreducible polynomial of degree \(m\) over \(\F_2\), and identify its
polynomial basis with \(V_m=\F_2^m\).  In this subsection addition is
bitwise XOR, and for \(y=(y_1,y_2,y_3,y_4)\in V_m^4\) put
\[
 A(y)=y_1\oplus y_3,\qquad B(y)=y_2\oplus y_4.
\]
For \(x\in V_m^4\), let \(G_m\) join
distinct messages whose prefixes \(W(x)\) have a common child after one
deletion or one adjacent-pair deletion.  Define the exact boundary list
\begin{align}
 \mathcal L_x={}&
 \{B(y):y\ne x,\ \mathsf B_{1,d}(W(x))=(y,A(y))
                    \text{ for some }d\}\notag\\
 &{}\cup
 \{A(y):y\ne x,\ \mathsf B_{2,d}(W(x))=y
                    \text{ for some }d\}.
 \label{eq:adjacent-n4-list}
\end{align}
Literal comparison of the three possible burst lengths gives the exact
interface
\begin{equation}
 H(x)\notin\mathcal L_x\quad\text{for every }x,\qquad
 H(x)\ne H(y)\quad\text{for every }xy\in E(G_m).
 \label{eq:adjacent-n4-interface}
\end{equation}
Indeed, when both final labels survive, a collision is exactly a graph
edge with equal labels; when precisely one survives, it is exactly one
of the two boundary clauses; and systematicity handles the case in which
both are deleted.

\begin{lemma}[Span closure]
\label{lem:adjacent-n4-span-closure}
Put \(\mathsf V_x=\langle x_1,x_2,x_3,x_4\rangle_{\F_2}\).
If \(xy\in E(G_m)\), then \(\mathsf V_x=\mathsf V_y\), and
\(\mathcal L_x\subseteq\mathsf V_x\).
\end{lemma}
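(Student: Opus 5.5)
The plan is to show that each of the six coordinates of \(W(x)\) lies in \(\mathsf V_x\), and conversely that the four data coordinates of \(x\) can be recovered from any child of \(W(x)\) together with the other parent, using only \(\F_2\)-linear combinations. First record that \(W(x)=(x_1,x_2,x_3,x_4,x_1\oplus x_3,x_2\oplus x_4)\) has all entries in \(\mathsf V_x\), and that \(\mathsf V_x\) equals the span of the entries of \(W(x)\), since \(A(x),B(x)\) are themselves combinations of the \(x_i\). Any child \(c\) of \(W(x)\), obtained by one deletion or one adjacent-pair deletion, therefore has all entries in \(\mathsf V_x\).

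For an edge \(xy\), the two prefixes share a common child \(c\). Here I would invoke the parity-template inversion of Lemma~\ref{lem:adjacent-parity-inversion}, specialized to \(n=4\), \(M=6\). The two residue classes of positions of \(W\) have zero XOR, so each deleted symbol equals the XOR of the surviving symbols in its own class once the start is fixed. Hence every entry of \(W(y)\) is either an entry of \(c\) or an XOR of entries of \(c\), which gives \(\mathsf V_y\subseteq\langle c\rangle\subseteq\mathsf V_x\). Exchanging the roles of \(x\) and \(y\) gives the reverse inclusion.

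The \(\mathcal L_x\) clause is immediate. In the first clause \(y\) consists of entries of the child \(\mathsf B_{1,d}(W(x))\), so \(B(y)=y_2\oplus y_4\in\mathsf V_x\). In the second clause \(y=\mathsf B_{2,d}(W(x))\) has entries in \(\mathsf V_x\), so \(A(y)\in\mathsf V_x\).

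The one point needing care is the claim that each deleted value of \(W(y)\) is the XOR of the surviving entries of \(c\) in its class. The start may differ between the two sides, but only the start on the \(y\)-side matters, and for that side the class labels of the surviving entries are determined. Since the lemma concerns only the span of all entries of \(c\), the class bookkeeping needs no special treatment: the value lies in the span regardless of which entries are used.
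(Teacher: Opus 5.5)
Your proposal is correct and follows essentially the same route as the paper. The paper also puts every entry of \(W(x)\) and of its children in \(\mathsf V_x\), and rebuilds the balanced parent \(W(y)\) from a common child by inserting the forced parity XORs, which is your parity-template inversion. From this it obtains \(\mathsf V_y\subseteq\mathsf V_x\), symmetry gives equality, and \(\mathcal L_x\subseteq\mathsf V_x\) follows because \(A(y)\) and \(B(y)\) are combinations of child entries.
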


\begin{proof}
Every symbol of \(W(x)\), and hence of each of its deletion children,
lies in \(\mathsf V_x\).  Here a \emph{balanced parent} is a length-six
parent satisfying the two parity-XOR equations that characterize the words
\(W(z)\).  A balanced parent of a one-deletion child is obtained
by inserting its forced total XOR.  A balanced parent of a two-deletion
child is obtained by inserting the two forced parity-row XORs in the
order determined by the start parity.  Thus every balanced parent
reconstructed from a child of \(W(x)\) also lies in \(\mathsf V_x\).  In
particular \(\mathsf V_y\subseteq\mathsf V_x\), and symmetry gives equality.
Every member of \(\mathcal L_x\) is \(A(z)\) or \(B(z)\) for a message
reconstructed from such a child, so it too lies in \(\mathsf V_x\).
\end{proof}

\begin{proof}[Proof of
Theorem~\ref{thm:adjacent-n4-exact-threshold}]
Theorem~\ref{thm:adjacent-parity-prefix-converse} with \(t=2,n=4\)
first gives \(q\ge8\).  At \(q=8\), the following eleven-message
subinstance of \eqref{eq:adjacent-n4-interface} is already impossible
(the four displayed digits are the four binary-label field symbols):
\[
\begin{gathered}
a=(4,2,5,3),\ b=(4,3,1,1),\ c=(4,3,1,2),\\
d=(4,3,5,1),\ e=(4,3,5,2),\ f=(5,1,4,3),\\
g=(5,2,4,3),\ h=(5,3,1,1),\ i=(5,3,1,2),\\
j=(5,3,4,1),\ k=(5,3,4,2).
\end{gathered}
\]
With \(D_x=\F_8\setminus\mathcal L_x\) and
\(\mathcal P=\{0,2,3,6,7\}\), their allowed domains are
\[
 D_a=D_g=\{0,2,6,7\},\qquad
 D_e=D_k=\{0,3,6,7\},
\]
and \(D_x=\mathcal P\) for the other seven vertices.  The graph contains
\[
\begin{gathered}
G_3[\{a,b,c,e,g\}]=G_3[\{a,g,h,i,k\}]=K_5,\\
G_3[\{b,c,d,e\}]=G_3[\{h,i,j,k\}]
=G_3[\{b,c,d,f\}]=G_3[\{f,h,i,j\}]=K_4,
\qquad dj,ek\in E(G_3).
\end{gathered}
\]
For each displayed message, computing \(W(x)\), enumerating its six
one-deletion and five adjacent-pair children, and applying
\eqref{eq:adjacent-n4-interface} reconstructs the allowed domains and every
clique or edge above.  For the contradiction, put
\(\mathcal T=\{H(a),H(g)\}\) and \(R=\mathcal P\setminus\mathcal T\).
The two \(K_5\)'s force each of \((b,c,e)\) and \((h,i,k)\) to use
the three colors of \(R\).  The first two displayed \(K_4\)'s and the
edge \(dj\) make \(H(d),H(j)\) the two distinct colors of \(\mathcal T\).
The remaining \(K_4\)'s make \(f\) adjacent to both of them, so
\(H(f)\in R\).  Comparing the \(R\)-triples gives
\(H(e)=H(f)=H(k)\), contradicting \(ek\).  Hence \(q=8\) is impossible.

For \(q=16\), an explicit assignment
\[
 h_*:(\F_2^4)^4\longrightarrow\F_2^4
\]
was checked against all \(457{,}035\) graph edges, every boundary list, and
all \(875{,}115\) received-word incidences for burst lengths zero, one, and
two.  This is the only computer-assisted existence step; the assignment and
two independent verifiers are archived in the verification companion
\cite{GrishinZabokritskiy2026Verification}.

It remains to transport it.  For \(m\ge4\), scan the elements of \(\mathsf V_x\)
in increasing binary-label order, retaining the first independent ones;
then scan the first four ambient standard directions in order until they
extend this list to a four-frame \(\mathcal E_x\).  The frame depends only on
\(\mathsf V_x\).  This scan always reaches dimension four: the first four
standard directions already span a four-dimensional subspace, whereas
\(\dim\mathsf V_x\le4\).  Let
\(\widetilde{\mathsf V}_x=\langle\mathcal E_x\rangle\)
and let \(\theta_x:\widetilde{\mathsf V}_x\to\F_2^4\) send the ordered frame
to the standard basis.  For \(x=(x_1,\ldots,x_4)\), write
\(\theta_x[x]=(\theta_x(x_1),\ldots,\theta_x(x_4))\).  Define
\begin{equation}
 \widehat H_m(x)=
 \theta_x^{-1}\!\left(
 h_*(\theta_x[x])\right).
 \label{eq:adjacent-n4-span-transport}
\end{equation}

If \(xy\) is an edge, Lemma~\ref{lem:adjacent-n4-span-closure} gives
\(\mathsf V_x=\mathsf V_y\), so both vertices use the same frame and linear map.
Linearity commutes with the parity prefix and deletion, and the forced-parent
reconstruction in the lemma shows that no parent outside the transported
span is missed.  Hence edges and exact boundary lists transport bijectively:
\[
 \theta_x(\mathcal L_x)=\mathcal L_{\theta_x[x]}.
\]
The validity of \(h_*\) therefore implies both constraints in
\eqref{eq:adjacent-n4-interface} for \(\widehat H_m\).  This proves
sufficiency for every \(m\ge4\), while the preceding lower-bound and
\(q=8\) arguments prove necessity.
\end{proof}

The deletion-confusion graph formulation follows the standard graph model
\cite{CullinaKulkarniKiyavash2012}.  For the prescribed residue-XOR
architecture, the preceding analysis establishes a common-forbidden bound
and the exact \(n=4\) threshold through the span transport.  A natural next
target is linear-alphabet sufficiency at general lengths.

\section{Alphabet size and evaluation cost}
\label{sec:comparison}

For \(n\ge2\), both constructions optimize the number of exposed
full-alphabet symbols in the literal-systematic model; at the adjacent
endpoint \(n=1\), two trailers suffice.  Their alphabet requirements depend
on both the deletion pattern and the access model for the residual color.
This section separates the cost of evaluating that color from the cost of
storing it, and then compares the resulting parameter regimes.

\subsection{A canonical global greedy color}

Greedy coloring of the entire finite conflict graph gives a canonical
nonuniform benchmark for the arbitrary-deletion construction.

\begin{proposition}[Global-greedy comparison]
\label{prop:global-greedy}
For every \(n\ge2\), defining the residual color by a canonical greedy
coloring of the entire finite conflict graph yields a literal-systematic code
\(E:\F_q^n\to\F_q^{n+3}\) that corrects up to two arbitrary deletions over a
power-of-two alphabet
\(q<2^{20}n^7\). The color is defined by a global lexicographic pass through
the conflict graph, so evaluating it at a single vertex may take exponential
time. Thus this proposition is an existence benchmark, whereas
Theorem~\ref{thm:main} provides pointwise polynomial-time evaluation.
\end{proposition}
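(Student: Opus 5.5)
The plan is to reuse the three-trailer architecture verbatim and change only how the residual color is produced. Proposition~\ref{prop:proper-color-interface} already says that any total function \(\ell:\F_q^n\to\{0,1\}^{b_{\mathrm{col}}}\) that is proper on \(G_\star=G_V\cup G_H\cup J_2^T\) yields a literal-systematic code correcting up to two deletions. So I take \(\ell\) to be the greedy coloring of \(G_\star\) in the public lexicographic order of message names: each vertex receives the least color not used by earlier neighbors. This coloring uses at most \(\Delta(G_\star)+1\le D_n+1\) colors and is canonical. The remaining work is to shrink the parameters so that a color of about \(\log_2(D_n+1)\) bits fits. The gain over Theorem~\ref{thm:main} is that the \(64D^2+32D\ell_{\mathrm{name}}\) palette of Lemma~\ref{lem:two-round-local-coloring} is replaced by the linear palette \(D_n+1\).

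Next I redo the bit budget. With \(N=2^u\) and \(h_S=2u+3\), \eqref{eq:stable-union-degree} gives \(D_n+1\le9N^3\), which fits in \(b_{\mathrm{col}}=3u+4\) bits; for \(n=2\), \(D_2+1=73<2^7\), so again \(3u+4\) suffices. I then choose \(s\) and the three payload rows exactly as in Section~\ref{sec:construction}. Since each payload has length \(\rho=s-\delta\) with \(\delta=u+3\), the row constraints are:
\[
 h_S+h_V=\rho,\qquad (s-h_S)+u=\rho,\qquad (s-h_V)+b_{\mathrm{col}}=\rho.
\]
The second row gives \(h_S=2u+3\), consistent with the degree bound. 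The third row gives \(h_V=s-\rho+b_{\mathrm{col}}=\delta+3u+4=4u+7\). The first row then gives \(\rho=6u+10\), so \(s=7u+13\).

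Hence \(q=2^{7u+13}=2^{13}N^7<2^{20}n^7\), using \(N<2n\). This matches the stated bound and confirms the choice of parameters. The tag still has \(u+1\) bits with \(2N>n\) values, so \(\Gamma\) is well defined. Lemma~\ref{lem:three-residual-degrees} depends only on \(h_S\), so the degree bounds are unchanged. The proof of Theorem~\ref{thm:main} then goes through split by split, via Proposition~\ref{prop:proper-color-interface}.

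Last, I justify the complexity remark: the greedy color at \(x\) depends on the colors of all lexicographically earlier vertices, so evaluating it in general means a global pass over \(q^n\) vertices. The statement only claims this may be exponential, so no lower bound needs proving. The main step to verify carefully is the third-row arithmetic together with the \(n=2\) endpoint, namely that \(73\le2^{3u+4}\) and that every slice length is nonnegative (\(h_V\le s\)). If that check fails, the fallback is to enlarge \(h_V\) by one bit and adjust \(s\), which still stays below \(2^{20}n^7\).
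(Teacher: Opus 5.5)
Your proposal is correct and follows the paper's proof essentially verbatim. It uses the same ingredients: a greedy coloring of \(G_\star\) with at most \(D_n+1\) colors, and the same parameters \(h_S=2u+3\), \(h_V=4u+7\), \(b_{\mathrm{col}}=3u+4\), \(\rho=6u+10\), \(s=7u+13\), which you derive from the three payload identities where the paper simply states them. It also makes the same capacity checks, \(D_n+1<2^{3u+4}\) for \(u\ge2\) and \(73<128\) at \(n=2\), and then transfers the split-by-split proof of Theorem~\ref{thm:main}. The fallback in your last paragraph is unnecessary, since your arithmetic already closes every check, including \(h_V\le s\).
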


\begin{proof}[Proof of Proposition~\ref{prop:global-greedy}]
Order the vertices of \(G_\star\) lexicographically by their public binary
coordinate words, and greedily assign to each vertex the least color absent
from its previously colored neighbors. This is a canonical proper coloring
with at most \(D_n+1\) colors, where \(D_n\) is the degree majorant in
Equation~\eqref{eq:union-degree-bound}; evaluating one color may require
traversing exponentially many earlier vertices.  Retain
\(d_{\mathrm{tag}}=u+1\) and \(\delta=u+3\).  For this comparison set
\[
 h_S=2u+3,\qquad h_V=4u+7,\qquad
 b_{\mathrm{col}}=3u+4,\qquad s=7u+13,\qquad \rho=6u+10,
\]
and take \(\pi_S,\pi_V\) to be the terminal-coordinate projections with
these kernel dimensions.
The three payload identities reduce to
\[
\begin{aligned}
 (2u+3)+(4u+7)&=(5u+10)+u\\
               &=(3u+6)+(3u+4)=\rho.
\end{aligned}
\]
The \(G_H\) degree calculation depends only on the \(S\)-head dimension
\(2u+3\), so the same \(D_n\) applies.  For \(u\ge2\),
\[
 D_n+1<16N^3=2^{3u+4}=2^{b_{\mathrm{col}}}.
\]
For \(u=1\), necessarily \(n=2\), and
\(D_2+1=73<128=2^{b_{\mathrm{col}}}\).  Thus the color fits in the third
payload, the proof of Theorem~\ref{thm:main} transfers, and
\[
 q=2^{7u+13}=2^{13}N^7<2^{20}n^7.
\]
Equation~\eqref{eq:decoder-candidate-count} then bounds candidate enumeration
by \(O(n^{16})\) occurrences over the constantly many split choices.
The residual-color evaluation remains the global step described above.
\end{proof}

\subsection{Exact comparison points and scope}

The four alphabet bounds separate different access models and deletion
patterns.  Theorem~\ref{thm:main}
gives uniform pointwise algorithms for two arbitrary deletions at
\(q<2^{32}n^{10}\); Proposition~\ref{prop:global-greedy} lowers this to
\(q<2^{20}n^7\) with global evaluation; the Haxell benchmark in
Section~\ref{sec:model} reaches \(q=O(n^4)\) nonuniformly; and
Theorem~\ref{thm:adjacent-closed-form} gives a uniform closed formula at
\(q_{\mathrm{adj}}<(n+2)^4\) for one adjacent burst of length at most two.

The two adjacent converses also have different scopes.  Every size-\(q^n\)
three-trailer code correcting an exact adjacent pair has \(q\ge n-1\) for
\(n\ge3\),
whereas, for \(q=2^m\), a final trailer following the prescribed
\(t\)-residue prefix forces \(q\ge n+\lfloor n/t\rfloor+2\) if the code
corrects one exact adjacent burst of length \(t\), where
\(t\ge2,n\ge2t\).  The latter bound applies to that prescribed architecture;
for \(t=2,n=4\),
correction of bursts of length zero, one, or two in the same architecture
has exact power-of-two threshold \(16\).

Measured in bits, the three uniform arbitrary-deletion trailers use
\(30\log_2n+O(1)\) bits, while the adjacent construction uses
\(12\log_2n+O(1)\) bits.  These are the bit costs of the present
full-symbol constructions. Fixed-alphabet and general \(q\)-ary
bit-redundancy codes optimize different alphabet, payload-preservation, or
redundancy resources
\cite{SimaGabrysBruck2020Systematic,SimaGabrysBruck2020Qary,SongCai2023,
LiuTjuawinataXing2024}.  The systematic coloring
framework of Li, Gabrys, and Farnoud treats the checksum as error-free side
information; in the present channel, that checksum would require an
additional protection layer
\cite[Definition~9, Theorems~2--3, and Remark~1]{LiGabrysFarnoud2025}.
Under the present channel model, that protection would add transmitted
redundancy beyond the three exposed trailer symbols counted here.  The
full/odd/even Tenengolts motif used in the adjacent construction is classical.
For arbitrary deletions, Haxell's nonuniform upper benchmark and the
insertion-sphere lower bound \(q=\Omega(n^2)\) leave a gap between the
alphabet exponents established or quoted above for this exact model.

\section{Conclusion and further directions}
\label{sec:conclusion}

For every \(n\ge2\), we constructed a uniform literal-systematic encoder with three
deletion-prone suffix symbols that corrects two arbitrary deletions over
\(q<2^{32}n^{10}\).  The proof combines boundary authentication with a
pointwise evaluable coloring of the residual conflict graph.  A known
full-symbol lower bound proves that three is the optimal trailer count in
this model for \(n\ge2\).  For one adjacent burst of length at most two, a
closed-form construction combining parity and Tenengolts locators attains the
same trailer count over
\(q_{\mathrm{adj}}<(n+2)^4\).  The adjacent analysis also gives two converse
bounds: an unrestricted code of size \(q^n\) and length \(n+3\) correcting one
exact adjacent pair has \(q\ge n-1\) for \(n\ge3\), while a code over
\(q=2^m\) with the prescribed
\(t\)-residue prefix has \(q\ge n+\lfloor n/t\rfloor+2\) when it corrects one
exact adjacent burst of length \(t\), for \(t\ge2\) and \(n\ge2t\).  At the
first two-residue endpoint \(t=2,n=4\), the exact power-of-two threshold for
correcting bursts of length zero, one, or two in this architecture is \(16\).
The \(q=16\) base assignment is certified by exhaustive computation; its
necessity and its transport to every power of two \(q\ge16\) are deductive.

The results suggest four further questions.

\begin{enumerate}[leftmargin=*]
\item \emph{Pointwise residual coloring.}
      Can the residual graph for two arbitrary deletions be colored by an
      individually evaluable palette near its degree scale, reducing the
      alphabet exponent for a deterministic uniform construction from ten
      toward seven?
\item \emph{The uniform alphabet exponent.}
      For appended three-symbol encoders, the insertion-sphere argument
      gives \(q=\Omega(n^2)\). What is the correct alphabet exponent under
      uniform pointwise evaluation?
\item \emph{The adjacent-burst gap.}
      For \(n\ge3\), the unrestricted exact-pair converse and the
      power-of-two literal-systematic construction leave the bounds
      \(n-1\le q<\tfrac38(n+2)^4+\tfrac34(n+2)^3\), with different
      channel requirements at the two ends.  In the prescribed parity
      prefix, the architectural converse gives
      \(q\ge n+\lfloor n/2\rfloor+2\) for \(n\ge4\); only the first case
      \(n=4\), for correction of lengths zero, one, or two, is proved here
      to have exact power-of-two threshold \(16\).
      Can an arbitrary literal-systematic three-trailer construction
      achieve \(q=O(n)\)?  More narrowly, can one approach the bound
      \(n+\lfloor n/t\rfloor+2\) for the \(t\)-residue prefix?
      Since the present one-sided forbidden-set cardinality bound is sharp,
      an improvement that retains this avoidance rule must treat boundary
      and parent constraints jointly or change the prefix.
\item \emph{More than two deletions.}
      For each fixed \(k>2\) and every \(n\ge2\), Haxell's theorem gives
      a nonuniform literal-systematic code with the optimal \(k+1\) full
      symbols over a sufficiently large alphabet. Is there a uniform
      construction over an alphabet of polynomial size with the same exact
      symbol count?
\end{enumerate}

All four questions concern the literal-prefix, full-symbol model.
Fixed-alphabet and bit-redundancy optimization form separate regimes.

\appendix
\begingroup
\let\baselemma\lemma
\let\endbaselemma\endlemma
\renewenvironment{lemma}{
  \ifnum\value{section}=1\relax
    \ifnum\value{theorem}=1\relax
      \par\medskip
    \fi
  \fi
  \baselemma
}{\endbaselemma}
\section{Extremizers for the adjacent prefix-return scan}
\label{app:adjacent-sharpness}

This appendix proves the constructive direction of
Proposition~\ref{prop:adjacent-prefix-return-sharpness}.  All sums are in
the additive group of \(\F_q\cong\F_2^m\); multiplication is not used.
The additive existence statements below are known special cases of
Bajnok--Edwards
\cite[Theorem~3.2 and Corollaries~3.4--3.5]{BajnokEdwards2017}.
We retain short proofs to keep the extremizer construction self-contained.
The local contribution in Proposition~\ref{prop:adjacent-prefix-return-sharpness}
is the prefix-return coupling.

\begin{lemma}[Zero-XOR size spectrum]
\label{lem:zero-xor-spectrum}
Let \(q\ge4\) be a power of two.  There is a set
\(Z\subseteq\F_q\) of size \(s\) with \(\sum_{z\in Z}z=0\) for every
\[
 0\le s\le q,\qquad s\notin\{2,q-2\}.
\]
The two excluded sizes are impossible.
\end{lemma}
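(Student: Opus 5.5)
Our plan treats the two directions separately: the impossibility of the sizes \(2\) and \(q-2\), and then an explicit construction for every other size. We work only in the additive group \(\F_q\cong\F_2^m\) with \(m\ge2\).

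\emph{Impossibility.} A two-element set \(\{u,v\}\) has XOR \(u\oplus v\ne0\), because \(u\ne v\). For \(s=q-2\) we use the fact that the XOR of all of \(\F_q\) is zero when \(m\ge2\), since each basis coordinate is \(1\) in exactly \(q/2\) elements, an even number. The complement of a zero-XOR set of size \(q-2\) would therefore be a zero-XOR set of size \(2\), which we have just excluded.

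\emph{Existence.} The same complementation principle gives a size-\(s\) example from a size-\((q-s)\) example, because \(\sum_{\F_q}z=0\). It therefore suffices to handle \(s\le q/2\), together with \(s=q\); note that \(q-2\) and \(2\) are exchanged by this symmetry.

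\begin{itemize}
\item For \(s=0\) take \(\varnothing\); for \(s=1\) take \(\{0\}\); for \(s=q\) take all of \(\F_q\).
\item For \(s=3\) take \(\{u,v,u\oplus v\}\) with \(u,v\) distinct and nonzero, which needs \(q\ge4\).
\item For \(s=4\) take \(\{0,u,v,u\oplus v\}\).
\item For general \(3\le s\le q/2\) we argue by induction, adding disjoint zero-XOR blocks. Every \(2\)-dimensional subspace is zero-XOR of size \(4\), and its nonzero part has size \(3\). We would partition \(\F_q\setminus\{0\}\) into disjoint triples \(\{u,v,u\oplus v\}\), which is possible because \(\mathrm{PG}(m-1,2)\) admits a line spread only when \(m\) is even. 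Since that fails for odd \(m\), we try a different route first.
\end{itemize}

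The route we prefer is a direct construction. Pick \(s-1\) distinct elements whose XOR \(\xi\) is not among them, and add \(\xi\); this is exactly the row-fact argument already used in the proof of Theorem~\ref{thm:adjacent-parity-prefix-converse}. Concretely, we choose an \((s-2)\)-set \(T\). The goal is an element \(u\notin T\) with \(u\oplus\sigma(T)\notin T\cup\{u\}\). The condition \(u\oplus\sigma(T)\ne u\) requires \(\sigma(T)\ne0\). The condition \(u\notin T\cup(\sigma(T)\oplus T)\) excludes at most \(2s-4\) values. When \(2s-4<q\) such a \(u\) exists, and then \(T\cup\{u,\,u\oplus\sigma(T)\}\) has size \(s\) and XOR zero.

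We must still secure an \((s-2)\)-set \(T\) with nonzero XOR. This is possible whenever \(1\le s-2\le q-1\): if every \((s-2)\)-set had XOR zero, swapping one element of a set for an outside element would change its XOR by a nonzero amount, a contradiction. Such a swap needs \(s-2<q\), and it also requires \(s-2\ge1\), hence \(s\ge3\). Since \(s\le q/2\) gives \(2s-4<q\), this covers all of \(3\le s\le q/2\), and complementation then covers \(q/2\le s\le q-3\).

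The main obstacle is the boundary bookkeeping: checking that complementation maps \(\{0,1,3,\dots,q/2\}\) together with \(q\) onto the full allowed range without producing \(2\) or \(q-2\). The small case \(q=4\) must also be checked directly; there the allowed sizes are \(0,1,3,4\), and we would verify these by hand.
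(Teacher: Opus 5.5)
Your proposal is correct, but it takes a different route from the paper's proof.

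\textbf{The paper's route.} The paper inducts on \(q\). It fixes a hyperplane \(H\) of size \(h=q/2\) and uses the inductive hypothesis inside \(H\) for every size \(s\le h\) except \(h-2\). That one size is patched with the explicit set \((H\setminus\{0,e_1,e_2,e_3\})\cup\{u,u+r\}\), where \(r=e_1+e_2+e_3\) and \(u\notin H\). Complementation then supplies the sizes \(s\ge h\).

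\textbf{Your route.} You avoid induction entirely. You take an \((s-2)\)-set \(T\) with nonzero XOR, which exists by your swap argument. You then add the pair \(\{u,u\oplus\sigma(T)\}\), with \(u\) chosen outside the at most \(2s-4\) values of \(T\cup(\sigma(T)\oplus T)\). This produces every size \(3\le s\le q/2\) at once, and in fact every size up to \(q/2+1\). It is exactly the row fact from the proof of Theorem~\ref{thm:adjacent-parity-prefix-converse}, applied with \(\ell=s-1\).

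\textbf{Comparison.} Your argument is shorter and uniform in \(s\), and it reuses a step already proved in the paper. The paper's argument yields structured sets built from subspaces, in line with the Bajnok--Edwards framing it cites. The cost is a separate patch for the size \(h-2\).

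\textbf{Two cleanups.} The line-spread digression in your fourth bullet is unnecessary. Its sentence is also logically garbled (``possible because \(\dots\) only when''), so delete it. The case \(q=4\) needs no separate hand check: the sizes \(0,1\) and their complements already give \(\{0,1,3,4\}\). Likewise, \(s=q\) is simply the complement of \(s=0\).
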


\begin{proof}
A two-element zero-sum set would repeat its element.  Since the sum of
all elements of \(\F_q\) is zero, complementation gives the same
obstruction at size \(q-2\).

For sufficiency, induct on \(q\).  At \(q=4\), the sizes
\(0,1,3,4\) are realized by
\(\varnothing,\{0\},\F_4\setminus\{0\},\F_4\).
Write \(q=2h\ge8\), and let \(H\) be a codimension-one subspace.
Complementation reduces the task to \(s\le h\).  The inductive hypothesis
inside \(H\) handles every such \(s\) except \(2\), which is globally
excluded, and \(h-2\).

For \(h\ge8\), choose linearly independent
\(e_1,e_2,e_3\in H\), put \(r=e_1+e_2+e_3\), and choose
\(u\notin H\).  Then
\[
 \bigl(H\setminus\{0,e_1,e_2,e_3\}\bigr)
 \cup\{u,u+r\}
\]
has size \(h-2\) and sum zero.  When \(h=4\), this is the already
excluded size two.  The induction follows.
\end{proof}

\begin{lemma}[Zero-avoiding zero-XOR sets]
\label{lem:zero-avoiding-xor}
If \(q\ge8\) and \(3\le s\le q-4\), then there is a set
\(Z\subseteq\F_q\setminus\{0\}\) of size \(s\) and sum zero.
\end{lemma}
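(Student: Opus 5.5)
The plan is to derive this from Lemma~\ref{lem:zero-xor-spectrum} by a shift trick, and to handle a few small cases by hand. The basic observation is this. If \(Z\subseteq\F_q\) has size \(s\), sum zero, and contains \(0\), then \(Z\setminus\{0\}\) is a zero-avoiding zero-sum set of size \(s-1\). So from the spectrum lemma we would like a zero-sum set of size \(s+1\) that contains \(0\).

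To force \(0\) into the set, we use translation. For odd \(|Z|\), the translate \(Z+c\) has sum \(\sum Z+|Z|c=c\), so translation does not preserve zero-sum. Instead, work in the quotient. Take \(Z\) with sum zero and \(|Z|=s+1\).
\begin{itemize}
\item If \(s+1\) is even, every translate \(Z+c\) still has sum zero. Choosing \(c\in Z\) makes \(0\in Z+c\), and deleting \(0\) gives the required set of size \(s\).
\item If \(s+1\) is odd, i.e.\ \(s\) is even, translation fails. Instead take a zero-sum set \(Z'\) of size \(s\) directly. If \(0\in Z'\), replace \(0\) by adjoining two elements.
\end{itemize}
The even-\(s\) replacement step is cleaner in complement form. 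A nonzero set of size \(s\) with sum zero is the same as a subset of \(\F_q^*\) whose complement in \(\F_q^*\), of size \(q-1-s\), also has sum zero, because the elements of \(\F_q^*\) sum to zero when \(q\ge4\). So for even \(s\), the complement size \(q-1-s\) is odd. The odd case handled by translation then produces a zero-avoiding set of that odd size, and complementing gives size \(s\).

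The odd-size translation argument needs a zero-sum set of size \(s+1\) with \(s+1\) even. This is available from Lemma~\ref{lem:zero-xor-spectrum} whenever \(s+1\notin\{2,q-2\}\), that is, \(s\ne1\) and \(s\ne q-3\). Both are excluded here: \(s\ge3\), and \(s\le q-4\) rules out \(q-3\), which is odd. So the odd case covers every odd \(s\) with \(3\le s\le q-5\) (note \(q-4\) is even). For even \(s\) with \(4\le s\le q-4\), the complement size \(q-1-s\) is odd and lies in \([3,q-5]\), so the odd case applies to it.

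The bounds check out: \(s\le q-4\) gives \(q-1-s\ge3\), and \(s\ge4\) gives \(q-1-s\le q-5\). That leaves only even \(s\) strictly between \(2\) and \(4\), of which there are none. The hypothesis \(q\ge8\) ensures the range \([3,q-4]\) is nonempty and that the spectrum lemma applies with \(q\ge4\).

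The main obstacle is bookkeeping at the endpoints: making sure the sizes \(s+1\) fed to the spectrum lemma never hit the forbidden values \(2\) and \(q-2\). The fact that \(\sum\F_q^*=0\) for \(q\ge4\) also needs to be stated explicitly.
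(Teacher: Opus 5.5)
Your proof is correct and uses the same two ingredients as the paper: translating an even-size zero-sum set from Lemma~\ref{lem:zero-xor-spectrum}, and complementing inside $\F_q\setminus\{0\}$, whose sum is zero for $q\ge4$. The parity roles are swapped. The paper translates an even $s$-set $T$ by some $v\notin T$ to keep $0$ out, and obtains odd $s$ by complementing. You instead translate an even $(s+1)$-set by some $c\in Z$ to bring $0$ in, delete it, and obtain even $s$ by complementing. The abandoned ``adjoin two elements'' bullet and the stray ``work in the quotient'' remark play no role in your final argument and should be deleted.
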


\begin{proof}
For even \(s\), take a zero-sum \(s\)-set \(T\) from
Lemma~\ref{lem:zero-xor-spectrum}, choose \(v\notin T\), and translate to
\(T+v\).  Translation removes zero and preserves the sum because \(s\)
is even.  For odd \(s\), the number \(q-1-s\) is even and lies between
four and \(q-4\).  Apply the even case just proved to construct a
zero-avoiding zero-sum set of that size, and take its complement inside
\(\F_q\setminus\{0\}\).
\end{proof}

\begin{proof}[Construction of the extremizers]
The endpoints are immediate.  For \(n=1\), take \(x=(1)\), giving
\(U_x=\{0,1\}\).  For \(n=2\), take \(x=(0,1)\), again giving
\(U_x=\{0,1\}\).  For \(n=3\), the condition \(q\ge B_3=3\) and the
power-of-two hypothesis give \(q\ge4\).  Using the public binary-label
convention of Section~\ref{sec:adjacent}, take the three field elements with
labels \(0,1,2\); the message formed by them has a three-element set \(U_x\).

We now treat the remaining lengths.  First suppose either \(n=2k\ge4\),
or \(n=2k+1\ge5\) with \(k\) even.  Define
\[
 (r,J)=
 \begin{cases}
 \bigl(\lfloor(k-1)/2\rfloor,\{2,4,\ldots,2r\}\bigr),
       &n=2k,\\[2mm]
 \bigl(k/2,\{1,3,\ldots,k-1\}\bigr),
       &n=2k+1,\ k\ {\rm even}.
 \end{cases}
\]
Thus \(B_n=k+r+2\).  Put \(z=q-k-1\).  Since
\(q\ge B_n\), we have \(z\ge r+1\).  Also \(z\ne q-2\), because
\(k\ge2\).  If \(z=2\), then \(q=k+3\) and \(r\le1\); the only
possibilities are \(q\in\{5,6,7\}\), none a power of two.  Hence
Lemma~\ref{lem:zero-xor-spectrum} supplies a zero-sum \(z\)-set \(Z\).
Choose \(A\notin Z\) and put
\[
 S=\F_q\setminus(Z\cup\{A\}).
\]
Then \(|S|=k\) and \(\sum_{s\in S}s=A\).  Enumerate
\(S=\{a_1,\ldots,a_k\}\).  Outside
\(\{a_1,\ldots,a_k,A\}\), choose pairwise distinct labels
\[
 B,\qquad c_j\quad(j\in J);
\]
there are enough because \(z\ge r+1\).

For even \(n\), set \(p_0=0\); for odd \(n\), \(p_0\) is one of the
states assigned below.  Prescribe
\[
 p_j=0\quad(j\in J),\qquad
 p_{j-1}=a_j+c_j\quad(j\in J),\qquad
 p_k=A+B,
\]
and give every remaining state a nonzero value.  The prescribed nonzero
states are indeed nonzero because all displayed labels are distinct, and
their indices are disjoint.  Define
\[
 b_j=a_j+p_{j-1}+p_j\quad(1\le j\le k),
\]
and, in the odd case, put \(b_0=p_0\).  Telescoping gives the prescribed
prefix states and
\[
 \sum_{j=1}^k b_j=B\quad(n=2k),\qquad
 b_0+\sum_{j=1}^k b_j=B\quad(n=2k+1).
\]
Moreover \(b_j=c_j\) for every \(j\in J\).  Equations
\eqref{eq:adjacent-even-forbidden-form} and
\eqref{eq:adjacent-odd-forbidden-form} therefore give the disjoint union
\[
 U_x=\{a_1,\ldots,a_k,A,B\}\mathbin{\dot\cup}
      \{c_j:j\in J\},
\]
so \(|U_x|=k+2+r=B_n\).

It remains to consider \(n=2k+1\ge7\) with \(k\) odd.  Put
\[
 r=(k+1)/2,\qquad J=\{0,2,4,\ldots,k-1\}.
\]
Again \(B_n=k+r+2\), and put \(z=q-k-1\).
Here
\[
 3\le z-1=q-k-2\le q-4.
\]
Indeed the lower bound follows from \(q\ge B_n\) when \(k\ge5\);
for \(k=3\), the power-of-two condition strengthens \(q\ge7\) to
\(q\ge8\).  Lemma~\ref{lem:zero-avoiding-xor} gives a zero-sum
\((z-1)\)-set \(C\) avoiding zero.  Put \(Z=\{0\}\cup C\), choose a
nonzero \(A\notin Z\), and enumerate
\[
 \F_q\setminus(Z\cup\{A\})=\{a_1,\ldots,a_k\}.
\]
Then the \(a_i\) and \(A=\sum_i a_i\) are distinct and nonzero.
Outside \(\{0,a_1,\ldots,a_k,A\}\), choose distinct
\[
 B,\qquad c_j\quad(j\in J\setminus\{0\});
\]
exactly \(r\) labels are required and \(q-k-2\ge r\) are available.

Set \(p_0=0\), and for positive \(j\in J\) prescribe
\[
 p_j=0,\qquad p_{j-1}=a_j+c_j,\qquad p_k=A+B,
\]
making all remaining states nonzero.  Define \(b_0=p_0=0\) and
\(b_j=a_j+p_{j-1}+p_j\).  The same telescoping calculation applies,
and the activated labels are \(b_0=0\) together with
\(b_j=c_j\) for positive \(j\in J\).  Hence
\[
 U_x=\{a_1,\ldots,a_k,A,B\}\mathbin{\dot\cup}\{0\}
      \mathbin{\dot\cup}\{c_j:j\in J\setminus\{0\}\},
\]
whose size is \(k+2+r=B_n\).  This completes every case.
\end{proof}

\endgroup
\section{Proof of the two-round coloring lemma}
\label{app:two-round-coloring}

\begin{proof}[Proof of Lemma~\ref{lem:two-round-local-coloring}]
If \(D=0\), the constant color proves the claim.  Hence assume \(D\ge1\).
For every auxiliary order \(2^r\) used below, take the polynomial-basis
model \(\F_2[z]/(p_r)\), where \(p_r\) is the lexicographically first monic
irreducible polynomial of degree \(r\), and order field elements by their
coefficient bit words.  Thus every occurrence of ``first'' below is public
and canonical; finding \(p_r\) by exhaustive irreducibility tests is
polynomial in the displayed field size.

Let \(q_1\) be the least power of two strictly larger than
\(D\ell_{\mathrm{name}}\).  Inject the public names into the polynomials of
degree at most \(\ell_{\mathrm{name}}\) over \(\F_{q_1}\).  This is possible
because \(q_1^{\ell_{\mathrm{name}}+1}\ge2^{\ell_{\mathrm{name}}}\), and a
canonical injection is obtained from the base-\(q_1\) coefficient vector of
the name.  At a vertex \(v\), choose the first \(t\in\F_{q_1}\) at which its
polynomial differs from every neighbor polynomial.  Each neighbor excludes
at most \(\ell_{\mathrm{name}}\) field points, whereas
\(q_1>D\ell_{\mathrm{name}}\).  A point exists, and the pair consisting of
the selected point and polynomial value is a proper first-round color in a
palette of size \(q_1^2\).

Let \(q_2\) be the least power of two at least
\[
 \max\{3D+1,\lceil\sqrt{q_1}\rceil\}.
\]
For a first-round color \((a,b)\in\F_{q_1}^2\), let
\(\nu=\operatorname{lab}(a)q_1+\operatorname{lab}(b)\), where
\(\operatorname{lab}\) is the numeric value of the public coefficient bit
word.  Expand \(\nu\) in base \(q_2\) as four digits, padding on the left
with zeros, and use those digits as the coefficients of a degree-at-most-three
polynomial over \(\F_{q_2}\).  Since \(q_2^4\ge q_1^2\), this is a
canonical injection of the first-round colors.  The same
first-separating-point rule excludes at most \(3D<q_2\) points and yields a
proper color in \(\F_{q_2}^2\).

Now \(q_1\le2D\ell_{\mathrm{name}}\), \(3D+1\le4D\), and, because \(q_1\)
is a power of two,
\(\lceil\sqrt{q_1}\rceil^2\le2q_1\).  Rounding \(q_2\) to a power of two
therefore gives
\[
 \begin{split}
 q_2^2
 &\le4\bigl((3D+1)^2+\lceil\sqrt{q_1}\rceil^2\bigr)\\
 &\le64D^2+32D\ell_{\mathrm{name}}.
 \end{split}
\]
The first-round color of \(v\) uses the names of its neighbors.  The second
round uses the first-round colors of those neighbors and therefore only
radius-two data.  Since both auxiliary fields and all tested point sets have
size polynomial in \(D\) and \(\ell_{\mathrm{name}}\), the assumed
neighborhood oracle gives an individual polynomial-time evaluator.
\end{proof}

\ifdefined\TITWRAPPER
\section*{Acknowledgment}
OpenAI ChatGPT and OpenAI Codex were used as assistive tools in exploring
proof strategies and checking algebraic identities and small finite cases in
Sections~II--IX and the appendices, in locating related literature for
Sections~I and X, in preparing the verification software described above,
and in editing English and \LaTeX{} throughout. The authors critically
evaluated all outputs, independently verified the final statements, proofs,
calculations, software results, and citations, and take full responsibility
for the manuscript.
\fi

\ifdefined\TITWRAPPER
  \expandafter 
\fi

\bibliographystyle{plain}
\bibliography{references}

\end{document}